%
%
%
%
%
%
%
\documentclass[
reprint,
 aps,
 amsmath,amssymb,
]{revtex4-1}

\usepackage{graphicx, hyperref}
\usepackage{amsmath}
\usepackage{amssymb}
\usepackage{slashed}
\usepackage{mathrsfs}
\usepackage{yfonts}
\usepackage[T1]{fontenc}
\usepackage[utf8]{inputenc}
\usepackage[english]{babel}
\usepackage{amsthm}
\usepackage{tikz-cd}
\usepackage{braket}
\usepackage{physics}

\hypersetup{
pdfstartview = {FitH},
}
\hypersetup{
	colorlinks=true,         
	linkcolor=brown,          
	citecolor=red,        
	urlcolor=blue            
}

\theoremstyle{theorem}
\newtheorem{theorem}{Theorem}[section]
\newtheorem{lemma}{Lemma}[section]

\newtheorem{proposition}{Proposition}[section]

\theoremstyle{remark}
\newtheorem{remark}{Remark}[section]

\theoremstyle{definition}
\newtheorem{definition}{Definition}[section]

\usepackage{graphicx}
\usepackage{dcolumn}
\usepackage{bm}


\numberwithin{equation}{section}

\usepackage{mathptmx}
\usepackage{etoolbox}

\makeatletter
\def\@email#1#2{%
 \endgroup
 \patchcmd{\titleblock@produce}
  {\frontmatter@RRAPformat}
  {\frontmatter@RRAPformat{\produce@RRAP{*#1\href{mailto:#2}{#2}}}\frontmatter@RRAPformat}
  {}{}
}%
\makeatother

\begin{document}


\title{Lieb-Schultz-Mattis Theorem and the Filling Constraint}
\author{H. Chen}
\altaffiliation{email: hank.chen@uwaterloo.ca, \quad \& \quad ORCID: https://orcid.org/0000-0003-0504-1592}
 \affiliation{Department of Applied Mathematics, University of Waterloo, Ontario, Canada, N2L 3G1.}

\date{\today}
 
\begin{abstract}
Following recent developments in the classification of bosonic short-range entangled (SRE) phases, we examine many-body quantum systems whose ground state fractionalization obeys the Lieb-Schultz-Mattis (LSM) theorem. We generalize the topological classification of such phases by {\it LSM anomalies} to take magnetic and non-symmorphic lattice effects into account, and provide direct computations of the LSM anomaly in specific examples. We show that the anomaly-free condition coincides with established filling constraints, and we also derived new ones on novel crystalline quantum systems.
\end{abstract}
            
\maketitle

\tableofcontents

\section{Introduction}\label{sec:nint}
It was established in the early 2000's that the topology of {\it non-interacting}/free quantum systems is protected by the band structure, for which the Altland-Zirnbauer-Kitaev classification scheme [\onlinecite{AZ}], [\onlinecite{Kit}] provided a complete classification based on Clifford modules/$K$-theory [\onlinecite{ABS}], [\onlinecite{Atiy}].

It had been successfully applied to classify integer Quantum Hall states, as well as the chiral $p$-wave superconductor in its weak-coupling limit [\onlinecite{BN}], for instance. Furthermore, the bulk-edge correspondence --- namely bulk topological invariant counting the number of chiral gapless edge states --- also finds rigorous ground in the celebrated Atiyah-Patodi-Singer index theorem. 

The above $K$-theory framework has been adapted to take the point group symmetry of the lattice into account [\onlinecite{MF}]. A generic $d$-dimensional lattice, however, has {\it space group} symmetry that need not split into discrete lattice translations and point group rotations, in which case it is called {\bf non-symmorphic}. On such a non-symmorphic glide lattice, it was show in the non-interacting context [\onlinecite{GT}], [\onlinecite{GT1}] that a non-trivial $\mathbb{Z}/2$-{\it Klein bottle phase} can be realized. Surprisingly, non-symmorphic effects also make an appearance in string theory [\onlinecite{SSG}], [\onlinecite{GT}], [\onlinecite{GT1}], [\onlinecite{AP}], [\onlinecite{DP}].

Despite the success of the $K$-theoretic approach, it requires crucially the non-interacting assumption. As such one cannot directly apply the $K$-theory classification scheme to describe, for instance, the fractional Quantum Hall effect. Some work had been done to achieve rational (orbifold) indices [\onlinecite{MM}], [\onlinecite{MT}] in the $K$-theory framework, but they nevertheless rest on fundamental assumptions about the effective behaviour of particles in the system. Thankfully, there had been alternative, inherently many-body descriptions of the integer and fractional Quantum Hall effect [\onlinecite{LRO}], [\onlinecite{BBRFr}], [\onlinecite{WPVZ}], [\onlinecite{AS}], [\onlinecite{BMNS}], [\onlinecite{SR}] based on the {\it filling index}, at zero-temperature. The counting of these filling indices play an important role in the {\bf Lieb-Schultz-Mattis (LSM) theorem}, which prohibits the existence of a gapped, symmetric non-degenerate ground state in the presence of translational symmetry.

Our goal in this paper is to study the effects of non-symmorphic crystalline symmetries and the presence of an anomalous magnetic flux on phases of gapped interacting quantum systems, by leveraging the classification of symmetry-protected bosonic invertible topological phase of matter by group cohomology classes [\onlinecite{CGW}], [\onlinecite{CGLW}], [\onlinecite{GJF}]: the so-called "in-cohomology" classification. 

The layout of the paper is the following: the LSM theorem and its relation to ground state degeneracy (GSD) will be described in Sec. \ref{sec:lsm}, then we summarize the topological theory [\onlinecite{ET}]. Next, we incorporate separately the magnetic and non-symmorphic twists in Sec. \ref{sec:lsmfill} into the topological classification. This allows us to compute crystalline and magnetic LSM anomalies, and reproduce filling constraints in Quantum Insulator (QI) and Quantum Hall (QH) systems that have been previously derived.
We then marry the two effects and lift the magnetic LSM anomaly to the non-symmorphic lattice in Sec. \ref{sec:nsymmag}, by generalizing the decorated domain wall construction [\onlinecite{RL}]. The LSM anomalies are then used to find new filling constraints that have no prior precedence. In the Appendixes, we describe the classification of 2D and 3D space groups [\onlinecite{Hill}], [\onlinecite{Dav}]  and compute their cohomology by use of the Lyndon-Serre-Hochschild spectral sequence. We consider a number of non-symmorphic cases that are of central interest to us.

\section{Lieb-Schultz-Mattis-Type Theorems}\label{sec:lsm}
It is well-known that the symmetry-protected topological (SPT) phases of a gapped quantum system on a fixed-dimensional space $X$ can be extracted from the symmetry group $G$ of its many-body ground states $\rho$ [\onlinecite{CGW}], [\onlinecite{CGLW}], [\onlinecite{KL}], [\onlinecite{GJF}], [\onlinecite{FH}]. A key result toward the inverse --- namely extrapolating the ground state given its phase --- is the {\bf Lieb-Schultz-Mattis (LSM) Theorem} [\onlinecite{LSM}]:
\begin{theorem}
\label{thm:lsm}
Given a quantum spin chain (i.e. a half-filled system of spin-$1/2$'s on a translationally-symmetric 1D lattice), the ground state is only one of the following:
\begin{enumerate}
\item ordered (spontaneously symmetry breaking), or
\item not gapped (gapless quantum spin liquid), or
\item fractionalized (degeneracy),
\end{enumerate}
in the thermodynamic limit.
\end{theorem}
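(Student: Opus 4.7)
The plan is to prove this by contradiction, using the classical twist-operator (flux-insertion) argument of Lieb, Schultz, and Mattis. Suppose toward contradiction that on an $L$-site ring with periodic boundary conditions the ground state $\ket{\psi_0}$ is simultaneously unique, gapped above by some $\Delta>0$ uniform in $L$, translation-invariant, and invariant under the on-site $U(1)$ rotation generated by $S^z_{\text{tot}}=\sum_{j=1}^L S^z_j$. The strategy is to produce an explicit low-energy excited state orthogonal to $\ket{\psi_0}$, which contradicts the spectral gap assumption.

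First, I would construct the large-gauge (twist) operator
\begin{equation*}
U \;=\; \exp\!\Big(\tfrac{2\pi i}{L}\sum_{j=1}^L j\, S^z_j\Big),
\end{equation*}
which implements a slowly varying spin rotation around the chain, and take $\ket{\psi'}=U\ket{\psi_0}$ as a variational trial state. Second, I would estimate its energy: writing $H=\sum_j h_{j,j+1}$ with bounded local terms, each commutator $[U,h_{j,j+1}]$ is of order $1/L$ because $U$ twists neighbors by a phase differing by $2\pi/L$. Expanding $U^\dagger H U - H$ to second order and summing over the $L$ bonds yields
\begin{equation*}
\bra{\psi'}H\ket{\psi'} - E_0 \;=\; \mathcal{O}(1/L),
\end{equation*}
so the trial energy falls below any fixed gap $\Delta$ once $L$ is sufficiently large.

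Third, and this is the crux, I would show $\braket{\psi_0}{\psi'}=0$ by a momentum/quantum-number mismatch. A direct computation of $T^{-1}UT$ on the ring yields, up to a boundary term,
\begin{equation*}
T^{-1}UT \;=\; U\cdot e^{-\tfrac{2\pi i}{L}S^z_{\text{tot}}}\cdot e^{\,2\pi i S^z_L}.
\end{equation*}
Acting on $\ket{\psi_0}$, which is a $U(1)$ singlet ($S^z_{\text{tot}}=0$ at half-filling), the middle factor is trivial, while the boundary factor $e^{2\pi i S^z_L}=-1$ for a spin-$1/2$ site. Consequently $\ket{\psi'}$ carries a lattice momentum shifted by $\pi$ relative to $\ket{\psi_0}$, forcing orthogonality. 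Together with the $\mathcal{O}(1/L)$ energy estimate this contradicts the uniqueness of a gapped symmetric ground state in the thermodynamic limit, so at least one of the three alternatives (symmetry breaking, gaplessness, or fractionalized degeneracy) must hold.

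The main obstacle I expect is the third step: the boundary-term bookkeeping in $T^{-1}UT$ must be done with care, because it is precisely where the half-integer spin per unit cell enters and produces the tell-tale factor of $-1$. This is the finite-volume incarnation of the mod-$2$ mixed anomaly between translations and the internal $U(1)$ (or $SO(3)$) symmetry that will later be reinterpreted cohomologically in Sec.~\ref{sec:lsmfill}; any generalization to non-symmorphic or magnetic twists (as we will need in Sec.~\ref{sec:nsymmag}) will have to re-derive this boundary-phase step with the appropriate projective cocycle replacing $e^{2\pi i S^z_L}$.
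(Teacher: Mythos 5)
Your proof is correct in substance, but be aware that the paper does not actually prove Theorem~\ref{thm:lsm}: it imports it from the original LSM reference and instead \emph{recovers} its content in Sec.~\ref{sec:lsmqshe} as the failure of the anomaly-free condition Eq.~(\ref{eq:anomfree}) for the nontrivial projective class $\omega = w_2(SO(3)) \in H^2(SO(3),U(1)) \cong \mathbb{Z}/2$ carried by a spin-$\frac{1}{2}$ per unit cell, i.e.\ the constraint $\nu_{\frac{1}{2}}\in 2\mathbb{Z}$. Your twist-operator argument is the classical, elementary route; it is constructive and finite-volume, and your closing remark correctly identifies the boundary factor $e^{2\pi i S^z_L}=-1$ as the finite-volume avatar of the class $w_2$ that the paper's lattice-homotopy and defect-localization machinery abstracts. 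What the paper's route buys instead is generality: the variational state $U\ket{\psi_0}$ has excess energy of order $L^{d-2}$ in $d$ spatial dimensions, so the flux-insertion argument is intrinsically one-dimensional (higher $d$ requires Hastings' quite different proof), whereas the cohomological formulation extends uniformly to point groups, non-symmorphic twists $\nu\in H^2(P',\mathbb{Z}^d)$, and magnetic twists $\beta\in H^2(\mathbb{Z}^d,K_f)$, which is the content of Secs.~\ref{sec:lsmfill}--\ref{sec:nsymmag}.

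One step of yours needs tightening. You assert that each $[U,h_{j,j+1}]$ is $O(1/L)$ and that summing over the $L$ bonds gives $O(1/L)$; as written this arithmetic yields only $O(1)$. The correct accounting is: since a $U(1)$-symmetric $h_{j,j+1}$ commutes with $S^z_j+S^z_{j+1}$, the first-order term in $U^\dagger H U - H$ is proportional to $\tfrac{1}{L}[A,H]$ with $A=\sum_j j S^z_j$ (up to boundary terms), and its ground-state expectation value vanishes identically because $\ket{\psi_0}$ is an eigenstate of $H$ (equivalently, the ground-state spin current vanishes). Only then does the second-order term, $O(1/L^2)$ per bond times $L$ bonds, deliver the advertised $O(1/L)$ bound. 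With that repaired, the momentum-mismatch step and the resulting trichotomy are fine.
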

\noindent The LSM theorem was further generalized to higher-dimensional spin systems by Hastings  [\onlinecite{Hastings}] and Oshikawa [\onlinecite{Oshi}], and is a very powerful tool for extracting ground state information from knowledge of just the quantum phase. In the following, we explain how this can be done purely topologically. 

\begin{remark}
The existence of the thermodynamic limit is a subtle point, however. For gapped Quantum Hall (QH) phases under mild technical conditions, it has been shown that the adiabatic curvature $\kappa \in\mathbb{Z}$ is indeed quantized [\onlinecite{AS}], and does in fact lead to the Hall conductance $\sigma_\text{Hall}$ in the thermodynamic limit [\onlinecite{BBRF}], [\onlinecite{BMNS}].
\end{remark}

\subsection{Lattice Homotopy and Defect Localization}\label{sec:lhdl}
We achieve the desired topological characterization in two steps: first, we define a notion of an equivalence of the spatial quantum data on the lattice $\Lambda$ embedded in $X$. Second, we describe how the short-range entanglement of the ground state puts constraints on how the LSM theorem can be satisfied. The first is dubbed {\it lattice homotopy} [\onlinecite{ET}], [\onlinecite{ET1}], [\onlinecite{PWJZ}], and the second is dubbed {\it defect localization/symmetry fractionalization} [\onlinecite{PWJZ}], [\onlinecite{LRO}].

\subsubsection*{Lattice homotopy}
Symmetries $G$ of a gapped Hamiltonian $H$ will in general define the Hilbert spaces $\mathcal{H}$ as an irreducible {\it projective} representation (proj-irrep) space of $G$ [\onlinecite{ET}], [\onlinecite{ET1}], [\onlinecite{PWJZ}]. Projective representations are required, as only {\it rays} of the ground states may be accessed by measurements and a $U(1)$ phase ambiguity between degenerate states manifests. Since proj-irreps are determined by the exact sequence
\begin{equation}
1\rightarrow U(1)\rightarrow G' \rightarrow G\rightarrow 1\nonumber
\end{equation}
of a central $U(1)$ extension $G'$ of $G$, they are classified by a second group cohomology class $\omega\in H^2(G,U(1))$. Here we take as an assumption that the symmetry group $G$ for the gapped Hamiltonian $H$ decomposes $G = K\times P$ into an internal compact Lie group $K$ and a spatial isometry discrete group $P$. This allows us to utilize the K{\" u}nneth formula
\begin{equation}
H^n(G,U(1)) \cong \bigoplus_{p+q=n}H^p(P,H^q(K,U(1))). \label{eq:kun}
\end{equation}
Note that on the lattice, $G$ may in general be a semidirect product $K\rtimes P$ due to {\bf spin-orbit coupling} (SOC).

Suppose we identify $K\subset G $ as the on-site internal symmetry, then each on-site Hilbert space $\mathcal{H}_x\subset \mathcal{H}$ constitutes a proj-irrep space of $K$ classified by a projective class $\omega_x \in H^2(K,U(1))$ assigned to each lattice site $x\in \Lambda$. A lattice $\Lambda$ decorated with this structure is called an {\bf anomalous texture}.
\begin{definition}
\label{def:anommove}
Let $f: \Lambda\rightarrow X$ be the lattice inclusion, and let $\omega:\Lambda\rightarrow H^2(K,U(1))$. The {\bf anomaly moves} on $\Lambda$ constitute
\begin{eqnarray}
\text{1. {\it Fusion}}: &\quad& \omega_x \leftrightarrow \prod_{y\in f^{-1}x}\omega_y,\nonumber \\
\text{2. {\it Elimination}}: &\quad& \omega_x = 1 \implies \Lambda \leftrightarrow \Lambda\setminus \{x\}.\nonumber
\end{eqnarray}
Two anomalous textures are {\bf lattice homotopically equivalent} if they are related by a series of anomaly moves and homotopies of $f$ [\onlinecite{ET}], [\onlinecite{ET1}]; see Fig. \ref{fig:anom}.
\end{definition}

\begin{figure}[h]
\centering
\includegraphics[width=1\columnwidth]{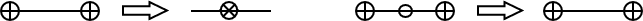}
\caption{The anomaly moves on a translationally-symmetric 1D anomalous texture with labels $\omega =\circ, \oplus,\otimes$ satisfying the bosonic fusion rule "$\oplus + \oplus = \otimes$", where $\circ$ is the unit/vacuum.}
\label{fig:anom}
\end{figure}

By the LSM {\bf Theorem \ref{thm:lsm}}, gapped $K$-symmetric ground states with non-integral filling exhibit emergent fractional excitations, and hence must be degenerate [\onlinecite{PWJZ}], [\onlinecite{WPVZ}]. In the thermodynamic limit, we wish to label classes of such symmetry fractionalizations with lattice homotopy classes $[\Lambda]$.
\begin{definition}
A trivial lattice homotopy class $[\Lambda]=0$ defines on $X\times\mathbb{R}$ an invertible $(d+1$)D {\bf anomaly-free} $K$-SPT phase based on the representation category $\operatorname{Rep}_\mathbb{C}(K)$ of $K$; such phases violate the LSM theorem.
\end{definition}
\noindent Together with the anomaly moves {\bf Definition \ref{def:anommove}}, $[\Lambda]$ encodes the excitations of a possibly anomalous invertible $P$-symmetric $(d+1)$D topological order $\mathcal{C}^{d+1}_P$ [\onlinecite{CGLW}], [\onlinecite{KL}], [\onlinecite{GJF}].

\begin{remark}\label{rmk:gspt}
The projective classes $\omega\in H^2(K,U(1))$ manifest as boundary anomalies of $(0+1)$D wires of (invertible) SPT phases [\onlinecite{Kit}], which are then stacked via $P$ to yield the order $\mathcal{C}^{d+1}_P$ [\onlinecite{ET}], [\onlinecite{RL}], [\onlinecite{XA}]. By holographic bulk-boundary correspondence [\onlinecite{LiWe}], a lattice homotopy class $[\Lambda]\simeq\Omega\mathcal{C}^{d+1}_P$ uniquely determines an anomaly-free SPT phase $\mathcal{D}^{d+2}_P\cong \operatorname{Bulk}\mathcal{C}_P^{d+1}$; in particular, $[\Lambda]=0$ means $\mathcal{C}^{d+1}_P$ is anomaly-free, and hence $\mathcal{D}_P^{d+2}=1^{d+2}$ is trivial.
\end{remark}

\subsubsection*{Defect localization}
Recall that the LSM theorem applies to microscopic theories on $\Lambda_0$, while the 't Hooft anomalies are data on the mesoscopic lattice $\Lambda$. In order to relate them and characterize the ground state degeneracy (GSD) topologically, we leverage the short-ranged nature of the ground state entanglement [\onlinecite{PWJZ}]:
\begin{definition}
Suppose the microscopic Hamiltonian $H = \sum\limits_{S\subset\Lambda_0}h(S)$ is a sum of local interactions where $S$ has finite (graph) width. Let $K\subset G$ be an on-site symmetry of $H$. A {\bf defect region} is a (connected) subset $Y\subset X_0$ for which $|Y|\leq \xi_0$ and 
\begin{equation}
H|_{Y} = \sum_{S\subset f^{-1}Y}h(S)\nonumber
\end{equation}
is not unitarily equivalent to $H$. The system exhibits {\bf defect/degeneracy localization} if $H$ hosts finitely many defect regions $Y$ for which the ground state space of $H|_Y$ is finite-dimensional.
\end{definition}
\noindent Once defect localization occurs, we may then assign projective phases $\omega_Y\in Z^2(K,U(1))$ determined by $K$ to label the degeneracies in each defect region $Y$. The total projective phase of the ground state decomposes
\begin{equation}
\omega_0 = \bigotimes_{\text{defect regions }Y}\omega_Y + O(e^{-r/\xi_0}) \nonumber
\end{equation}
up to an exponential suppression far past the correlation length $\xi_0>0$. By taking the thermodynamic limit $\Lambda_0\rightarrow\Lambda$ and coarse-graining $Y\rightarrow x$ to a single lattice site $x\in\Lambda$ (without breaking crystal lattice symmetry $P$), the cocycle $\omega_0$ then becomes a representative of the anomalous texture $\omega$ defining the lattice homotopy class $[\Lambda]$.

Based on this fact, the anomaly moves in {\bf Definition \ref{def:anommove}} can then be implemented via local unitaries [\onlinecite{CGLW}]. The algebra of such unitaries are referred to as the "fractionalization algebra" [\onlinecite{LRO}].
\begin{theorem}
\label{thm:kspt}
Suppose $P_0\subset P$ is a maximal Abelian subgroup with free action on $X$, and $P'=P/P_0$. Let $\mathcal{C}_P^{d+1}$ denote the first order lattice $K$-SPT phase determined by a lattice homotopy class $[\Lambda]\simeq \mathcal{C}^{d+1}_P$.
\begin{enumerate}
\item It is classified by the twisted orbifold cohomology $H^{d+2}_\text{orb}(BG,U(1)) = H^{d+2;\nu}_{P'}(P_0\times K,U(1))$ [\onlinecite{GT}], [\onlinecite{GT1}], [\onlinecite{MT}].
\item The LSM anomalies live in the contribution $H^{d}_\text{orb}(BP,H^2(K,U(1)))=H^{d;\nu}_{P'}(BP_0,H^2(K,U(1)))$ [\onlinecite{ET}].
\end{enumerate}
\end{theorem}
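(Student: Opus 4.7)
The plan is to deduce both statements from the holographic bulk--boundary correspondence of Remark \ref{rmk:gspt} combined with the "in-cohomology" classification of bosonic invertible $(d+2)$D SPT phases, then split up that classification via a Künneth/spectral-sequence argument to isolate the LSM piece.

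For Part 1, I would begin by recalling that, by the bulk-boundary dictionary, the topological order $\mathcal{C}^{d+1}_P$ extracted from $[\Lambda]$ is the boundary of a unique anomaly-free $(d+2)$D invertible $G$-SPT phase $\mathcal{D}^{d+2}_P \cong \operatorname{Bulk}\mathcal{C}^{d+1}_P$. Such phases for $G = K\times P$ (or, with SOC, $K\rtimes P$) acting on $X\times\mathbb{R}$ are classified by $H^{d+2}(BG,U(1))$. However, when the spatial $P$-action on $X$ has fixed loci --- as happens for point-group rotations or reflections in a non-symmorphic space group --- ordinary group cohomology overcounts: one must work with the \emph{orbifold} classifying space $X/\!/P$ and its $U(1)$-valued cochains, yielding $H^{d+2}_\text{orb}(BG,U(1))$. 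This is the natural home of the CZX-type construction in which $(0+1)$D wires labeled by $H^2(K,U(1))$ are stacked along $X$ via $P$.

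The equality $H^{d+2}_\text{orb}(BG,U(1)) = H^{d+2;\nu}_{P'}(P_0\times K,U(1))$ then follows by factoring $P$ through the short exact sequence $1\to P_0\to P \to P'\to 1$: since $P_0$ acts freely on $X$, its contribution is unobstructed and enters multiplicatively, while the non-free quotient $P' = P/P_0$ contributes a twisted $P'$-equivariant structure with twist $\nu\in H^*(P',H^*(P_0,U(1)))$ determined by the extension class of $P$. This is precisely the Lyndon--Hochschild--Serre identification used in [\onlinecite{GT},\onlinecite{GT1},\onlinecite{MT}], and it is the natural cohomological incarnation of the orbifold quotient $X/\!/P = (X/P_0)/\!/P'$.

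For Part 2, I would apply the Künneth formula (\ref{eq:kun}) to the $P_0\times K$ factor inside the twisted equivariant cohomology, giving a spectral sequence with $E_2^{p,q} = H^{p;\nu}_{P'}(BP_0, H^q(K,U(1)))$. The LSM anomaly is by construction the obstruction that comes from assigning a nontrivial on-site projective class $\omega\in H^2(K,U(1))$ at each site and then gluing these across $\Lambda$ by the lattice translations in $P_0$; hence it occupies exactly the bidegree $(p,q)=(d,2)$ in which $P_0$ contributes the full spatial dimension and $K$ contributes the projective on-site factor. Setting $p+q=d+2$ and reading off this Künneth summand gives the LSM locus $H^{d;\nu}_{P'}(BP_0, H^2(K,U(1))) = H^{d}_\text{orb}(BP, H^2(K,U(1)))$, as claimed.

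The main obstacle is twofold. First, one must verify that the defect-localization data actually lands in the prescribed bidegree of the Künneth/LHS spectral sequence --- i.e., that the differentials $d_r$ do not further quotient or lift the $E_2$-class coming from $\omega\in H^2(K,U(1))$; this requires checking that the coarse-graining $Y\to x$ is compatible with the $P_0$-equivariant structure, so that the projective cocycle survives to $E_\infty$. Second, the twist $\nu$ must be identified concretely, because in the non-symmorphic setting the action of $P'$ on $H^*(P_0,U(1))$ can be nontrivial and carries the glide/screw data; this is where the explicit LHS spectral sequence computations of the Appendixes must be invoked to close the argument for each crystallographic case of interest.
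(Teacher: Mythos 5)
Your Part 1 follows essentially the paper's route: both arguments rest on the extension $1\to P_0\to P\to P'\to 1$ with extension class $\nu\in H^2(P',P_0)$ and the identification of $BP_0$ as the fundamental domain over the unit cell $BP$. One caution: the paper does not claim that ordinary group cohomology ``overcounts'' in the presence of fixed loci and must be replaced by an orbifold theory; rather, because $X\cong EP$ is contractible, it asserts the \emph{isomorphism} $H^{d+2}(G,U(1))\cong H^{d+2}_{\text{orb}}(BG,U(1))$, with the twisted $P'$-equivariant presentation $H^{d+2;\nu}_{P'}(BP_0\times K,U(1))$ being a computationally convenient repackaging, not a correction. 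Relatedly, the twist is the extension class $\nu\in H^2(P',P_0)$ itself, not a class in $H^\ast(P',H^\ast(P_0,U(1)))$ as you write.

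The genuine gap is in Part 2. You correctly locate the LSM anomaly at Künneth bidegree $(d,2)$, but you do not supply the mechanism that actually produces a class in $H^{d}(P,H^2(K,U(1)))$ from the on-site data, and you explicitly defer the crucial verification (survival past the differentials, compatibility of coarse-graining with equivariance) as an ``obstacle.'' The paper's proof resolves exactly this point constructively, and not by chasing a spectral sequence: the defect cocycles $\omega_Y\in Z^2(K,U(1))$ are coarse-grained to a decorated $0$-cycle $\omega_\ast c\in Z_0(P,\langle\omega\rangle_o)$ (note the orientation-graded module $\mathbb{Z}_o$, which you omit and which is essential for reflections and glides), then lifted by $d$ steps of homotopy descent on the contractible $P$-space $X\cong EP$ to a $d$-cycle $\omega'=\omega_\ast[\Lambda]$ in the image of the fundamental class $[\Lambda]\in H_d(P,\mathbb{Z}_o)$ under the coefficient map, and finally converted to a cohomology class $\tilde\omega\in H^d(P,H^2(K,U(1)))$ via the universal coefficient theorem, using freeness of the top-degree homology $H_d(X,\mathbb{Z}_o)$. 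Because $\tilde\omega$ is defined directly as a pushforward of the fundamental class rather than as a putative survivor on an $E_2$-page, there is no differential to check; the Künneth formula is invoked only at the very end, to embed $H^d(P,H^2(K,U(1)))\cong H^{d;\nu}_{P'}(BP_0,H^2(K,U(1)))$ as a summand of the total group, using triviality of the $P$-action on $H^\ast(K,U(1))$ in the absence of spin-orbit coupling. Your second ``obstacle'' (identifying $\nu$ via the Appendix computations) is not needed for the general statement; those computations enter only in the specific crystallographic examples.
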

\begin{proof}
It is well-known [\onlinecite{CGLW}], [\onlinecite{GJF}], [\onlinecite{LiWe}], [\onlinecite{ET}], [\onlinecite{ET1}], [\onlinecite{RL}] that invertible bosonic $G$-protected topological phases in $(d+1)$D are characterized by $H^{d+2}(G,U(1))$. Given the decomposition $G=P\times K$ in the absence of spin-orbit coupling (SOC), we first demonstrate that this group becomes the twisted orbifold cohomology group.

By the classification of crystallographic groups [\onlinecite{Hill}], [\onlinecite{Dav}], there exists a maximal Abelian subgroup $P_0\subset P$ acting freely on $X$ such that $P$ is characterized by the central exact sequence
\begin{equation}
1 \rightarrow P_0\rightarrow P\rightarrow P'\rightarrow 1\nonumber
\end{equation} 
classified by a class $\nu\in H^2(P',P_0)$ [\onlinecite{SSG}], [\onlinecite{GT}], [\onlinecite{GT1}]. Moreover, the {\it point group} $P'=P/P_0$ is a finite group acting as local isotropies on $X$. Fix the space $X$ as a contractible universal covering $EP$, this leads to a tower of fibrations
\begin{equation}
\begin{tikzcd}
               &               &                                 & EP\cong X \arrow[d] \\
\ast \arrow[r] & BP' \arrow[r] & BP_0 \arrow[r] \arrow[ru, hook] & BP                 
\end{tikzcd}.\nonumber
\end{equation}
As $P'$ is finite, the space $BP_0$ is called the free universal Abelian cover [\onlinecite{MT}] --- or the {\bf fundamental domain} $\Gamma$ --- over the twisted toroidal nilorbifold $BP$ [\onlinecite{GT}], [\onlinecite{GT1}] --- or the {\bf unit cell} [\onlinecite{WPVZ}]. This defines the twisted orbifold cohomology $H^{d+2}(G,U(1)) \cong H^{d+2}_\text{orb}(BG,U(1)) = H^{d+2;\nu}_{P'}(BP_0\times K,U(1))$ [\onlinecite{GT}], [\onlinecite{GT1}], [\onlinecite{SSG}], [\onlinecite{MT}], which classifies the crystalline SPT phase $\mathcal{C}^{d+1}_P$ determined by a lattice homotopy class $[\Lambda]$ via the (bosonic) {\bf crystalline equivalance principle}. The mirror/rotation and Bieberbach results of lattice homotopy [\onlinecite{PWJZ}] are built into the descent $X\cong EP\rightarrow BP$.

Next we examine how to construct the LSM anomaly. Given a principal $K$-bundle $V$ on the spacetime cylinder $Y\times [0,1]$ of a defect region $Y\subset X_0$, its boundaries $\partial_\pm Y=Y\times\{0,1\}$ manifest the 2-cocycle $\omega_Y \in Z^2(K,U(1))$ through bundle maps $\chi_g: V|_{\partial_+Y}=V\rightarrow V=V|_{\partial_-Y}$ [\onlinecite{FH}], [\onlinecite{MuS}], where 
\begin{equation}
\chi_g\chi_{g'} = \omega_Y(g,g')\chi_{gg'}, \qquad g,g'\in K.\nonumber
\end{equation}
Take $\partial_- Y\cong Y\hookrightarrow X_0$ and send $\partial_+Y$ to temporal infinity, its class $\omega_Y$ then defines, in the thermodynamic limit $X_0\rightarrow X$, the requisite $(0+1)$D SPT wires characterized by $\omega\in H^2(K,U(1))$ on $X$. To stack these phases on $\Lambda$, let $\mathbb{Z}_o$ denote the orientation-graded $P$-module for which $g\cdot m = (\operatorname{det}g)m$ for all $m\in\mathbb{Z},g\in P$. Lattice sites, or the 0-cycle $c\in Z_0(P,\mathbb{Z}_o)$, are then sent to the class $\omega_*c\in Z_0(\Lambda,H^2(K,U(1)) \cong Z_0(P,H^2(K,U(1))$ via the coefficient map $\mathbb{Z}_o\mapsto \mathbb{Z}_o\cdot \omega$. We write its image as $Z_0(P,\langle\omega\rangle_o)$, where $\langle\omega\rangle\subset H^2(K,U(1))$ is the subgroup generated by a chosen projective class $\omega$.

In order to determine the lattice homotopy class $[\Lambda]$, however, we must build $P$-equivariance into $\omega_*c$. As a first order phase on the contractible $P$-symmetric space $X\cong EP$, we perform $d$-steps of {\bf homotopy descent} on the $0$-cycle ${\omega}_*c$ [\onlinecite{ET}], [\onlinecite{ET1}] to yield a $d$-cycle $\omega' \in Z_d(P,Z^2(K,U(1)))$. Its class $\omega'=\omega_*[\Lambda]$ then lies in the image of the fundamental class $[\Lambda]\in H_d(P,\mathbb{Z}_o)$ under the coefficient map. Now since the top degree group $H_d(X,\mathbb{Z}_o)$ is free, we have $(H_d)_P(X,\mathbb{Z}_o)\cong H^d_P(X,\mathbb{Z}_o)$ by the universal coefficient theorem. As such the class $\omega'$ uniquely determines a class $\tilde\omega\in H^d(P,H^2(K,U(1)))$.

This class $\tilde\omega$ is the sought-after LSM anomaly that classifies the $P$-symmetric topological order $\mathcal{C}^{d+1}_P$. Finally, as $H^\ast(K,U(1))$ is a trivial $P$-module in the absence of SOC, the $P'$-equivariant K{\" u}nneth formula states that $H^d(P,H^2(K,U(1)))\cong H^{d;\nu}_{P'}(BP_0,H^2(K,U(1)))$ is a subgroup of $H^{d+2;\nu}_{P'}(BP_0\times K,U(1))$.
\end{proof}
We stress that defect localization is the key property that allows a topological description of the LSM theorem [\onlinecite{CHR}], [\onlinecite{ET}]. Certain fermionic phases can violate defect localization [\onlinecite{PWJZ}], hence fermionic 't Hooft anomalies $\widehat{\Omega}^{d+1}_{\text{Spin}_c}(B(G\times\mathbb{Z}/2^F))$ [\onlinecite{GOPWW}] do not necessarily classify symmetry fractionalization classes [\onlinecite{CHR}], [\onlinecite{BBCW}]. Furthermore, it was shown [\onlinecite{Deb}], [\onlinecite{ZYGQ}] that fermionic crystalline equivalence must necessarily encode SOC $P\ltimes K$.

\subsection{Topological Theory of LSM Theorems}\label{sec:toplsm}
The main topological theory [\onlinecite{ET}] relies on the hypothesis that bosonic invertible topological phases of matter forms a spectrum $E=E_{\leq \ast}$ (up to a certain dimension) [\onlinecite{FH}], [\onlinecite{SP}], and the $G$-protected phases are classified by the generalized cohomology theory $\mathcal{E}_\ast=H^\ast(BG,E)$ [\onlinecite{GJF}], [\onlinecite{BC}]. The key ingredients useful for our applications are:
\begin{enumerate}
\item At dimension grading $l=1$, there exists a homomorphism
\begin{equation}
\pi_0 E_1 \rightarrow H^2(K,U(1))\nonumber
\end{equation}
associating a gapped (0+1)D $K$-SPT wire to the projective class on the lattice $\Lambda$.
\item The (dual) spectrum map $\Omega:{E}_{l+1} \rightarrow{E}_l$ induces a bulk-boundary map 
\begin{equation}
\pi_1 E_2 \rightarrow \pi_0 E_1\nonumber
\end{equation}
in the spectrum $E$ at degree $l =2$.
\end{enumerate}

\begin{remark}\label{rmk:fmt}
One proposal for $\mathcal{E}_\ast$ is the Borel-Moore cohomology $[MTG,\Sigma^{\ast+1}I\mathbb{Z}]$ based on the stable Madsen-Tillmann spectrum $MTG$ [\onlinecite{FH1}]. In free theories, $\mathcal{E}_\ast(X) \simeq K^{-\ast}(X) \cong [\Sigma^{-\ast}X,U]$ is the $K$-theory spectrum  [\onlinecite{KL}], [\onlinecite{SSG}], [\onlinecite{GT}], [\onlinecite{GT1}], [\onlinecite{BN}], [\onlinecite{Kit1}] equipped with the Chern character $\operatorname{ch}:K^\ast(X) \rightarrow H^\ast(X,\mathbb{Q})$ [\onlinecite{Atiy}]. Furthermore, if $\widehat{\Lambda}$ denotes the Pontjagyn dual of $\Lambda$, the {\bf $T$-duality} map [\onlinecite{MT}], [\onlinecite{Sch}]
\begin{equation}
\mathscr{T}:K^{\ast}(B\Lambda) \xrightarrow{\sim} K^{\ast}(\widehat{\Lambda}), \label{eq:tbc}
\end{equation}
as a composition of Poincar{\' e} duality with Baum-Connes assembly, preserves the topological order under a Fourier-Mukai transform [\onlinecite{SSG}], [\onlinecite{GT}], [\onlinecite{GT1}], [\onlinecite{BK}].
\end{remark}

\begin{theorem} \label{thm:lsmanom}{\bf (Els{\' e}-Thorngren, 2019).}
If the projective class $\omega\in H^2(K,U(1))$ embeds non-trivially into a {\bf LSM anomaly} $\tilde\omega\in H^d_{K}(X,\mathbb{Z})$, then $\tilde\omega$ defines a non-trivial lattice homotopy class $[\Lambda]\neq 0$ on $\Lambda\subset X$. On the other hand, if $\omega = d\tau$ is trivial, then $\tilde\omega = d\tilde\tau$ defines the defect network $\tilde\tau$ that trivializes the boundary phase on $\Lambda$, up to lattice homotopy equivalence.
\end{theorem}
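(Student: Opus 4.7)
The plan is to leverage the spectrum-theoretic framework of {\bf Remark \ref{rmk:fmt}} together with the descent construction carried out in the proof of {\bf Theorem \ref{thm:kspt}}. The key observation is that the LSM anomaly $\tilde\omega$ arises from the projective class $\omega\in H^2(K,U(1))$ by a chain of natural maps: assigning $\omega_x$ to every lattice site yields a $0$-cycle in $Z_0(P,H^2(K,U(1)))$ via the coefficient map $\mathbb{Z}_o\mapsto \mathbb{Z}_o\cdot\omega$, and $d$ successive applications of the homotopy-descent differential produce the $d$-cocycle whose class is $\tilde\omega$. This construction is natural in $\omega$ at the chain level, so coboundary data for $\omega$ automatically produces coboundary data for $\tilde\omega$, which is the structural fact I will exploit in both directions.

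For the forward direction, I first unwind what it means for $\omega$ to embed non-trivially into $\tilde\omega$: the bulk-boundary spectrum map $\pi_1 E_2\to \pi_0 E_1$ attaches a non-zero $(0+1)$D SPT wire class in $H^2(K,U(1))$ to each site of $\Lambda$, and these data assemble through descent into the non-zero class $\tilde\omega$. I then verify that each anomaly move of {\bf Definition \ref{def:anommove}} preserves the induced equivariant class: fusion corresponds to additivity of the coefficient map $\mathbb{Z}_o\cdot\omega$, and elimination at a site with $\omega_x=1$ removes a chain contribution with trivial cocycle, which does not affect the class in $H^d_K(X,\mathbb{Z})$. Hence if $\tilde\omega\neq 0$, no finite sequence of anomaly moves together with homotopies of $f$ can trivialize $\omega_\ast[\Lambda]$, so $[\Lambda]\neq 0$ in accordance with {\bf Remark \ref{rmk:gspt}}.

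For the converse, suppose $\omega=d\tau$ in $C^1(K,U(1))$. By chain-naturality of the homotopy-descent operator, pushing $\tau$ through the same $d$-step descent yields a $(d-1)$-cochain $\tilde\tau\in C^{d-1}_K(X,\mathbb{Z})$ satisfying $d\tilde\tau=\tilde\omega$. Geometrically, $\tilde\tau$ is a network of SPT wires stacked $P$-equivariantly across the fundamental domain $\Gamma$, whose boundary cancels the projective phase of the anomalous texture. By the holographic identification $[\Lambda]\simeq \Omega\mathcal{C}_P^{d+1}$ of {\bf Remark \ref{rmk:gspt}}, this defect network is precisely a sequence of fusion and elimination moves on $\Lambda$ that realizes a lattice homotopy from $\omega_\ast[\Lambda]$ to the vacuum.

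The main obstacle will be verifying that the descent operation is compatible with the Kitaev-style stacking of SPT wires and with chain-level coboundaries in the $P$-equivariant setting. Concretely, one must check that the chain homotopy witnessing $\omega=d\tau$ lifts at each stage of the descent tower $X\to BP_0\to BP$ to a chain homotopy between $\tilde\omega$ and $d\tilde\tau$, and that this lift is representable by a concrete sequence of anomaly moves on $\Lambda$ rather than only by an abstract cohomological vanishing. This reduces to a diagram chase in the Lyndon-Hochschild-Serre spectral sequence associated to $BP_0\to BP\to BP'$ with coefficients in $H^\ast(K,U(1))$, using the finiteness of $P'$ and the identification $(H_d)_P(X,\mathbb{Z}_o)\cong H^d_P(X,\mathbb{Z}_o)$ obtained via the universal coefficient theorem in the proof of {\bf Theorem \ref{thm:kspt}}.
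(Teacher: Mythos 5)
The paper does not actually prove {\bf Theorem \ref{thm:lsmanom}}: it is imported verbatim from Els\'e--Thorngren [\onlinecite{ET}] and used as a black box, so there is no in-paper proof to compare your attempt against. The closest internal material is the proof of {\bf Theorem \ref{thm:kspt}}, which constructs the LSM anomaly $\tilde\omega$ by the coefficient map $\omega_\ast$ followed by $d$ steps of homotopy descent and the universal coefficient identification $(H_d)_P(X,\mathbb{Z}_o)\cong H^d_P(X,\mathbb{Z}_o)$, and your proposal is a faithful and sensible extension of exactly that machinery: you use the same descent construction, and the two things you add --- that fusion/elimination/homotopy moves preserve the descended class (giving $\tilde\omega\neq 0\Rightarrow[\Lambda]\neq 0$), and that chain-level naturality of the descent pushes a trivialization $\omega=d\tau$ to a trivialization $\tilde\omega=d\tilde\tau$ --- are the correct two halves of the theorem.

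Two caveats you should be aware of. First, the step you yourself flag as the "main obstacle" is genuinely the entire content of the converse direction: showing that the abstract cochain $\tilde\tau$ produced by chain-naturality is \emph{representable by a concrete sequence of anomaly moves} (i.e., is a physical defect network implementable by local unitaries) is not a routine diagram chase; it requires the defect-localization hypothesis of Sec.~\ref{sec:lhdl}, without which the cohomological vanishing need not correspond to any lattice homotopy. Your sketch names the gap but does not close it, and the Lyndon--Hochschild--Serre bookkeeping you propose does not by itself supply the physical realizability. Second, the theorem as stated places $\tilde\omega$ in $H^d_K(X,\mathbb{Z})$, whereas your argument (and the rest of the paper) works in $H^d(P,H^2(K,U(1)))$; you should either reconcile these two targets explicitly or note that the former is shorthand for the latter, since otherwise the "embeds non-trivially" hypothesis has no precise meaning in your setup.
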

\noindent In other words, $H^2(K,U(1))$ is an "adequate label" [\onlinecite{CGLW}] for the $(0+1)$D invertible $K$-SPT wires relevant for the LSM theorem. 

As such, we focus our attention only on the {\bf in-cohomology}  part [\onlinecite{RL}]
\begin{equation}
H^{d-l+2}(K,U(1)) \subset \lim_k \pi_{d-l+k}\Sigma^k E \nonumber
\end{equation}
of the stable homotopy spectrum of $E$, where $\Sigma:E_\ast\rightarrow E_{\ast+1}$ is the suspension map [\onlinecite{BC}]. By {\bf Theorem \ref{thm:kspt}} and the crystalline equivalence principle, first order $l=d$ crystalline invertible $K$-SPT phases in dimension $d$, denoted $\pi_0\mathcal{E}^\text{in}_0$, are then classified by elements in $H^d(P,H^2(K,U(1))) = H^{d;\nu}_{P'}(\Gamma,H^2(K,U(1)))$.

\subsubsection{Characteristic classes of the non-symmorphic lattice}\label{sec:eul}
The computation of classes in the symmorphic case $P\cong \mathbb{Z}^d\rtimes P'$ had been partially carried out [\onlinecite{ET}]. Given the embedding $\rho:P'\hookrightarrow O(d)$, the linear action of $P'$ is characterized by the pullback 
\begin{equation}
e(P') =\rho^*e_d \in H^d(P',\mathbb{Z}_o)\cong H^d_{P'}(X,\mathbb{Z}_o) \nonumber
\end{equation}
of the $O(d)$-{\bf Euler class} $e_d \in H^d(X,\mathbb{Z}_o)$ on $X$, with graded coefficients $\mathbb{Z}_o$. 

Symmorphicity of the lattice allows us to push this class $e(P')$ forward through the descent $X\rightarrow B\mathbb{Z}^d = \Gamma$ onto the fundamental domain. Denoting this $P'$-equivariant class again by $e(P')\in H^d_{P'}(\Gamma,\mathbb{Z}_o)$, its order $|e(P')|$ counts the number of {\bf Wyckoff positions} in the fundamental domain $\Gamma$. Examples of $e(P')$ are computed in Appendix \ref{sec:lhs}.

We do not have the luxury of the Euler class, however, in the non-symmorphic case. Nevertheless, given a space group $P$ characterized by an extension class $\nu\in H^2(P',\mathbb{Z}^d)$, we may combine a characteristic class $e_\nu \in H^d(P,\mathbb{Z}_o)$ and the anomalous texture $\omega\in H^2(K,U(1))$ via the coefficient map
\begin{equation}
\omega_*:\mathbb{Z}_o \rightarrow H^2(K,U(1)_o), \qquad m\mapsto m\cdot \omega \label{eq:coeff}
\end{equation}
to give the LSM anomaly 
\begin{eqnarray}
\omega_*(e_\nu) &\in& H^{d}_\text{orb}(BP,H^2(K,U(1))) \nonumber \\
&\quad& \cong H^{d;\nu}_{P'}(\Gamma,H^2(K,U(1))) \nonumber
\end{eqnarray}
that classifies the first order crystalline $K$-SPT phase $\mathcal{C}^{d+1}_P$ determined by a lattice homotopy class $[\Lambda]$ [\onlinecite{ET}], [\onlinecite{ET1}], [\onlinecite{RL}]. 

By leveraging this idea, we shall assemble crystalline LSM anomalies and produce versions of the LSM theorem that take into account non-symmorphic and magnetic effects. We will also reproduce the filling constraints derived previously [\onlinecite{LRO}], [\onlinecite{BBRF}], [\onlinecite{WPVZ}].

\subsubsection{Magnetic and non-symmorphic extensions}
Consider the case where $K_f\subset K$ is a central subgroup extending $K_p$ to $K$, and the projective class $\omega\in H^2(K_p,U(1))\subset H^2(K,U(1))$ is determined purely by $K_p$. Non-symmorphic and magnetic effects are characterized by {\it twists} $\nu \in H^2(P',\mathbb{Z}^d), \beta\in H^2(\mathbb{Z}^d,K_f)$ [\onlinecite{SSG}], [\onlinecite{GT}], [\onlinecite{GT1}], [\onlinecite{MT}] which classify exact sequences
\begin{eqnarray}
\begin{tikzcd}
1 \arrow[r] & \mathbb{Z}^d \arrow[r] & P \arrow[r] & P' \arrow[r] \arrow[l, "\widehat{\cdot}"', bend right,dashed] & 1
\end{tikzcd},\nonumber \\ 
\begin{tikzcd}
1 \arrow[r] & K_f \arrow[r] & \mathbb{Z}^d_\text{flux} \arrow[r] & \mathbb{Z}^d \arrow[r] \arrow[l, "\tilde{\cdot}"', bend right,dashed] & 1
\end{tikzcd}\nonumber
\end{eqnarray}
whose sections $\hat{\cdot}:P' \rightarrow P,\tilde{\cdot}:\mathbb{Z}^d\rightarrow \mathbb{Z}^d_\text{flux}$ see group-theoretic obstructions
\begin{eqnarray}
{[\widehat{a},\widehat{b}]}= \widehat{ab}\hat{b}^{-1}\hat{a}^{-1} = \nu(a,b),&\quad& \forall a,b\in P', \nonumber \\
{[\tilde{t},\tilde{s}]} =\widetilde{t+s}-\tilde{s}-\tilde{t}= \beta(t,s), &\quad& \forall t,s\in\mathbb{Z}^d \label{eq:obst}
\end{eqnarray}
by cocycle representatives of $\nu,\beta$. The extension $P$ defines the {\bf non-symmorphic space group} [\onlinecite{SSG}], while $\mathbb{Z}^d_\text{flux}$ defines the {\bf magnetic translation algebra} [\onlinecite{LRO}]. These cocycles $\nu,\beta$ can be described explicitly as follows:
\begin{itemize}
\item By virtue of $K$ acting on-site, the magnetic translation algebra must be $K$-invariant; this implies that $\beta(t,s)$ commute with all of $K$, and hence lie in a central subgroup $K_f$ for all $t,s\in\mathbb{Z}^d$. By identifying generators $\{t_\alpha\}_{\alpha\leq d} \in H^1(\mathbb{Z}^d,\mathbb{Z})$ with the primitive lattice vectors for $1\leq \alpha\leq d$, we may express
\begin{equation}
\beta = \sum_{\alpha<\beta }\varphi_{\alpha\beta}t_\alpha \cup t_\beta \nonumber
\end{equation}
in terms of a {\bf flux} $\varphi_{\alpha\beta}\in K_f$ through the plaquette spanned by translations $t_\alpha$ and $t_\beta$. 

\item By performing a Fourier transform $H^2(P',\mathbb{Z}^d) \rightarrow H^2(P',\mathbb{T}^d)$ [\onlinecite{GT}], [\onlinecite{GT1}], the image of $\nu$, also denoted by $\nu$, satisfies
\begin{equation}
\nu(a,b)\chi = ab\cdot \chi(\nu(a,b)) \in U(1), \nonumber
\end{equation}
where $a,b\in P'$ and $ \chi \in \widehat{\mathbb{Z}^d} \cong \mathbb{T}^d$. By Bieberbach's theorem, all extensions $P$ can be characterized by an action of $P'$ on $\mathbb{Z}^d$ twisted by $\nu$ [\onlinecite{GT}], [\onlinecite{GT1}]. Non-symmorphic elements of finite order in $P$ must then be either one of the two types: 
\begin{enumerate}
\item {\bf $p_q$-Screw/rototranslation}: If $a\in P'$ is of order $p> 1$ such that $\hat{a}^p = q \in \mathbb{Z}^d$, then 
\begin{equation}
\nu(a,a)(k) = e^{i \frac{q\cdot k}{p}}, \qquad k \in \mathbb{T}^d. \nonumber
\end{equation}
\item {\bf Glide/reflectotranslation}: If $a\in P'$ is a reflection such that $\hat{a}^2 = 1 \in \mathbb{Z}^d$, then
\begin{equation}
\nu(a,a)(k) = e^{-i\frac{1}{2}p_a(k)}, \qquad k \in \mathbb{T}^d \nonumber
\end{equation}
where $p_a: \mathbb{T}^d \rightarrow \mathbb{T}^1$ projects onto the reflection axis of $a$.
\end{enumerate}
Notice that $\nu$ is a translation in real space but only a phase in momentum space. The screws and glides are distinguished by their {\it grading} $c: P'\rightarrow\mathbb{Z}/2$ [\onlinecite{SSG}].
\end{itemize}
We shall mainly consider the split extension $K \cong K_f\rtimes K_p$, in which there is a clear distinction between the flux $\beta\in H^2(\mathbb{Z}^d,K_f)$ and the projective class $\omega\in H^2(K_p,U(1))$.

%

\section{LSM Anomaly and the Filling Constraint}\label{sec:lsmfill}
Suppose the lattice $\Lambda$ hosts $N$-flavours of particles, each equipped with a number operator $n_a$, whose integrality $\operatorname{Spec}n_a\subset\mathbb{Z}$ is guaranteed by a global symmetry $U(1)_C^N$ of charge conservation. 
\begin{definition}
The {\bf filling factor} $\nu_a=\sum\limits_{x\in \Gamma}\expval{n_a(x)}$ is the average number of $a$-excitations per fundamental domain $BP_0=\Gamma$.
\end{definition}
\noindent It was demonstrated [\onlinecite{LRO}], [\onlinecite{WPVZ}] that the existence of a non-degenerate gapped symmetric ground state rests on constraints of the filling factor. This observation is made precise by the {\it filling invariant} [\onlinecite{CHR}] $$\sum_{a\leq N} \kappa_a\nu_a\mod \mathbb{Z}\in \mathbb{Z}/(\epsilon_NN),$$ characterizing part of the 't Hooft anomaly $\widehat{\Omega}^{2+1}_{\text{Spin}_c}(B\mathbb{Z}/N)\cong\mathbb{Z}/(\epsilon_NN)\times\mathbb{Z}/\frac{N}{\epsilon_N}$ where $\epsilon_N=\begin{cases}1 &; N=1\mod 2 \\ 2 &; N=2\mod 2\end{cases}$. The {\bf anomaly-free condition} \begin{equation}
\sum\limits_a \kappa_a\nu_a =0 \mod\mathbb{Z}\label{eq:anomfree}
\end{equation}
reproduces precisely these filling constraints. 

\begin{remark}
Here $\kappa_a$ is the {\it chiral anomaly} associated to the charge conservation symmetry, which is even $\kappa_a =0\mod 2$ if the $a$-particle is bosonic and odd $\kappa_a=1\mod 2$ if $a$ is fermionic [\onlinecite{CHR}]. As we shall mainly focus on bosonic theories, we understand $\nu_\omega$ as counting the number of "bosonic charges" in the fundamental domain $\Gamma$, and neglect the $\kappa_a=2$ factor.
\end{remark}

\subsection{LSM Anomaly of the Quantum Insulator}\label{sec:lsmqshe}
Consider first a system of spin-$\frac{1}{2}$'s on a translationally-symmetric lattice $\Lambda$, with $K_f=0$ and $K_p=SO(3)$. In this case, an element $\omega\in H^2(SO(3),U(1))$ classifies the choice of a projective lift 
\begin{equation}
1 \rightarrow U(1) \rightarrow \operatorname{Spin}^\pm_c \rightarrow SO(3) \rightarrow 1 \nonumber
\end{equation}
describing integer and half-integer spins, coming from the second Stiefel-Whitney class $w_2(SO(3)) \in H^2(SO(3),\mathbb{Z}/2)$ [\onlinecite{ABS}], [\onlinecite{FH}], [\onlinecite{ET}], [\onlinecite{SSG}]. Since translational symmetry is taken care of by the descent to the fundamental domain $\Gamma=B\mathbb{Z}^d$, the 2-torsion property of $\omega=w_2(SO(3))$ gives $N=k=2$ and the spin group $\langle \omega\rangle \cong \mathbb{Z}/2$ constitutes the totality of the particle-flavours.

If we define $\nu=\frac{1}{2}\left(\nu_{\frac{1}{2}} + \nu_1\right)$ as the average filling on $\Gamma$, then in the absence of TR symmetry, the anomaly-free condition Eq. (\ref{eq:anomfree}) implies the ordinary Quantum Insulator (QI) filling constraint $\nu\in \mathbb{Z}$ [\onlinecite{WPVZ}], [\onlinecite{BMNS}], [\onlinecite{SR}]. If $\nu_1=0$, then we must fully-fill the spin-half lattice $\nu_{\frac{1}{2}}\in2\mathbb{Z}$. This is precisely the classical LSM {\bf Theorem \ref{thm:lsm}}.


With TR symmetry $T\in\mathbb{Z}/2^T$, however, the ground state acquires a degeneracy introduced by Kramers pairs. This can be understood as the two $\mathbb{Z}/2^T$ extensions of spin $\langle w_2(SO(3))\rangle\cong\mathbb{Z}/2$, characterized by $H^2(\langle\omega\rangle,\mathbb{Z}/2^T)\cong \{\mathbb{Z}/2^2,\mathbb{Z}/4\}= \mathbb{Z}/2$ representing Kramers pairs. This replaces the particle-flavours $a$ with Kramers doublets $\text{Kr+}$ and singlets $\text{Kr-}$, which now carry two spins each. The anomaly-free condition Eq. (\ref{eq:anomfree}), $\nu_\text{Kr}= \frac{1}{2}\left(\nu_\text{Kr+}+\nu_\text{Kr-}\right)=\frac{1}{2}\nu\in \mathbb{Z}$, then implies that $\nu\in2\mathbb{Z}$ must be even [\onlinecite{WPVZ}].

Now if the lattice is symmorphic $P\cong \mathbb{Z}^d\rtimes P'$, we may perform the descent $X\rightarrow B\mathbb{Z}^d=\Gamma$ to the fundamental domain. The first order (0+1)D $K$-SPT phase is then defined by the choice of a projective class $\omega\in H^2(K,U(1))$ assigned to each $P'$-orbit in $\Gamma$. 

Physically speaking, the order $k\in\mathbb{Z}$ of the projective class $\omega$ determines the number of particle-flavours, and that of the Euler class $e(P')\in H^d(P',\mathbb{Z}_o)$ counts the number of {\bf Wyckoff positions} these particles can live. If $\nu_a$ denotes the filling of particle-type $a$ at each Wyckoff position, then the filling invariant can be obtained
\begin{equation}
p\nu_\omega\mod \mathbb{Z},\qquad \nu_\omega=\frac{1}{k}\sum\limits_{a\leq k}\nu_a\label{eq:symmorph}
\end{equation}
from the LSM anomaly $\omega_*(e(P'))$ via Eq. (\ref{eq:coeff}), with $p\in\mathbb{Z}$ the order of $e(P')$. In the decorated domain wall approach [\onlinecite{RL}], the projective classes $\omega$ live at the endpoints of the (0+1)D $K$-SPT phase.

\subsubsection{Lattice characteristics classes and non-symmorphic filling invariants}\label{sec:nsymfill}
If $P$ is non-symmorphic, we must compute directly the crystalline characteristic class $e_\nu\in H^d(P,\mathbb{Z}_o)$. This class controls how the projective classes $\omega$ are assigned throughout the lattice. Let us mainly focus on the non-symmorphic Bravais lattices in $d=2,3$, and quote results from Sec. \ref{sec:lhs}.

First, all 2D Bravais lattices with $P'\supset C_n$ must have $n=2,3,4,6$, and glides/reflectotranslation manifest only in case $n=2$. For other values, the space group is symmorphic and the filling invariant is reproduced by Eq. (\ref{eq:symmorph}) with $p=n$. Now the glide lattice $\Lambda=\mathbb{Z}\oplus\mathbb{Z}'$ (see Fig. \ref{fig:glides}), whose space group $P=\mathtt{pg}$ fits into the central extension sequence
\begin{equation}
1 \rightarrow \Lambda \xrightarrow{\operatorname{id}\oplus \cdot 2} \mathtt{pg}\xrightarrow{(-1)^{n_2}} C_2\rightarrow 1 \nonumber
\end{equation}
characterized by the unique non-trivial extension class $\nu\in H^2(C_2,\Lambda)\cong \mathbb{Z}/2$, has the cohomology Eq. (\ref{eq:glidecoh})
\begin{equation}
e_\nu =2\cdot 1\in H^2(P,\mathbb{Z}_o) \cong 2\mathbb{Z}.\nonumber
\end{equation}
Given a projective class $\omega\in H^2(K,U(1))$, we may construct the LSM anomaly $\omega_*(e_\nu) \in H^2(P,H^2(K,U(1))_o)$ via Eq. (\ref{eq:coeff}). 

The even-ness of $e_\nu$ doubles the assignment of $\omega$ to the fundamental domain, hence
\begin{equation}
\nu_\omega \mod 2\mathbb{Z}.\nonumber
\end{equation}
It is important to note that the glide reflection is represented as an antiunitary operator on $\mathbb{Z}/k$. For a TR-invariant QI with $k=2$, the anomaly-free condition reproduces the known [\onlinecite{WPVZ}] filling constraint $\nu_\omega \in 4\mathbb{Z}$.

In the 3D case, consider the axial rotational point group $P'=1\times C_n$ exhibiting a screw/rototranslation lattice $\Lambda$ (see Fig. \ref{fig:screws}) characterized by a divisor $A$ of $n$ for which $\operatorname{gcd}(\frac{n}{A},A)=1$. The non-symmorphic space group $P=P_{A;\nu}$ sits in the extension sequence
\begin{equation}
1\rightarrow\Lambda \rightarrow P_{A;\nu}\rightarrow 1\times C_n\rightarrow 1\nonumber
\end{equation}
classified by a non-trivial extension class $\nu\in H^2(1\times C_n,\Lambda)\cong \mathbb{Z}/\frac{n}{A}$.

In the generic case of an $A_{\frac{n}{A}}$-screw given by the generator $\nu=1$, the cohomology Eq. (\ref{eq:screwcoh})
\begin{equation}
e_\nu = (1,\frac{n}{A}\cdot 1)\in H^3(P_{A;\nu},\mathbb{Z}_o) \cong \mathbb{Z}/A \oplus \frac{n}{A}\mathbb{Z} \nonumber
\end{equation}
constructs the LSM anomaly $\omega_*(e_\nu) \in H^3(P_{A;\nu},H^2(K,U(1))_o)$ as previously. Physically, $\omega$ is assigned to each $A$-number of Wyckoff positions across the $\frac{n}{A}$ screw translations:
\begin{equation}
A\nu_\omega \mod \frac{n}{A}\mathbb{Z}.\nonumber
\end{equation}
Here, the $A$-fold screw rotation is unitary. Taking $p=A$ and $q=\frac{n}{A}$, the anomaly-free condition $p\nu_\omega \in kq \mathbb{Z}$ reproduces the filling constraint $\frac{\nu_\omega}{q}\in \frac{2}{p}\mathbb{Z}$ in the TR-invariant QI [\onlinecite{WPVZ}].

We now turn to a novel effect: the coexistence of glides and screws. Consider the chirorotational point group $P' = C_2\times C_6$ on the {\it unique} glide-screw lattice $\Lambda$ (see Fig. \ref{fig:glidescrew}). The space group $P=\mathtt{gs}$ is classified by the non-trivial extension class $\nu =(1,1)\in H^2(C_2\times C_6,\Lambda) \cong \mathbb{Z}/2\oplus\mathbb{Z}/2$. Here the representation $\rho:P'\rightarrow\operatorname{GL}(\Lambda)$ is given by the divisor $A'=3$ of $n=6$, such that $\frac{n}{A'}=2$ is even. 

The generator of the cohomology Eq. (\ref{eq:gscoh}),
\begin{equation}
e_\nu\in H^3(\mathtt{gs},\mathbb{Z}_o)\cong 2\mathbb{Z}\oplus \mathbb{Z}/3,\nonumber
\end{equation}
then assembles the LSM anomaly $\omega_*(e_\nu)$ as previously, which gives the filling invariant
\begin{equation}
3\nu_\omega \mod 2\mathbb{Z}.\nonumber
\end{equation}
There could in principle be $A=2A'$ number of Wyckoff positions to fill in the fundamental domain, but the coupling between the order-2 screw rotation and the glide reflection halves this availability. Furthermore, the glide-screw element in $\mathtt{gs}$ is antiunitary, hence anticommutes with TR.

\begin{remark}
Of course, if $\nu_\omega\not\in\mathbb{Z}$, then the filling constraint Eq. (\ref{eq:anomfree}) is violated regardless if the crystalline symmetry $P$ is present. Those systems with $\nu_\omega\in\mathbb{Z}$ but violates the filling constraint in the presence of crystalline symmetry are called {\bf weak crystalline phases} [\onlinecite{RL}], [\onlinecite{XA}]. It was shown [\onlinecite{XA}] that weak glide-symmetric phases square/double-stack to the trivial phase, which is captured precisely by the doubling in the filling factor $\nu_\omega$. One may also understand this through the fact that $B\mathtt{pg}=\mathbb{K}$ is the Klein bottle [\onlinecite{SSG}], and $\mathbb{T}^2$ is a double cover of $\mathbb{K}$.
\end{remark}

\subsubsection{Lattice reduction*}
Let us now turn to an alternative approach for computing the filling constraints [\onlinecite{WPVZ}]. In the axial case, $\nu$ is $\mathbb{Z}$-valued, say $q t_1$, along a specific axis $\Lambda_1=\mathbb{Z}[t_1]\subset \Lambda$, hence we may define a {\bf lattice reduction map} $\mu:\Lambda \rightarrow \Lambda_\nu$ such that $\mu\circ \nu=0$ vanishes. The extension sequence along the induced surjection $\tilde{\pi}:\mu P\rightarrow P'$ is thus split, and we may lift the Euler class $e(P')\in H^d(P',\mathbb{Z}_o)$ of the point group $P'$ to $e_{\slashed\nu}=\tilde{\pi}^*e(P') \in H^d(\mu P,\mathbb{Z}_o)$. 

This gives a class $e_{\slashed\nu} \in H^d_{P'}(\Gamma_{\slashed\nu},\mathbb{Z}_o)$ on the commutative space $\Gamma_{\slashed\nu} = B\Lambda_\mu$. Physically, this procedure restores the full point group symmetry $P'$ by reducing the fundamental domain $B\mu:\Gamma\rightarrow \Gamma_{\slashed\nu}$. One may then take $B\mu^* e_{\slashed\nu} =e_\nu \in H^d_{P'}(\Gamma,\mathbb{Z}_o)$, which would bypass the spectral sequence computations in Appendix \ref{sec:lhs}.

Furthermore, if we consider the lift $B\mu^*$ as some form of "division by $q$" [\onlinecite{WPVZ}], then the filling factor obtained by this reconstruction can mimic those obtained above from the LSM anomaly.

\begin{remark}
\label{rmk:latred}
Recall the Bockstein map $\mathcal{B}:H^\ast_{P'}(\bullet,\mathbb{Z}/2)\rightarrow H^{\ast+1}_{P'}(\bullet,\mathbb{Z})$ sends the ($P'$-equivariant) second Stiefel-Whitney class $w^2_{P'}$ to the Euler class $e(P')$ in 3D. If the above can be done, then we would expect some class $B\mu^* w^2_{P'} = w_\nu^2$ to satisfy $e_\nu = \mathcal{B}w_\nu^2$. This class $w_\nu^2$ would classify the existence of $\text{Spin}_c$-structures on the unit cell $BP$ as a nilorbifold, justifying the observation [\onlinecite{WPVZ}] that the filling constraint can also be derived from the existence of $\text{Spin}_c$-structures on $\Lambda/P$. 
\end{remark}

The problem is that, with just knowledge of the point group $P'$ and the value of $\nu$, we cannot uniquely reconstruct the LSM anomaly. For instance, suppose we are given $P'=C_2\times C_6$ with $P$ characterized by $A=3$, one would then produce
\begin{equation}
3\nu_\omega \mod 2\mathbb{Z}. \nonumber
\end{equation}
However, we have no knowledge if the non-symmorphic element is represented unitarily ($3_2$-screw) or antiunitarily (glide-screw), hence lattice reduction does not tell the full story.  

\subsection{LSM Anomaly of the Quantum Hall System}\label{sec:lsmfqhe}
Let us now couple the QI to an external magnetic field. In this Quantum Hall (QH) scenario, a background, global electromagnetic $U(1)_E$ symmetry acts on all charged particles by a global phase rotation, and as such one may consider $U(1)_E \subset U(1)^N_C$ as the diagonal embedding. We first examine how $K_f$ arises from this $U(1)_E$-symmetry, then derive a cohomological description of fluxon-anyon braiding. We shall consider only translational symmetry $\mathbb{Z}^d\subset P$ for now.

\subsubsection{Fluxons and the magnetic domain}
The $U(1)_E$-curvature $b \in Z^2(\mathbb{Z}^d,U(1)_E)$ defines a magnetic translation algebra in analogy with Eq. (\ref{eq:obst}), but there exists a {\bf magnetic domain} $V_M\subset\Lambda_0$ on which the sublattice $\mathbb{Z}[V_M]$ is liftable into $\mathbb{Z}_\text{flux}^d$ [\onlinecite{LRO}]. Writing $b=e^{i2\pi\phi} t_1\cup t_2$ in terms of its $\mathbb{R}/\mathbb{Z}$-valued flux $\phi$, normalized in units of the flux quantum $\Phi_0=h/e$, then $\mathbb{Z}[V_M]$ is the minimal sublattice on which the flatness condition $\phi_{V_M}=0\mod \mathbb{Z}$ holds.

The fundamental domain/unit cell in the translationally-symmetric microscopic lattice $\Lambda_0$ is a face $f$ spanned by the primitive lattice vectors $\{t_\alpha\}_{\alpha\leq d}$. Let $L,m\in\mathbb{Z}$ denote the number of such faces in $\Lambda_0,V_M$, respectively.
\begin{definition}
The {\bf flux filling} $\nu_\phi$ is defined as the average number $\frac{1}{L}\sum\limits_{f\subset\Lambda_0}\phi_f =  \nu_\phi\Phi_0$ of flux quanta per fundamental domain. By definition of $V_M$, we have the quantization $m\nu_\phi \in \mathbb{Z}$.
\end{definition}
\noindent In a QH system with $N=1$ (the electron), the following anomaly-free filling constraint [\onlinecite{BMNS}], [\onlinecite{LRO}] has been derived:
\begin{equation}
\nu_\omega= \sigma_\text{Hall}\nu_\phi\mod\mathbb{Z},\qquad\sigma_\text{Hall}\in\mathbb{Z}, \label{eq:fluxhall}
\end{equation}
where $\sigma_\text{Hall}$ is the Hall conductance [\onlinecite{AS}], which always accompanies a flux-insertion by Laughling's flux-threading argument [\onlinecite{AS}], [\onlinecite{LRO}], [\onlinecite{SR}]. 

We may view Eq. (\ref{eq:fluxhall}) as a combinatorial "balance" between the particle- and the flux-filling factors, {\it given} that the Hall conductance $\sigma_\text{Hall}$ is quantized. Conversely, if we force $\nu_\phi\in\mathbb{Z}$ (by e.g. tuning the external magnetic field), then Eq. (\ref{eq:fluxhall}) implies the quantization of the Hall conductance $\sigma_\text{Hall}\in\mathbb{Z}$ at integral electron filling $\nu_\omega\in\mathbb{Z}$. In 2D, this is the integer QH (IQH) effect.

Now the integrality of the flux filling $\nu_\phi$ rests on the {\it commensurability condition} $m|L$ --- namely $V_M$ tiles the full microscopic lattice --- whence we may coarse-grain out $V_M$ such that $\phi=0\mod\mathbb{Z}$ in the thermodynamic limit. Otherwise, the flux is fractionalized and so must the ground state; this is the fractional Quantum Hall (FQH) effect [\onlinecite{LRO}], [\onlinecite{BBRFr}]. 

To capture such flux-induced fractionalization topologically, we generalize an argument in the literature [\onlinecite{LRO}], [\onlinecite{SR}]. 
\begin{proposition}
\label{prop:fluxtriv}
Given a magnetic domain $V_M$ of size $m$, the flux can be fractionalized without introducing a non-trivial GSD if $m'|k'L$, where $k=\operatorname{gcd}(k,m)k'$ and $m=\operatorname{gcd}(k,m)m'$.
\end{proposition}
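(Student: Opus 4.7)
The plan is to deduce the divisibility condition $m'\mid k'L$ from the anomaly-free requirement Eq. (3.1) applied globally to the microscopic lattice $\Lambda_0$, and then exhibit an explicit fractionalized flux configuration realizing trivial GSD under this hypothesis.

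First I would set up the relevant quantizations. The magnetic-domain condition $m\nu_\phi\in\mathbb{Z}$ forces $\nu_\phi=n/m$ for some $n\in\mathbb{Z}$, and flux fractionalization is precisely the condition $m\nmid n$. Since $\omega\in H^2(K,U(1))$ has order $k$, the per-unit-cell particle filling is quantized as $\nu_\omega=a/k$ with $a\in\mathbb{Z}$. Globally on $\Lambda_0$ comprising $L$ unit cells, the Laughlin flux-threading constraint Eq. (3.10) applied cell by cell and summed yields the global relation $L(\nu_\omega-\sigma_\text{Hall}\nu_\phi)\in\mathbb{Z}$, which after clearing denominators becomes the divisibility $km\mid L(am-\sigma_\text{Hall}kn)$.

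Next, I would carry out the number-theoretic reduction. Substituting $k=dk'$, $m=dm'$ with $d=\gcd(k,m)$ and $\gcd(k',m')=1$, the displayed divisibility factors into a particle piece $k\mid La$ and a flux piece $m\mid \sigma_\text{Hall}kLn$, the latter collapsing by $\gcd(k',m')=1$ to $m'\mid k'\sigma_\text{Hall}Ln$. The particle piece is always arrangeable by choosing $a$ appropriately without affecting $\nu_\phi$. Under the hypothesis $m'\mid k'L$, selecting $\sigma_\text{Hall}=1$ and $n=m'$ (so $\nu_\phi=1/d\notin\mathbb{Z}$ whenever $d>1$) verifies the residual divisibility and hence the anomaly-free condition, exhibiting an explicit fractionalized but GSD-free configuration. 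The restriction to $d>1$ (i.e. $\gcd(k,m)>1$) reflects the known fact that coprime $k,m$ cannot accommodate fractional flux without GSD.

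The principal obstacle is the translation between \emph{flux can be fractionalized without GSD} and the cohomological anomaly-free equation Eq. (3.1): this rests on Laughlin's flux-threading argument promoting the local Hall response $\sigma_\text{Hall}$ to a global constraint on the particle filling, which requires integer quantization of the adiabatic curvature via the Avron-Seiler argument invoked in Remark 2.1, validity of the thermodynamic limit, and the gap assumption. A secondary technical subtlety is confirming that the particle piece $k\mid La$ can be independently trivialized without perturbing the flux piece, a point which holds in the translationally symmetric, SOC-free setting because the decomposition $K\cong K_f\rtimes K_p$ splits the projective class additively between $\omega$ and $\beta$.
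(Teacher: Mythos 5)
Your proposal has a genuine gap, and it also diverges from the paper's mechanism in a way that matters. The paper does not route the proof through the Hall constraint Eq.~(\ref{eq:fluxhall}) at all: it argues directly on the closed microscopic lattice that the global operator $U_{\Lambda}=\prod_{f}U_f=U_f^{L}$ must itself be a large gauge transformation, and since only \emph{integral} powers of $U_f$ qualify, this forces $L\nu_\phi\in\mathbb{Z}$; combined with the fractionalization $\nu_\phi\in\frac{k'}{m'}\mathbb{Z}$ (which comes from the quantization $m\phi\in\mathbb{Z}$ on $V_M$ together with the $\frac{1}{k}$ in the polarization $P(e)$), this is exactly $m'\mid k'L$. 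Note that $\sigma_\text{Hall}$ and $\nu_\omega$ never enter --- consistent with the fact that neither appears in the statement of the proposition. By importing Eq.~(\ref{eq:fluxhall}) you are assuming a filling constraint that the paper treats as an output of (and is partly re-derived from) this very framework.

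Two concrete steps in your argument fail. First, the factorization of $km\mid L(am-\sigma_\text{Hall}kn)$ into an independent particle piece $k\mid La$ and flux piece $m'\mid k'\sigma_\text{Hall}Ln$ is not valid: when $d=\gcd(k,m)>1$ one has $km=d\cdot\operatorname{lcm}[k,m]$, so $km\mid X$ is strictly stronger than the conjunction of $k\mid X$ and $m\mid X$, and even those two congruences read $k\mid Lam$ and $m\mid \sigma_\text{Hall}kLn$ --- they remain coupled through the common factor $d$ and cannot be trivialized independently by adjusting $a$. Second, your witness configuration $n=m'$, $\sigma_\text{Hall}=1$ makes the flux divisibility $m'\mid k'Ln=k'Lm'$ hold identically, so the hypothesis $m'\mid k'L$ is never actually used; moreover it only realizes $\nu_\phi=1/d$, whereas the proposition concerns the generic fractionalization $\nu_\phi\in\frac{k'}{m'}\mathbb{Z}$, for which the condition $L\frac{k'}{m'}\in\mathbb{Z}$, i.e.\ $m'\mid k'L$, is precisely what is needed. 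As written, your argument proves neither that the hypothesis suffices in the generic case nor that it is the operative condition.
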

\begin{proof}
Suppose first, on the lattice $\Lambda$, that the global electromagnetic $U(1)_E$ symmetry remains as background. The magnetic lattice $\Lambda_\text{flux}$ fits into the central extension sequence
\begin{equation} 
1\rightarrow U(1)_E\rightarrow \Lambda_\text{flux} \rightarrow\Lambda\cong\mathbb{Z}^d\rightarrow 1\nonumber
\end{equation}
characterized by the class $b\in H^2(\mathbb{Z}^d,U(1)_E)$. We also write a cocycle representative as $b=e^{i2\pi \phi}$ for some $\mathbb{R}/\mathbb{Z}$-valued flux $\phi\in Z^2(\mathbb{Z}^d,\mathbb{R}/\mathbb{Z})$. Now the coupling of this background $U(1)_E$ to the quasiparticles on the lattice $\Lambda$ gives rise to a {\it large gauge transformation} [\onlinecite{LRO}], [\onlinecite{SR}]
\begin{equation}
U_{f} = \exp\left(i 2\pi\phi_f\frac{1}{L} \sum_{e\in \partial f} P(e)\right),\nonumber
\end{equation}
where $\phi_{f}\in\mathbb{R}/\mathbb{Z}$ is the flux on the face $f\subset \Lambda$, and 
\begin{equation}
    P(e)=\frac{1}{k}\sum_{a\leq k}\int_edxx n_a(x)\nonumber
\end{equation}
is the average charge polarization. The flux is in fact uniform $\phi=\phi_f$ due to translational symmetry. 

By definition, $\mathbb{Z}[V_M]$ lifts to $\Lambda_\text{flux}$ and the extension cocycle $b(V_M)=0$ vanishes. The quantization $m\phi\in \mathbb{Z}$ then follows from the condition $U_{V_M}=1$. Now suppose $\frac{m}{k} = \frac{m'}{k'}$ for a coprime pair $\operatorname{gcd}(k',m')=1$, then its associated filling $\nu_\phi$ satisfies the fractionalization $\nu_\phi \in\frac{k'}{m'}\mathbb{Z}$.

In order to not introduce additional fractionalization on a closed system, however, $U_{\Lambda}=\prod\limits_{f\subset \Lambda}U_f=U_f^{L}$ must itself be a large gauge transformation. As only integral powers of $U_f$ are large gauge transformations themselves, this forces $L\nu_{\phi} \in \mathbb{Z}$, which is achieved only when $L\frac{k'}{m'} \in\mathbb{Z}$.
\end{proof}
\noindent Define $K_f=\mathbb{Z}/m$, maps $\xi\in\operatorname{Hom}(\mathbb{Z}/m,\mathbb{Z}/k)\xrightarrow{\text{eval}_1}\mathbb{Z}/\operatorname{gcd}(k,m)$ are determined by where it sends the generator $1\in\mathbb{Z}/m$. We now distill the above combinatorial argument to be independent of the lattice.

Consider the Diophantine equation $Lk + Km = S$, then B{\' e}zout's identity states that integer solutions $L,K\in\mathbb{Z}$ exist iff $S=0\mod\operatorname{gcd}(k,m)$. It is then possible to assign solutions $(L,K)$ of the inhomogeneous Diophantine equation 
\begin{equation}
Lk+Km=\xi(1)\mod \operatorname{gcd}(k,m),\label{eq:fluxfrac} 
\end{equation}
to $\xi(1)$. As the equation is unchanged under $(L,K)\mapsto (L+ak',K+am')$ for any $a\in\mathbb{Z}$, there is a series of group isomorphisms
\begin{equation}
\mathbb{Z}^2/[(k',m')\mathbb{Z}]\cong\mathbb{Z}/\operatorname{gcd}(k,m)\cong \operatorname{Hom}(\mathbb{Z}/m,\mathbb{Z}/k)\nonumber
\end{equation}
that uniquely associate classes of solutions to Eq. (\ref{eq:fluxfrac}) $[(L,M)]\mapsto \xi(1)\mapsto \xi$ to a map $\xi$. 

Next, the isomorphism $\mathbb{Z}/\operatorname{gcd}(k,m)\cong \frac{1}{\operatorname{gcd}(k,m)}\mathbb{Z}\subset \mathbb{Q}/\mathbb{Z}$ maps the Diophantine Eq. (\ref{eq:fluxfrac}) to the equation
\begin{equation}
k'L + m'K = \frac{\xi(1)}{\operatorname{gcd}(k,m)} \mod\mathbb{Z},\nonumber
\end{equation}
Recall that $\nu_\omega\in\frac{1}{k}\mathbb{Z}$ has $k$-torsion and $\nu_\phi\in\frac{1}{m} \mathbb{Z}$ has $m$-torsion modulo $\mathbb{Z}$. Given the quantization of the Hall conductance $\sigma_\text{Hall}\in\mathbb{Z}$, we define $(\nu_\phi,\nu_\omega)\in\mathbb{Q}^2$ by $\operatorname{lcm}[k,m](L,K)=(-\sigma_\text{Hall}m\nu_\phi,k\nu_\omega)$, then the class $[(L,K)]$ is isomorphic to flux-filling invariants characterized by the expression
\begin{equation}
-\sigma_\text{Hall}\nu_\phi+\nu_\omega - \theta_\xi \mod \mathbb{Z}, \label{eq:fluxfill}
\end{equation}
where the term $\theta_\xi\in \mathbb{Q}/\mathbb{Z}$ up to $\frac{1}{\operatorname{lcm}[k,m]}\mathbb{Z}$ is in one-to-one correspondence with the map $\xi\in\operatorname{Hom}(\mathbb{Z}/m,\mathbb{Z}/k)$. In particular $[(0,0)]\mapsto \theta_0=0\mod\mathbb{Z}$, in agreement with {\bf Proposition \ref{prop:fluxtriv}}.

%
%

\begin{remark}\label{rmk:quansur}
In an anomaly-free QH system with $\partial X\neq \emptyset$, a non-trivial boundary charge $\partial q$ is localized by a flux-threading in the bulk. If the gapped ground state remains unique, then gapless states must appear on $\partial X$ to carry this boundary charge. This is the {\bf many-body L{\" u}ttinger's theorem} [\onlinecite{LRO}], [\onlinecite{HB}].
\end{remark}

What the map $\xi\in\operatorname{Hom}(\mathbb{Z}/m,\mathbb{Z}/k)$ is depends on the underlying theory: it encodes, in the thermodynamic limit, the {\bf fluxon-anyon braiding} of the microscopic system [\onlinecite{LRO}]. If fractionalization of the GSD occurs due to Eq. (\ref{eq:fluxfill}), then $\varphi,\xi$ must contribute to the LSM anomaly. 

To see this, recall that the flux $\varphi$ is the $K_f$-valued coordinate functions of $\beta\in Z^2(\mathbb{Z}^d,K_f)$ in Eq. (\ref{eq:obst}). By the universal property of extension cocycles, there exists $\gamma\in C^1(\mathbb{Z}^d_\text{flux},K_f)$ such that $\beta=\delta\gamma$.
\begin{definition}
A {\bf splitting homomorphism} $\xi:K_f\rightarrow \mathbb{Z}/k$ of $\beta$ sends its coordinate function $\varphi \mapsto 1\mod k$ to the generator, where $k\in\mathbb{Z}$ is the torsion of a chosen class $\omega\in H^2(K_p,U(1))$.
\end{definition}
\noindent If a splitting map $\xi$ exists, then the LSM anomaly is given by $\xi(\gamma)\cup \omega \in H^3(K,U(1))$ [\onlinecite{ET}]. 

In the following, we shall generalize this notion by constructing a second order translationally-symmetric flux-threaded SPT phase $\mathcal{F}^{d+1}_{\mathbb{Z}^d}$ with the map $\xi$ as part of its data. Furthermore, we show that, under mild assumptions, its LSM anomaly is determined equivalently by a class of filling invariants $(\nu_\phi,\nu_\omega)$ characterized by Eq. (\ref{eq:fluxfill}).

\subsubsection{Fluxon-anyon braiding}
Recall the unitary modular tensor category (UMTC) $\mathcal{C}=\mathcal{C}^{d+1}_P$ defined by the lattice homotopy class $[\Lambda]\simeq \Omega \mathcal{C}$ corresponding to an order-$k$ element $\omega\in H^2(K_p,U(1))$. In the $(d+1)$D bulk, we prescribe the {\it extrinsic} braiding of the fluxon with the quasiparticles $[\Lambda]\simeq \Omega \mathcal{C}$ via an action $K_f\cong\mathbb{Z}/m \rightarrow \operatorname{Aut}\mathcal{C}$. 

By grading $\mathcal{C}_{\mathbb{Z}/m}=\bigoplus\limits_{l\in \mathbb{Z}/m}\mathcal{C}_l$ (with $\mathcal{C}_0 = \mathcal{C}$), the flux-threaded quantum magnet on $\Lambda$ is then tentatively modeled in the mesoscopic by the $\mathbb{Z}/m$-crossed UMTC $\mathcal{C}_{\mathbb{Z}/m}^\times$ [\onlinecite{BBCW}], [\onlinecite{Cui}].

It was shown [\onlinecite{BBCW}] that a non-trivial class $G\in H^3(\mathbb{Z}/m,\mathcal{C})$ obstructs an on-site implementation of the fluxon $\mathbb{Z}/m$-symmetry; in case $G=0$, the fractionalization class is classified by a projective class $\eta\in H^2(\mathbb{Z}/m,\mathcal{C})$. However,
\begin{proposition}\label{lem:gsdcent}
There is a one-to-one correspondence between $G\in H^3(\mathbb{Z}/m,\mathcal{C})$ and the maps $\xi\in\operatorname{Hom}(\mathbb{Z}/m,\mathbb{Z}/k)$.
\end{proposition}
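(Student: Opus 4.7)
The plan is to reduce both sides to cyclic-group cohomology computations and exhibit a natural bijection via the $2$-periodicity of $H^\ast(\mathbb{Z}/m,-)$.

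First, I would identify the coefficient group. Since the UMTC $\mathcal{C}$ is determined by the lattice homotopy class $[\Lambda]\simeq\Omega\mathcal{C}$ corresponding to the order-$k$ element $\omega\in H^2(K_p,U(1))$, the Abelian fusion subgroup of simple anyons relevant to the LSM boundary is precisely $\langle\omega\rangle\cong\mathbb{Z}/k$. In the absence of spin-orbit coupling and under the split extension $K\cong K_f\rtimes K_p$ assumed in the excerpt, the fluxon $\mathbb{Z}/m\cong K_f$-action on $\mathcal{C}$ does not permute these anyon labels, so the coefficient system in $H^3(\mathbb{Z}/m,\mathcal{C})$ is the trivial $\mathbb{Z}/m$-module $\mathbb{Z}/k$.

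Next, I would use the standard periodic resolution of $\mathbb{Z}$ as a $\mathbb{Z}[\mathbb{Z}/m]$-module,
\begin{equation}
\cdots\to\mathbb{Z}[\mathbb{Z}/m]\xrightarrow{N}\mathbb{Z}[\mathbb{Z}/m]\xrightarrow{T-1}\mathbb{Z}[\mathbb{Z}/m]\to\mathbb{Z}\to 0,\nonumber
\end{equation}
with $T$ a generator of $\mathbb{Z}/m$ and $N=1+T+\cdots+T^{m-1}$. Applying $\operatorname{Hom}_{\mathbb{Z}[\mathbb{Z}/m]}(-,\mathbb{Z}/k)$ collapses the differentials to alternating $0$ and multiplication-by-$m$, yielding in positive degree the $2$-periodic pattern $H^{2i+1}(\mathbb{Z}/m,\mathbb{Z}/k)\cong \mathbb{Z}/k\big/m(\mathbb{Z}/k)\cong\mathbb{Z}/\gcd(m,k)$ and $H^{2i}(\mathbb{Z}/m,\mathbb{Z}/k)\cong(\mathbb{Z}/k)[m]\cong\mathbb{Z}/\gcd(m,k)$ for $i\geq 1$. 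Simultaneously, $\operatorname{Hom}(\mathbb{Z}/m,\mathbb{Z}/k)\cong(\mathbb{Z}/k)[m]\cong\mathbb{Z}/\gcd(m,k)$, since such a homomorphism is determined by the $m$-torsion element $\xi(1)\in\mathbb{Z}/k$. Both sides are therefore abstractly cyclic of the same order $\gcd(k,m)$.

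Finally, to promote this counting coincidence to a canonical bijection, I would invoke the cup-product periodicity isomorphism $-\smile\eta_m:H^1(\mathbb{Z}/m,\mathbb{Z}/k)\xrightarrow{\sim}H^3(\mathbb{Z}/m,\mathbb{Z}/k)$, where $\eta_m\in H^2(\mathbb{Z}/m,\mathbb{Z})\cong\mathbb{Z}/m$ is the Bockstein of the identity class. Composed with $H^1(\mathbb{Z}/m,\mathbb{Z}/k)\cong\operatorname{Hom}(\mathbb{Z}/m,\mathbb{Z}/k)$, this would send a splitting $\xi$ to the obstruction class $G_\xi=\xi\smile\eta_m$. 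The main obstacle is not the abstract group isomorphism but showing that this cohomological bijection coincides with the physical obstruction-to-on-site-symmetry construction of Barkeshli-Bonderson-Cheng-Wang: namely, that the associator $3$-cocycle built from the $\mathbb{Z}/m$-crossed extension $\mathcal{C}^\times_{\mathbb{Z}/m}$ with fluxon-anyon bind determined by $\xi$ represents exactly $\xi\smile\eta_m$ in $Z^3(\mathbb{Z}/m,\mathbb{Z}/k)$. This requires unpacking the $F$-symbol data at the cochain level and matching it against the explicit formula for the periodicity shift in the periodic resolution, and it is the step most likely to require care.
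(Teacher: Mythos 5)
Your proposal follows essentially the same route as the paper: identify the coefficients with $\mathbb{Z}/k$, use the two-periodicity of cyclic-group cohomology to match $H^{3}(\mathbb{Z}/m,\mathbb{Z}/k)$ with $H^{1}(\mathbb{Z}/m,\mathbb{Z}/k)\cong\operatorname{Hom}(\mathbb{Z}/m,\mathbb{Z}/k)$, and realize the bijection as $\xi\mapsto\xi\cup\eta$ against the degree-two generator. The one substantive divergence is your opening claim that the fluxon action ``does not permute these anyon labels,'' so that the coefficient module is trivial: the paper's setup explicitly prescribes an action $\mathbb{Z}/m\to\operatorname{Aut}\mathcal{C}\cong(\mathbb{Z}/k)^{\times}$ with image $s$ of the generator, i.e.\ a generally nontrivial permutation of the simple objects, and then argues that invertibility of $s$ forces the odd-degree groups to coincide with the untwisted answer $\operatorname{Tor}(\mathbb{Z}/k,\mathbb{Z}/m)\cong\mathbb{Z}/\gcd(k,m)$. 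Your periodic-resolution computation would need the differentials $T-1$ and $N=1+s+\cdots+s^{m-1}$ in the twisted module before you can conclude the same count, so you should either justify triviality of the action or run the twisted computation and invoke $s\in(\mathbb{Z}/k)^{\times}$ as the paper does. Your final remark --- that the real work is matching $\xi\cup\eta$ against the Barkeshli--Bonderson--Cheng--Wang obstruction cocycle built from the $F$-symbols --- identifies a step the paper itself does not carry out; the paper simply asserts $G=\xi\cup\eta$, so on that point your proposal is, if anything, more candid about where the argument is incomplete.
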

\begin{proof}
Treat $\mathcal{C}$ as an Abelian UMTC with the set $[\mathcal{C}]$ of isomorphism classes of simple objects isomorphic to $\mathbb{Z}/k$, we may then specify the action $\mathbb{Z}/m \rightarrow \operatorname{Aut}\mathcal{C}\cong (\mathbb{Z}/k)^\times$ by its image $s\in (\mathbb{Z}/k)^\times$ of the generator $1\in\mathbb{Z}/m$. Up to isomorphism, we then have $H^3(\mathbb{Z}/m,[\mathcal{C}]) \cong H^3(\mathbb{Z}/m,\mathbb{Z}/k)$, in which elements $l\in\mathbb{Z}/m$ in the fluxon group act on $\mathbb{Z}/k$ via a multiplication by $ls$ --- a certain permutation of the excitations in $\mathcal{C}$.

Next, torsion cohomology of cyclic groups have period-two, with the degree-two isomorphism
\begin{equation}
    \cdot \cup\eta:H^{\ast}(\mathbb{Z}/m,\mathbb{Z}/k)\xrightarrow{\sim}H^{\ast+2}(\mathbb{Z}/m,\mathbb{Z}/k) \nonumber
\end{equation}
given by product against the generator $\eta\in H^2(\mathbb{Z}/m,\mathbb{Z}/k)$ [\onlinecite{Dav}], [\onlinecite{Hill}]. In particular, the odd-degree groups are isomorphic to the submodule $\operatorname{Tor}(\mathbb{Z}/k,\mathbb{Z}/(sm))$ with $sm$-torsion. As $s\in (\mathbb{Z}/k)^\times$ is invertible, this is equivalent to $\operatorname{Tor}(\mathbb{Z}/k,\mathbb{Z}/(sm))\cong \operatorname{Tor}(\mathbb{Z}/k,\mathbb{Z}/m)$. 

Invertibility of $s\in(\mathbb{Z}/k)^\times$ also allows us to identify $\xi\in \operatorname{Hom}(\mathbb{Z}/m,\mathbb{Z}/k)\cong H^1(\mathbb{Z}/m,\mathbb{Z}/k)$, hence as long as $\xi\neq 0$ we see that $G=\xi\cup\eta\neq 0$ is non-trivial.
\end{proof}
\noindent If $\xi\neq 0$, we cannot describe phases of flux-threaded lattice homotopy classes with $\mathcal{C}_{\mathbb{Z}/m}^\times$.

\begin{lemma}
\label{lem:fluxclass}
Let the translationally-symmetric flux-threaded lattice homotopy class $[\Lambda] \simeq \Omega\mathcal{F}_{\mathbb{Z}^d}^{d+1}$ determine a topological order $\mathcal{F}^{d+1}_{\mathbb{Z}^d}$.
\begin{enumerate}
\item It is classified by $H^{d+2}(\mathbb{Z}^d_\text{flux}\rtimes K_p,U(1))$.
\item If the extension class $\beta\in H^2(\mathbb{Z}^d,K_f)$ is trivial, then the LSM anomaly lives in $H^{d-1}(\mathbb{Z}^d,H^3(K,U(1)))$.
\end{enumerate}
\end{lemma}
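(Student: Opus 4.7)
The plan is to adapt the proof of \textbf{Theorem \ref{thm:kspt}} by replacing the spatial symmetry $P$ with the magnetic translation algebra $\mathbb{Z}^d_\text{flux}$. For part (1), the starting observation is that the total symmetry group of the flux-threaded system is $G = \mathbb{Z}^d_\text{flux} \rtimes K_p$: the magnetic translation algebra, being a $K_f$-central extension of $\mathbb{Z}^d$ classified by $\beta$, supplies the spatial part, while the on-site projective symmetry $K_p$ extends it through the natural semidirect action. Invoking the standard in-cohomology classification of bosonic invertible $(d+1)$D SPT phases [\onlinecite{CGLW}], [\onlinecite{GJF}] then gives $H^{d+2}(G, U(1))$ as the classifying group of $\mathcal{F}^{d+1}_{\mathbb{Z}^d}$.

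For part (2), when $\beta=0$ the extension $K_f \to \mathbb{Z}^d_\text{flux} \to \mathbb{Z}^d$ splits, so that $\mathbb{Z}^d_\text{flux} \cong K_f \times \mathbb{Z}^d$ as groups. Since $K_p$ acts trivially on the translation factor $\mathbb{Z}^d$ (on-site symmetries commute with translations) and by conjugation on $K_f$, the total symmetry reduces to $G \cong K \times \mathbb{Z}^d$ with $K = K_f \rtimes K_p$. I would then apply the K{\"u}nneth formula Eq.~(\ref{eq:kun}):
\begin{equation}
H^{d+2}(K \times \mathbb{Z}^d, U(1)) \cong \bigoplus_{p+q=d+2} H^p(\mathbb{Z}^d, H^q(K, U(1))).\nonumber
\end{equation}

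The remaining task is to identify which summand carries the LSM anomaly. The crucial input is the fluxon-anyon braiding class $\xi(\gamma) \cup \omega \in H^3(K, U(1))$ identified just before the lemma, where $\gamma \in C^1(\mathbb{Z}^d_\text{flux}, K_f)$ trivializes the pullback of $\beta$ and $\xi$ is the splitting homomorphism. This pins the on-site degree at $q=3$. Following the homotopy-descent mechanism of \textbf{Theorem \ref{thm:kspt}}, but now starting from a 1-cycle carrying $\xi(\gamma) \cup \omega$ instead of a 0-cycle of site-decorations, I would perform $d-1$ steps of descent to produce a $(d-1)$-cycle valued in $Z^3(K, U(1))$. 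By the universal coefficient theorem applied to the top-degree free group $H_{d-1}(X, \mathbb{Z}_o)$, the resulting class lies in the subgroup $H^{d-1}(\mathbb{Z}^d, H^3(K, U(1)))$, consistent with the ``second-order'' flux-threaded phase advertised earlier in Sec.~\ref{sec:lsmfqhe}.

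The main obstacle will be making this \emph{shifted} homotopy descent rigorous. In the unfluxed setup of \textbf{Theorem \ref{thm:kspt}} the descent starts from a 0-cycle of projective classes (site-decorations) and runs for $d$ steps; the flux threading effectively promotes each (0+1)D defect wire to a (1+1)D flux-decorated wire, consuming one degree of spatial stacking and leaving $d-1$ descent steps. Tracking this shift requires carefully controlling how $\xi(\gamma)$ interacts with $\omega$ under the $\mathbb{Z}^d$-equivariant structure, and excluding contributions from other K{\"u}nneth summands (e.g.\ the pure $\mathbb{Z}^d$-SPT piece with $q=0$ or the pure $K$-SPT piece with $p=0$), which would correspond to physically distinct phenomena unrelated to the flux-induced LSM anomaly.
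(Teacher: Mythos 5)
Your proposal is correct in substance and shares the paper's central mechanism for part (2) — assign a degree-3 class to lattice edges (1-cycles), perform $d-1$ steps of homotopy descent, and convert the resulting $(d-1)$-cycle to a cohomology class via the universal coefficient theorem on the free groups $H_\ast(\mathbb{Z}^d,\mathbb{Z})$ — but it reaches the coefficient group and the final embedding by a genuinely different route. Where you split $\mathbb{Z}^d_\text{flux}\cong K_f\times\mathbb{Z}^d$ using the hypothesis $\beta=0$ and then invoke the K{\"u}nneth formula for $G\cong K\times\mathbb{Z}^d$ to read off the summand $H^{d-1}(\mathbb{Z}^d,H^3(K,U(1)))$, the paper never splits the extension: it trivializes the cocycle $\beta=\delta\gamma$ by the universal class $\gamma\in H^1(\mathbb{Z}^d_\text{flux},K_f)$ and locates the descended class as an exact entry of the Lyndon--Serre--Hochschild spectral sequence for $H^{d+2}(G,U(1))$. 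Your route is cleaner under the stated hypothesis, while the paper's is set up to survive the non-symmorphic generalization in Sec.~\ref{sec:nsymmag} where no splitting is available. The other difference is how the degree-3 coefficient is produced: you import the class $\xi(\gamma)\cup\omega$ wholesale, which presupposes a splitting homomorphism $\xi$; the paper instead constructs the composite class $\Omega=\omega\otimes_F\iota^*\gamma$ as the $E_2^{2,1}$ entry of the spectral sequence $H^s(K_p,H^t(K_f,U(1)))\Rightarrow H^n(K,U(1))$, which exists independently of $\xi$ but only contributes to $H^3(K,U(1))$ if the sequence degenerates at page two — a caveat (tied to compatibility conditions on the action $F$) that your argument silently bypasses and that you should at least flag as an additional hypothesis.
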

\noindent By {\it Remark \ref{rmk:gspt}}, the order $\mathcal{F}^{d+1}_{\mathbb{Z}^d}$ defines uniquely an anomaly-free SPT phase $\mathcal{G}^{d+2}_{\mathbb{Z}^d}\cong\operatorname{Bulk}\mathcal{F}^{d+1}_{\mathbb{Z}^d}$, again by the holographic bulk-boundary correspondence [\onlinecite{LiWe}].
\begin{proof}
Recall that invertible $G$-protected SPT phases $\mathcal{F}^{d+1}_{\mathbb{Z}^d}$ in $(d+1)$D are classified by $H^{d+2}(G,U(1))$. Here the full symmetry group $G$ fits in the extension sequence
\begin{equation}
\begin{tikzcd}
1 \arrow[r] & K \arrow[r]                             & G \arrow[r]                                          & \mathbb{Z}^d \arrow[r] & 1 \\
1 \arrow[r] & K_f \arrow[r] \arrow[u, hook, "\iota"] & \mathbb{Z}^d_\text{flux} \arrow[r] \arrow[u, dashed] & \mathbb{Z}^d \arrow[r] \arrow[u, Rightarrow,no head]                                & 1
\end{tikzcd}\nonumber
\end{equation}
The inclusion on the left splits $K\cong K_f\rtimes K_p$, and hence $G \cong \mathbb{Z}^d_\text{flux}\rtimes K_p$. Now the universal class $\gamma\in H^1(\mathbb{Z}^d_\text{flux},K_f)$ has the property that $\iota^*\gamma$ is the tautological 1-cocycle generating $H^1(K_f,K_f)\subset H^1(K_f,U(1))$ [\onlinecite{ET}]. As such there is assigned on each lattice edge a $K_f$-valued defect $\iota^*\gamma$ associated to a magnetic translation, which we will interpret as an invertible (1+1)D defect network.

\begin{widetext}
\begin{center}
\begin{figure}[h]
\centering
\includegraphics[width=0.7\columnwidth]{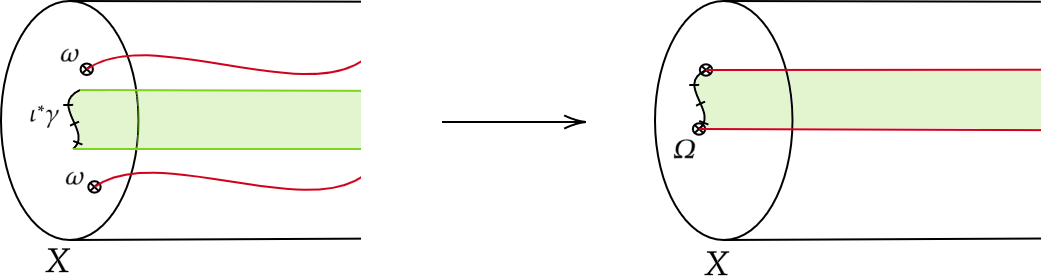}
\caption{The composite ((0+1)+1)D SPT phase obtained by combining the (0+1)D phase defined by a projective class $\omega\in H^2(K_p,U(1))$ and the (1+1)D phase defined by a magnetic translation $\iota^*\gamma \in H^1(K_f,U(1))$.}
\label{fig:composite}
\end{figure}
\end{center}
\end{widetext}

Now if $\omega\in H^2(K_p,U(1))$ is non-trivial, then we must describe a way to attach this (0+1)D phase to the boundary of the above (1+1)D phase; see Fig. \ref{fig:composite}. This is described by the action $F:K_p\rightarrow\operatorname{Aut}K_f$ built into the ((0+1)+1)D composite SPT phase. To classify such a phase, consider the Lyndon-Serre-Hochschild spectral sequence
\begin{equation}
E^{s,t}_2 = H^s(K_p,H^t(K_f,U(1)))\Rightarrow H^n(K,U(1)). \nonumber
\end{equation}
Entries in the degenerate page $E_r$ survive to the stable limit, and hence contribute to graded pieces $\operatorname{gr}_sH^n(G,U(1)) = E^{s,n-s}_r$ at degree $n$. The twisted product $\omega\otimes_F \iota^*\gamma \in H^2(K_p,H^1(K_f,U(1)))=E^{2,1}_2$ shows up at the $(s,t)=(2,1)$-th entry, which would contribute to $H^3(K,U(1))$ if degeneration occurs at page $r=2$. In general, the second order composite ((0+1)+1)D $K$-SPT phase is classified by an element $\Omega\in H^3(K,U(1))$ [\onlinecite{RL}].

We now stack these second-order phases according to translational symmetry. First, we assign $\Omega$ to an edge $\ell\in Z_1(\mathbb{Z}^d,\mathbb{Z})$ such that the 1-cocycle $\Omega_*\ell \in H_1(\mathbb{Z}^d,H^3(P,U(1)))$ is assembled via the (ungraded) coefficient map $$\mathbb{Z}\rightarrow \mathbb{Z}\cdot\Omega \in Z_1(\mathbb{Z}^d,\langle\Omega\rangle)\subset Z_1(\mathbb{Z}^d,H^3(K,U(1))).$$ We then perform homotopy descent for $(d-1)$ number of steps, which yields a class $\Omega_* T_{d-1} \in Z_{d-1}(\mathbb{Z}^d,H^3(K,U(1)))$ that lies in the image of the $(d-1)$-cycle $T_{d-1}\in Z_{d-1}(\mathbb{Z}^d,\mathbb{Z})$ under the coefficient map.

Since $H_\ast(\mathbb{Z}^d,\mathbb{Z})$ are free, we may once again use universal coefficient theorem $H_{d-1}(\mathbb{Z}^d,\mathbb{Z}) \cong H^{d-1}(\mathbb{Z}^d,\mathbb{Z})$, such that $\Omega_*T_{d-1}$ uniquely determines an element $\Omega_*T^{d-1} \in H^{d-1}(\mathbb{Z}^d,H^3(K,U(1)))$ classifying the topological order $\mathcal{F}^{d+1}_{\mathbb{Z}^d}$.

Next, we must identify $H^{d-1}(\mathbb{Z}^d,H^3(K,U(1)))$ as a subgroup of $H^{d+2}(G,U(1))$, which requires a K{\" u}nneth formula. Though the magnetic lattice $\mathbb{Z}_\text{flux}^d$ is not a split extension, we may use the universal class $\gamma$ to trivialize the extension cocycle $\beta=\delta \gamma\in Z^2(\mathbb{Z}^d,K_f)$, such that $\Omega_* T^{d-1}$ becomes exact in the Lyndon-Serre-Hochschild spectral sequence for $H^{d+2}(G,U(1))$ [\onlinecite{ET}].
\end{proof}
\noindent The above spectral sequence for $H^3(K,U(1))$ degenerates at page two only for $F\neq 0$ satisfying certain compatibility conditions [\onlinecite{AP}], [\onlinecite{DP}].

On the other hand, if we can find a section $\xi:K_f\rightarrow \langle\omega\rangle\cong\mathbb{Z}/k$ to the induced action $\operatorname{eval}_1F(\cdot):\langle\omega\rangle\cong\mathbb{Z}/k\rightarrow K_f$ evaluated at the generator $1\in K_f\cong\mathbb{Z}/m$, then it is possible to split the twisted product and kill the differential at the entry $\Omega=\omega\otimes_F\iota^*\gamma$ [\onlinecite{ET}].

\subsubsection{Flux-filling invariants}
In order to have some more control over the class $\Omega$, we impose the constraint 
\begin{eqnarray}
\delta\xi(\omega\otimes_F\gamma) &=& (\delta\omega)\cup\xi(\gamma) + \omega\cup\xi(\delta\gamma)\nonumber \\ &=& \omega\cup t_1\cup t_2
\xrightarrow{\iota^*}\omega\cup1=\delta\xi(\Omega).\nonumber
\end{eqnarray}
This condition means that the projective class $\omega\in H^2(K,U(1))$ lives at the endpoints of the 1D composite phase $\Omega$. The simplest solution to this condition is
\begin{equation}
\Omega=\Omega_\xi= [\xi(\varphi)^{-1}\cdot \omega] \cup \iota^*\gamma\in H^3(K,U(1))\nonumber
\end{equation}
where the inverse $\xi(\varphi)^{-1}$ is taken in the ring $\mathbb{Z}/k$; it is clear that splitting maps $\xi(\varphi)=1$ leave the projective class $\omega$ unmodified. Given the ($d$-component) generating cocycle $T^{d-1}\in H^{d-1}(\mathbb{Z}^d,U(1))$, the phase $\mathcal{F}^{d+1}_{\mathbb{Z}^d}$ is then classified by the crystalline LSM anomaly $(\Omega_\xi)_*T^{d-1}$. 

\begin{theorem}
\label{thm:fluxfrac}
The $K$-SPT phase $\mathcal{F}^{d+1}_{\mathbb{Z}^d}$ classified by $(\Omega_\xi)_*T^{d-1}\in H^{d-1}(\mathbb{Z}^d,H^3(K,U(1)))$ is equivalently classified by the flux-filling invariants $(\nu_\phi,\nu_\omega)$ of Eq. (\ref{eq:fluxfill}), given that the Hall conductance $\sigma_\text{Hall}\in\mathbb{Z}$ is quantized.
\end{theorem}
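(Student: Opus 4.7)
The plan is to unpack the cup product structure of $\Omega_\xi = [\xi(\varphi)^{-1}\cdot\omega]\cup \iota^*\gamma \in H^3(K,U(1))$ and show that its pairing with the translational generator $T^{d-1}$ decomposes into independent projective and magnetic contributions, which are precisely the three terms of Eq. (\ref{eq:fluxfill}). First I would set $\alpha = \xi(\varphi)^{-1}\cdot \omega \in H^2(K_p,U(1))$ and evaluate the coefficient map $(\Omega_\xi)_* = \alpha_*\cup(\iota^*\gamma)_*$ on the generator $T^{d-1} \in H^{d-1}(\mathbb{Z}^d,\mathbb{Z})$, using the fact that $\mathbb{Z}^d = \mathbb{Z}[t_1]\oplus\cdots\oplus\mathbb{Z}[t_d]$ and $T^{d-1}$ is a sum of $(d-1)$-fold cup products of the generators $t_\alpha$. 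This lets me cleanly separate the pairing with the unit cell $\Gamma$ from the pairing with the magnetic plaquette $V_M$.

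Next I would interpret each piece physically and algebraically. The projective piece $\alpha_*(T^{d-1})$ is, up to the multiplier $\xi(\varphi)^{-1}$ acting on the $\mathbb{Z}/k$-coefficients, exactly the QI LSM anomaly constructed in Sec.~\ref{sec:lsmqshe}, whose pairing with the fundamental domain reproduces the particle filling $\nu_\omega$. The magnetic piece $(\iota^*\gamma)_*$ is by construction the tautological class generating $H^1(K_f,K_f)$, and evaluating it on a plaquette $t_\alpha\cup t_\beta$ of the magnetic lattice produces exactly the flux $\varphi = e^{i2\pi\phi}$ in units of $\Phi_0$. Via Laughlin's flux-threading argument (Remark \ref{rmk:quansur}) and the quantization $\sigma_\text{Hall}\in\mathbb{Z}$, the pairing of $\iota^*\gamma$ with $V_M$ contributes a charge transport $\sigma_\text{Hall}\nu_\phi$ to the filling invariant. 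These two pairings combine additively in the Mayer-Vietoris decomposition of $T^{d-1}$ over the commensurability $L = (L/m)\cdot m$ of the magnetic domain in $\Lambda$.

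I would then close the loop by identifying the offset $\theta_\xi$. The multiplier $\xi(\varphi)^{-1}\in \mathbb{Z}/k$ records how the action $F:K_p\to\operatorname{Aut}K_f$ permutes the anyonic content of $\mathcal{C}$, which by {\bf Proposition \ref{lem:gsdcent}} is classified by an element of $\operatorname{Hom}(\mathbb{Z}/m,\mathbb{Z}/k) \cong \mathbb{Z}/\gcd(k,m)$. Under the isomorphism $\mathbb{Z}/\gcd(k,m)\hookrightarrow \mathbb{Q}/\mathbb{Z}$ used to derive Eq. (\ref{eq:fluxfill}), this is exactly $\theta_\xi$, with $[(0,0)]\mapsto 0$ corresponding to the trivial splitting in {\bf Proposition \ref{prop:fluxtriv}}. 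The vanishing of $(\Omega_\xi)_*T^{d-1}$ in $H^{d-1}(\mathbb{Z}^d,H^3(K,U(1)))$ is then equivalent to $-\sigma_\text{Hall}\nu_\phi+\nu_\omega-\theta_\xi\equiv 0\pmod{\mathbb{Z}}$, giving the bijection between anomaly classes and flux-filling invariants.

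The main obstacle I anticipate is book-keeping the sign and normalization conventions so that the cohomological cup product pairing produces the precise additive combination $-\sigma_\text{Hall}\nu_\phi+\nu_\omega-\theta_\xi$ rather than some shifted version. In particular, the quantization $\sigma_\text{Hall}\in\mathbb{Z}$ must enter as the slope of the Laughlin spectral flow between the two evaluations, which requires identifying the Chern character of $\iota^*\gamma$ under the $T$-duality map of Remark \ref{rmk:fmt}; and the equivalence, rather than mere surjection, between the anomaly class and the triple $(\nu_\phi,\nu_\omega,\xi)$ modulo $\frac{1}{\operatorname{lcm}[k,m]}\mathbb{Z}$ rests on the universal coefficient isomorphism $H_{d-1}(\mathbb{Z}^d,\mathbb{Z})\cong H^{d-1}(\mathbb{Z}^d,\mathbb{Z})$ already invoked in the proof of {\bf Lemma \ref{lem:fluxclass}}, which must be checked to commute with both the coefficient map $(\Omega_\xi)_*$ and the Diophantine classification underlying Eq. (\ref{eq:fluxfill}).
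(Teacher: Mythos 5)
Your forward direction is essentially the paper's argument in more elaborate clothing: the paper simply identifies $(\nu_\phi,\nu_\omega)$ as the roots of unity that $\Omega_\xi$ assigns to edges and their endpoints, normalizes the flux to the generator $\varphi=1\in\mathbb{Z}/m$ by translational symmetry, and then sends $\xi(\varphi)^{-1}=\xi(1)$ to a class $[(L,K)]$ of solutions of the Diophantine Eq.~(\ref{eq:fluxfrac}), which was already shown (in the discussion preceding Eq.~(\ref{eq:fluxfill})) to be in bijection with the invariants $-\sigma_\text{Hall}\nu_\phi+\nu_\omega-\theta_\xi \bmod \mathbb{Z}$. Your cup-product decomposition $\alpha_*\cup(\iota^*\gamma)_*$, the Mayer--Vietoris split over the magnetic domain, and the appeal to the Chern character and $T$-duality are not needed for this and introduce claims (e.g.\ that $(\Omega_\xi)_*$ literally factors as a cup product of coefficient maps) that you would have to justify separately; but the physical identification of the electric and magnetic sectors with $\nu_\omega$ and $\sigma_\text{Hall}\nu_\phi$, and of $\xi(\varphi)^{-1}$ with $\theta_\xi$ via {\bf Proposition~\ref{lem:gsdcent}}, is consistent with what the paper does.

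The genuine gap is in the converse, which is where the theorem's hypothesis $\sigma_\text{Hall}\in\mathbb{Z}$ actually does its work. You locate the ``equivalence rather than mere surjection'' in a compatibility check between the universal coefficient isomorphism and the Diophantine classification, but that isomorphism (used in {\bf Lemma~\ref{lem:fluxclass}} only to pass from $H_{d-1}(\mathbb{Z}^d,\cdot)$ to $H^{d-1}(\mathbb{Z}^d,\cdot)$) cannot recover the section $\xi$ from a given pair $(\nu_\phi,\nu_\omega)$. The paper's mechanism is arithmetic: because $\sigma_\text{Hall}$ is an integer and $\nu_\phi,\nu_\omega$ have $m$- and $k$-torsion modulo $\mathbb{Z}$, the fractional offset $\theta$ is forced to lie in $\frac{1}{\operatorname{lcm}[k,m]}\mathbb{Z}\subset\mathbb{Q}/\mathbb{Z}$; the class $[\theta]\in\mathbb{Z}/\operatorname{lcm}[k,m]$ then reproduces the inhomogeneous part $\xi(1)\in\mathbb{Z}/\operatorname{gcd}(k,m)\cong\operatorname{Hom}(\mathbb{Z}/m,\mathbb{Z}/k)$ of the Diophantine equation, and hence the section $\xi$ and the class $\Omega_\xi$ itself. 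Without this step you have only a map from anomaly classes onto filling invariants, not the claimed equivalence of classifications, and your version never uses the integrality of $\sigma_\text{Hall}$ where it is indispensable.
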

\begin{proof}
In the translationally-invaraint case, the filling invariants $(\nu_\phi,\nu_\omega)$ are simply identified as the roots of unity that are assigned to edges (and their endpoints) of $\Lambda$ by the second order phase $\Omega_\xi$ [\onlinecite{RL}]. 

Furthermore, translational symemtry also allows us to WLOG normalize the flux $\varphi=1\in\mathbb{Z}/m$ on each face/fundamental domain to the generator, whence we may associate the factor $\xi(\varphi)^{-1}=\xi(1)$ with a class $[(L,K)]$ of solutions to the Diophantine Eq. (\ref{eq:fluxfrac}). This in turn defines a class $(\nu_\phi,\nu_\omega)$ of flux-filling invariants characterized by Eq. (\ref{eq:fluxfill}). 

Conversely suppose we are given the indices $\nu_\phi,\nu$ with $m,k$-torsion modulo $\mathbb{Z}$, respectively, and the equation
\begin{equation}
-\sigma_\text{Hall}\nu_\phi + \nu_\omega - \theta \mod\mathbb{Z}.\nonumber
\end{equation}
If $\sigma_\text{Hall}\in\mathbb{Z}$, then the denominator of the fraction $\theta\in\mathbb{Q}/\mathbb{Z}$ is determined at most up to a multiple of $\operatorname{lcm}[k,m]$. 

Denote $[\theta]\in \frac{1}{\operatorname{lcm}[k,m]}\mathbb{Z}\cong\mathbb{Z}/\operatorname{lcm}[k,m]$ this particular fractional part, the above equation is then equivalent to the Diophantine Eq. (\ref{eq:fluxfrac}), whose inhomogeneous part $\xi_{[\theta]}=\xi(1)\in\mathbb{Z}/\frac{km}{\operatorname{lcm}[k,m]} = \mathbb{Z}/\operatorname{gcd}(k,m)$ determines a section $\xi:K_f\rightarrow \mathbb{Z}/k$.
\end{proof}
\noindent In case $\xi=0$, the class $\Omega_0$ yields the filling invariant $-\sigma_\text{Hall}\nu_\phi +\nu_\omega \mod\mathbb{Z}$, whose anomaly-free condition Eq. (\ref{eq:fluxhall}) describes the IQH effect. If $\xi\neq0$, we reproduce the QH filling constraint formula [\onlinecite{BBRFr}], [\onlinecite{LRO}]
\begin{equation}
-\sigma_\text{Hall}\nu_\phi + \nu_\omega = \theta_\xi \mod\mathbb{Z}\nonumber
\end{equation}
which describes the FQH effect. 

More precisely, it was demonstrated [\onlinecite{LRO}] that the minimal torsion $s$ modulo $\mathbb{Z}$ of $\theta_\xi\in\mathbb{Q}/\mathbb{Z}$ counts the statistics of fluxon-anyon braiding, and hence enhances the GSD. The anomaly-free condition then accounts for the quantization of $\sigma_\text{Hall}$ at {\it fractional} $\nu_\omega\in\frac{1}{s}\mathbb{Z}$ when $\nu_\phi\in\mathbb{Z}$ is integral. If $\xi=1$ is splitting, then the QH system has maximal GSD counted by $\operatorname{lcm}[k,m]$: every possible combination of the $k$-anyons and the $m$-fluxons leads to a disctinct fusion quasiparticle. 

This perspective corroborates with the composite fermion picture, at least combinatorially: the odd $s=3$ FQH state, in particular, arises from binding three fluxons to a single electron within a unit cell, which leads to the fractionalization $\nu_\omega\in\frac{1}{3}\mathbb{Z}$ as the size of the cyclotron orbit [\onlinecite{Jac}] is tripled by the three bound fluxons.

Our next goal is to lift {\bf Theorem \ref{thm:fluxfrac}} to the case with a possibly non-symmorphic space group $P$, and similarly classify the topological order $\mathcal{F}^{d+2}_P$ via $H^{d-1}(P,H^3(K,U(1)))\subset H^{d+2}(G,U(1))$. This is necessary for reproducing the filling constraints that have been derived previously [\onlinecite{BBRFr}], [\onlinecite{WPVZ}], [\onlinecite{LRO}].

\section{Non-Symmorphic Magnetic Algebra}\label{sec:nsymmag}
Recall the obstruction cocycle $\beta\in Z^2(\mathbb{Z}^d,K_f)$ defines the magnetic translations $\mathbb{Z}_\text{flux}^d$ by
\begin{eqnarray}
\beta(t,s) &=& \delta m(t,s)=[\tilde{t},\tilde{s}]\nonumber\\
&=&M(t+s)-M(s)-M(t)\in K_f,\nonumber
\end{eqnarray}
where $t,s\in\mathbb{Z}^d$ and $M=\tilde{\cdot}:\mathbb{Z}^d\rightarrow \mathbb{Z}_\text{flux}^d$ is a set-theoretic section; see Eq. (\ref{eq:obst}). If $P\cong \mathbb{Z}^d\rtimes P'$ is symmorphic, then by Bieberbach's theorem we may extend the domain of $b$ to $P$ by defining [\onlinecite{SSG}], [\onlinecite{GT}], [\onlinecite{GT1}]
\begin{eqnarray}
\beta(p,p')& =&\delta M(p,p')= [\tilde{p},\tilde{p'}] \nonumber\\
&=&M{(t+g\cdot t',gg')}M{(-g'^{-1}\cdot t',g'^{-1})}\nonumber \\
&\quad& \times M{(-g^{-1}\cdot t,g^{-1})} \nonumber
\end{eqnarray}
with $p = (t,g)\in \mathbb{Z}^d\rtimes P'$, such that $b\in Z^2(\mathbb{Z}^d\rtimes P',K_f)$ is again a 2-cocycle.

When $P$ is non-symmorphic, the extension class $\nu\in H^2(P',\mathbb{Z}^d)$ obstructs the lifting $S=\hat{\cdot}:P'\rightarrow P$; see Eq. (\ref{eq:obst}). As such the quantity
\begin{equation}
[\tilde{p},\tilde{p}']=\widetilde{pp'}\tilde{p'}^{-1}\tilde{p}^{-1},\qquad p,p'\in P' \nonumber
\end{equation}
does not admit a well-defined extension into $P$. The ambiguity here is in the first factor, where the product $pp'$ can either lift to $\widehat{pp'}$ or $\hat{p}\hat{p}'$. These quantities differ by a central element $\nu(p,p')\in\mathbb{Z}^d$, and we may supplement this in the above by defining
\begin{equation}
\beta_\nu(p,p') = \widetilde{\nu(p,p')}\cdot\widetilde{S(p)S(p')}\widetilde{S(p')}^{-1}\widetilde{S(p)}^{-1}.\nonumber
\end{equation}
This casts $\beta_\nu\in Z^{2;\nu}(P',K_f)$ as a {\it twisted} 2-cocycle. Given the choice of sections $M,S$, it forms a model for cocycles on $P$ --- the class $\beta_\nu\in H^{2;\nu}(P',K_f)$ does not depend on the choice of these sections [\onlinecite{GT}], [\onlinecite{GT1}].

Now similar to how $\beta\in H^2(\mathbb{Z}^d,K_f)$ classifies the magnetic translations $\mathbb{Z}_\text{flux}^d$, the twisted version $\beta_\nu\in H^{2;\nu}(P',K_f)$ classifies the {\bf non-symmorphic magnetic space group} $\mathcal{P}$ sitting in the extension sequence 
\begin{equation}
1\rightarrow K_f\rightarrow \mathcal{P}\rightarrow P\rightarrow 1. \nonumber
\end{equation}
We may define the {\bf area cocycle} $c_\nu \in Z^{2;\nu}(P',\mathbb{R})$ [\onlinecite{MM}], [\onlinecite{MT}], which computes the area of the geodesic polygon spanned by $p,p'\in P$, such that
\begin{equation}
\beta_\nu = \varphi \cdot c_\nu \in H^{2;\nu}(P',K_f). \label{eq:nsymmagalg}
\end{equation}
The presence of the non-symmorphic twist $\nu$ implies that, if the fundamental domain $\Gamma$ is normalized to have integral area, $c_\nu$ can be $\mathbb{Q}$-valued. As such, the flux $\varphi$ can be enhanced purely by nature of the lattice.

We once again trivialize $\beta_\nu=\delta\gamma_\nu$ for a cocycle $\gamma_\nu\in Z^1(\mathcal{P},K_f)$, and suppose we can find a section $\xi:K_f\rightarrow\mathbb{Z}/k$ that splits the twisted product $\omega\otimes_F \gamma_\nu$. The corresponding condition $\delta\xi(\omega\otimes_F \gamma_\nu) = \omega\cup c_\nu$ then admits a solution
\begin{equation}
\tilde\Omega_\xi =[\xi(\varphi)^{-1}\cdot\omega]\cup \iota^*\gamma_\nu\in H^{3;\nu}(K,U(1))\label{eq:nsmagcl}
\end{equation}
as the building blocks for the crystalline LSM anomaly; the condition for $\xi$ to be splitting becomes $\delta \xi(\gamma_\nu)=c_\nu$. The subtle point here is that $\varphi$ is the flux associated to the {\it unit cell} $BP$, which can have fractional area, so we cannot directly copy the proof of {\bf Theorem \ref{thm:fluxfrac}}.

Now in order to compute the crystalline LSM anomaly, we make use of codimension-1 hyperplane cycles $h_\nu \in Z_{d-1}(P,\mathbb{Z}_o)$ to perform the homotopy descent of the class $\tilde\Omega_\xi$ Eq. (\ref{eq:nsmagcl}), such that the topological order $\mathcal{F}^{d+1}_P$ is described by
\begin{equation}
(\tilde{\Omega}_\xi)_*h_\nu\in H_{d-1}(P,H^{3;\nu}(K,U(1))) \nonumber
\end{equation}
in analogy with {\bf Lemma \ref{lem:fluxclass}}. The "Poincar{\' e} dual" of this description, $H^1(P,H^{3;\nu}(K,U(1)))$ [\onlinecite{RL}], prescribes an explicit procedure for constructing $\mathcal{F}^{d+1}_P$ via {\it non-symmorphic} decorated domain walls.

\subsection{Non-Symmorphic Decorated Domain Walls}
Recall that a first order (0+1)D $K$-SPT phase is classified by a projective class $\omega\in H^2(K,U(1))$. In the decorated domain wall construction, each lattice site in $\Lambda$ is treated as a 0D domain wall decorated by $\omega$, then stacked according to the space group $P$. The non-symmorphic effects are then captured by the lattice class $e_\nu \in H^d(P,\mathbb{Z}_o)$ in Appendix \ref{sec:lhs}. 

For the second order composite ((0+1)+1)D phase classified by $\tilde\Omega_\xi\in H^3(K,U(1))$, we instead consider codimension-1 decorated domain walls [\onlinecite{RL}] on $\Lambda$. If $P\cong \mathbb{Z}^d\rtimes P'$ is symmorphic, and $\Gamma$ is the fundamental domain, then the class $\Omega_\xi \in H^3(K,U(1))$ decorates intersections $h_\nu\cap \Gamma^{P'}$ of domain walls $h_\nu$ and the fixed-point set $\Gamma^{P'}$ with linear (unitary) representations $H^1(P',\langle\Omega_\xi\rangle)$ of $P'$, in order to ensure $\mathcal{F}^{2+1}_{P'}$ defines a gapped phase on $\Gamma$ [\onlinecite{RL}]. 

If $P$ is non-symmorphic, however, we cannot make use of the fundamental domain. In the following, we shall leverage our knowledge of the extension class $\nu\in H^2(P',\Lambda)$ and compute twisted representations $H^{1;\nu}(P',\langle\tilde\Omega_\xi\rangle)$ as a model for $H^1(P,\langle\tilde\Omega_\xi\rangle)$. This describes the {\bf non-symmorphic decorated domain wall} construction that constitutes the phase $\mathcal{F}^{d+1}_P$.
\medskip

\subsubsection{(2+1)D Glide-magnetic order $\mathcal{F}^{2+1}_{\mathtt{pg}}$}
Consider first the 2D glide lattice $P=\mathtt{pg}$ with the non-trivial extension class $\nu\in H^2(C_2,\Lambda)\cong\mathbb{Z}/2$, and let $\Lambda_1=\Lambda_1^{C_2}\subset\Lambda$ denote the glide axis fixed under the point group $P'=C_2$. If $\ell\cap \Lambda_1\neq \emptyset$, then the class $(\tilde\Omega_\xi)_*\ell$ assigned to this edge forms a twisted representation in $H^{1;\nu}(C_2,\langle\tilde\Omega_\xi\rangle)$ of $C_2$. 

Through the section $\hat{\cdot}=S:C_2\rightarrow\mathtt{pg}$, such twisted representations are lifted from an ordinary one $W\in H^1( C_2,\langle\tilde\Omega_\xi\rangle)$ via the condition $W_{\hat{g}^2}=W_{\hat{g}}^2=W_{t_1}W_{\widehat{g^2}}=W_{t_1}$ is satisfied, where $g\in C_2$ is the reflection generator and $t_1\in\Lambda$ generates the glide axis $\Lambda_1=\mathbb{Z}[t_1]$. As such, a glide reflection $W_{\hat{g}}$ carries half the index carried by a translation $W_{t_1}$; in other words, a twisted representation in $H^{1;\nu}(C_2,\langle\tilde\Omega_\xi\rangle)$ is determined by a $C_2=\mathbb{Z}/2$-extension $\operatorname{Ext}(\langle t_1\rangle,\mathbb{Z}/2)$ of the subgroup $\langle t_1\rangle\subset H^1(\mathbb{Z}^2,\langle\tilde\Omega_\xi\rangle)$ generated by $t_1$.

Consider an edge $\ell^\perp$ intersecting $\Lambda_1$ only at a site $x_\ast\in\ell^\perp\cap \Lambda_1$. Recall $\tilde\Omega_\xi$ has the projective class $\omega$ at its endpoints, hence this 1D domain wall $\ell^\perp$ is in fact decorated with a $\langle\omega\rangle=\mathbb{Z}/k$-valued twisted representation. As $\langle t_1\rangle \cong\mathbb{Z}/k$, the glide reflection $W_{\hat{g}}$ is characterized by $\operatorname{Ext}(\mathbb{Z}/k,\mathbb{Z}/2) \cong \mathbb{Z}/\operatorname{gcd}(k,2)$; see Fig. \ref{fig:glidemag}. This suffices to determine the index of the glide operator $W_{\hat{g}}$, and hence the twisted representation $W\in H^{1;\nu}(C_2,\mathbb{Z}/k)$ assigned to $\ell^\perp$. Here we see that, if $k$ is odd, then the projective class $\omega$ cannot be split by the glide, and a weak glide-symmetric phase manifests.

\begin{widetext}
\begin{center}
\begin{figure}[h]
\centering
\includegraphics[width=0.5\columnwidth]{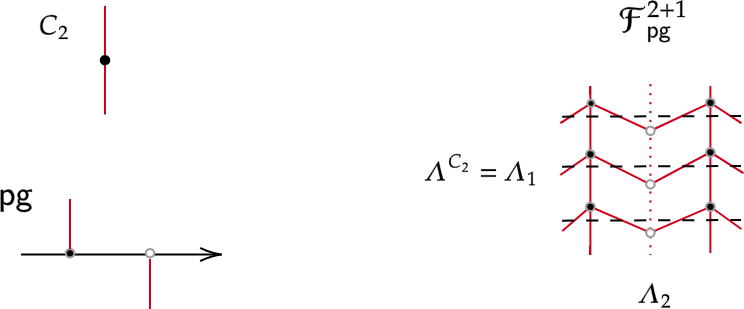}
\caption{The glide-magnetic topological order $\mathcal{F}^{2+1}_\mathtt{pg}$. {\bf Left}: The 1D domain walls (red) are assigned ordinary representations $H^{1}(C_2,\langle\tilde\Omega_\xi\rangle)$, while the lattice points (black dot) are assigned $H^{1}(C_2,\langle\omega\rangle)$. Due to glide symmetry, these representations are split in two (black and white dots with grey outline), characterized by a $\mathbb{Z}/2$-extension. {\bf Right}: The decorated 1D domain walls defining the glide-magnetic topological order $\mathcal{F}^{2+1}_\mathtt{pg}$.}
\label{fig:glidemag}
\end{figure}
\end{center}
\end{widetext}

Now consider an edge $\ell^{||}\subset \Lambda_1$. The structure of $\tilde\Omega_\xi$ allows us to split the extension classes \begin{eqnarray}
\operatorname{Ext}(\langle t_1\rangle,\mathbb{Z}/2) &\cong& \operatorname{Ext}(\mathbb{Z}/k, \mathbb{Z}/2)\oplus \operatorname{Ext}(\mathbb{Z}/m,\mathbb{Z}/2)\nonumber\\
&\cong&\mathbb{Z}/\operatorname{gcd}(k,2)\oplus \mathbb{Z}/\operatorname{gcd}(m,2),\nonumber
\end{eqnarray} the components of which determine a twisted representation in the "electric" $W_e\in H^{1;\nu}(C_2,\mathbb{Z}/k)$ and the "magnetic" sectors $W_m\in H^{1;\nu}(C_2,\mathbb{Z}/m)$, respectively. As the endpoints $x_\ast$ of $\ell^{||}$ coincide with those of $\ell^{\perp}$, the electric-sector decoration on $\ell^{||}$ --- as an element of $H^{1;\nu}(C_2,\mathbb{Z}/k)$ --- must coincide with that on $\ell^{\perp}$. This allows us to build the (2+1)D glide-magnetic topological order $\mathcal{F}^{2+1}_\mathtt{pg}$ as shown in Fig. \ref{fig:glidemag}.

To understand the fluxes $\varphi$ that live on the glide lattice, we compute $c_\nu$. From any lattice site $0\in \Lambda$, the points $0,\hat{g}\cdot 0,\hat{g}^2\cdot 0$ defined by the glide element $\hat{g}\in\mathtt{pg}$ span a convex triangle. If we normalize the quadrangle spanned by $t_1,t_2$ to unit area, then this triangle has area $c_\nu(\hat{g},\hat{g}) = \frac{1}{2}\cdot 1 = \frac{1}{2}$. To ensure $\beta_\nu$ remains $K_f$-valued, the flux $\varphi \in 2\cdot K_f$ must be doubled on each fundamental domain.

Now as there are no subgroups of $C_2$ that lift faithfully into $\mathtt{pg}$, the indices of the twisted representations $W_e,W_m$ are determined purely by the extension classes $f_k\in\operatorname{Ext}(\mathbb{Z}/k,\mathbb{Z}/2),f_m\in\operatorname{Ext}(\mathbb{Z}/m,\mathbb{Z}/2)$. If $\eta_k,\eta_m\in\frac{1}{2}\mathbb{Z}$ denote their 2-torsion indices, we have
\begin{eqnarray}
W_e &=& f_k=e^{i\frac{2\pi}{k} \eta_k}\in H^{1;\nu}(C_2,\mathbb{Z}/k),\nonumber\\
W_m &=& f_m=e^{i\frac{2\pi}{m}\eta_m}\in H^{1;\nu}(C_2,\mathbb{Z}/m). \nonumber
\end{eqnarray}
As mentioned previously, these indices $\eta_k,\eta_m$ fractionalize those $\nu_\omega\in \frac{1}{k}\mathbb{Z},\nu_\phi\in\frac{1}{m}\mathbb{Z}$ associated to translation symmetry $\mathbb{Z}^2$. If the extension classes $f_k,f_m$ are non-trivial, then $\nu_\omega,\nu_\phi$ must be doubled.

\subsubsection{(3+1)D Screw-magnetic order $\mathcal{F}^{3+1}_{P_{A;\nu}}$}
Let us now move on to the 3D case. Consider the screw lattice $\Lambda$ with the axial rotational point group $P'=1\times C_n$; recall the $A_{\frac{n}{A}}$-screw element in $P=P_{A;\nu}$ is characterized by a divisor $A$ of $n$ for which $\operatorname{gcd}(A,\frac{n}{A})=1$, and a choice of a non-trivial extension class $\nu\in H^2(P',\Lambda)\cong \mathbb{Z}/\frac{n}{A}$. Let $\Lambda_1 = \Lambda^{1\times C_n}$ denote the fixed point axis along the screw element. We adopt the same principle as above: each 2D domain wall $f\subset \Lambda$ is decorated with a twisted representations $H^{1;\nu}(P',\langle\tilde\Omega_\xi\rangle)$ determined by the intersection $f\cap \Lambda_1$.

To construct a twisted representation, we impose the condition $W_{\hat{g}^n}=W_{\hat{g}}^n= W_{|\nu| t_1}W_{\widehat{g^n}} = W_{|\nu| t_1}$ on an ordinary representation $W\in H^1(C_n,\langle\tilde\Omega_\xi\rangle)$, where $|\nu|$ is the order of $\nu$ in $\mathbb{Z}/\frac{n}{A}$. Now by definition, the subgroup $C_A\subset C_n$ lifts faithfully into $P_{A;\nu}$, hence the section $\hat{\cdot}=S:C_n\rightarrow P_{A;\nu}$ descends to a group homomorphism on $C_A$ for which $S^*W$ defines an {\it untwisted} representation. As such, the above condition reads $W_{[\hat{g}]}^{\frac{n}{A}} = W_{\nu t_1}$ on each coset $[g]\in C_n/C_A$, determining a $C_n/C_A=\mathbb{Z}/\frac{n}{A}$-extension of $\nu \cdot\langle t_1\rangle\subset H^1(\mathbb{Z}^3,\langle\tilde\Omega_\xi\rangle)$ characterized by $\operatorname{Ext}(\nu\cdot\langle t_1\rangle,\mathbb{Z}/\frac{n}{A})$. A screw rotation $W_{\hat{g}}$ must carry an index $\frac{A}{n}\in\mathbb{Q}$ times that carried by a $\nu$-fold translation.

For definiteness, take $\nu=1$ to be the generator. Let $h^\perp$ dentote a 2D hyperplane (the dotted blue triangles in Fig. \ref{fig:screwflux}) such that $h^\perp\cap\Lambda_1$ is a single point. Once again, as the projective class $\omega$ lies at the endpoints of $\tilde\Omega_\xi$, we assign to the 2D domain wall $h^\perp$ a $\langle\omega\rangle = \mathbb{Z}/k$-valued twisted representation on the intersection $x_\ast = h^\perp\cap \Lambda_1$. These are constructed from an ordinary representation $W\in H^1(C_A,\mathbb{Z}/k) \cong \mathbb{Z}/\operatorname{gcd}(A,k)$, then associating an extension class $\operatorname{Ext}(\mathbb{Z}/k,\mathbb{Z}/\frac{n}{A})\cong\mathbb{Z}/\operatorname{gcd}(k,\frac{n}{A})$ to the quotient $C_n/C_A=\mathbb{Z}/\frac{n}{A}$. Here we see that, if $\operatorname{gcd}(\frac{n}{A},k)=1$, then the linear $C_n$-representation $W\in \mathbb{Z}/\operatorname{gcd}(n,k)$ cannot be split by the screw.

\begin{widetext}
\begin{center}
\begin{figure}[h]
\centering
\includegraphics[width=0.5\columnwidth]{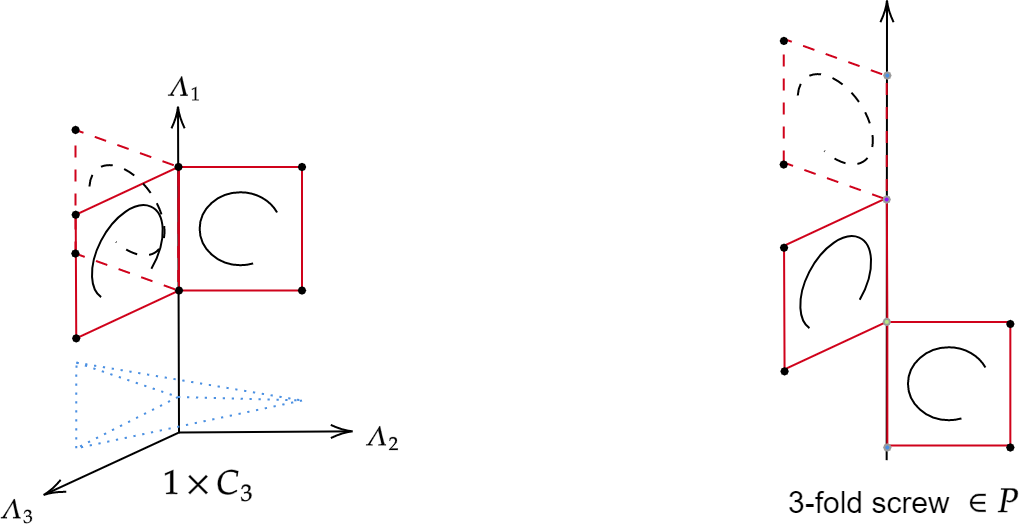}
\caption{The screw-magnetic topological order $\mathcal{F}^{3+1}_{P_{1;1}}$ of order-$3$, with $P'=1\times C_3$, $A=1$ and $\nu=1$. {\bf Left}: The 2D decorated domain walls $h^{||}$ in red are assigned regular representations $H^1(C_n,\langle\tilde\Omega_\xi\rangle)$, while those in blue $h^\perp$ are assigned $H^1(C_n,\langle\omega\rangle)$. {\bf Right}:  Due to a 3-fold screw, these representations are split (blue, green and purple dots with grey outline) according to an extension class in $\operatorname{Ext}(\langle t_1\rangle,\mathbb{Z}/3)$.}
\label{fig:screwflux}
\end{figure}
\end{center}
\end{widetext}

Next, let $h^{||}$ denote a 2D hyperplane colinear with the screw axis $\Lambda_1$, such that they intersect at a 1D edge $\ell= h^{||}\cap \Lambda_1$. We then assign a twisted representation $H^{1;\nu}(C_n,\langle\tilde\Omega_\xi\rangle)$ to it. By the structure of $\tilde\Omega_\xi$, we may split both the untwisted representations on $C_A$
\begin{eqnarray}
H^1(C_A,\langle\tilde\Omega_\xi\rangle)&\cong& H^1(C_A,\mathbb{Z}/k)\oplus H^1(C_A,\mathbb{Z}/m) \nonumber\\
&=& \mathbb{Z}/\operatorname{gcd}(A,k)\oplus\mathbb{Z}/\operatorname{gcd}(A,m), \nonumber
\end{eqnarray}
as well as the extension classes on the quotient $C_n/C_A$ 
\begin{eqnarray}
\operatorname{Ext}(\langle t_1\rangle,\mathbb{Z}/\frac{n}{A}) &\cong& \operatorname{Ext}(\mathbb{Z}/k,\mathbb{Z}/\frac{n}{A})\oplus \operatorname{Ext}(\mathbb{Z}/m,\mathbb{Z}/\frac{n}{A})\nonumber\\
&=&\mathbb{Z}/\operatorname{gcd}(k,\frac{n}{A})\oplus \mathbb{Z}/\operatorname{gcd}(m,\frac{n}{A}),\nonumber
\end{eqnarray} 
into electric and magnetic sectors, decorating the endpoints and interior of $\ell=h^{||}\cap\Lambda_1$, respectively. Furthermore, the data in the electric sector must match those assigned on $h^\perp$. 

Given screw-rotations $\hat{g},\hat{g}'$ for which $[g]\neq [g']=[g]+1\in C_n/C_A$ lie in distinct but adjascent cosets, the non-symmorphic area cocycle $c_\nu$ computes the area $c_\nu(\hat{g},\hat{g}')=\frac{A}{n}\cdot 1$ of the wedge spanned by $0,\hat{g}\cdot 0,\hat{g}'\cdot 0$, where we have normalized each $n$-gon perpendicular to $\Lambda_1$ to have unit area. We must then have $\varphi\in\frac{n}{A}\cdot K_f$ to ensure that $\beta_\nu$ is $K_f$-valued. This allows us to build the (3+1)D screw-magnetic order $\mathcal{F}^{3+1}_{P_{A;\nu}}$ as shown in Fig. \ref{fig:screwflux}.

The indices of the untwisted representations on $C_A$ are determined by
\begin{equation}
\mu \in \frac{k}{\operatorname{gcd}(A,k)}\mathbb{Z},\qquad \mu_\phi \in \frac{m}{\operatorname{gcd}(A,m)}\mathbb{Z}.\nonumber
\end{equation}
Pick a pair of extensions $f_k\in \operatorname{Ext}(\mathbb{Z}/\frac{n}{A},\mathbb{Z}/k),f_m\in\operatorname{Ext}(\mathbb{Z}/\frac{n}{A},\mathbb{Z}/m)$, and let $\eta_k,\eta_m$ denote their roots of unity. We lift the above indices to $P_{A;\nu}$ by
\begin{eqnarray}
W_e(\hat{g}) &=& f_k([g])\cdot \exp i\frac{2\pi}{k} \mu\nonumber\\
&\qquad&= e^{i\frac{2\pi}{k}(\eta_k([g]) + \mu)}\equiv e^{i2\pi \bar{\mu}(\eta_k)}\in\mathbb{Z}/k ,\nonumber \\ 
W_m(\hat{g}) &=& f_m([g])\cdot \exp i\frac{2\pi}{m} \mu_\phi\nonumber\\
&\qquad&=e^{i\frac{2\pi}{m}(\eta_m([g])+\mu_\phi)}\equiv e^{i2\pi \bar{\mu}_\phi(\eta_m)}\in\mathbb{Z}/m,\nonumber
\end{eqnarray}
where $g\in C_n$ is the generator; this suffices to determine $W_e\in H^{1;\nu}(P_{A;\nu},\mathbb{Z}/k)$ and $W_m\in H^{1;\nu}(P_{A;\nu},\mathbb{Z}/m)$. By definition of $f_k,f_m$, the indices $\eta_k,\eta_m$ have $\frac{n}{A}$-torsion $\frac{n}{A}\cdot \eta_k=0,\frac{n}{A}\cdot\eta_m=0$ modulo $\mathbb{Z}$.

\subsubsection{(3+1)D Glide-screw-magnetic order $\mathcal{F}_\mathtt{gs}^{3+1}$}
Recall that a glide-screw is an order-2 screw coupled to a glide reflection. The unique glide-screw lattice $\Lambda$ is characterized by the non-trivial extension class $\nu = (1,1)\in H^2(P',\Lambda) \cong\mathbb{Z}/2\oplus\mathbb{Z}/2$, where $P' = C_2\times C_6$ is the chiralrotational point group of order 12. The corresponding space group $P=\mathtt{gs}$ is determined by the divisor $A'=3$ of $n=6$ for which $\frac{n}{A'}=2$ is even. 

Similar to the above, each decorated 2D domain wall $f\subset \Lambda$ is assigned a twisted representation $H^{1;\nu}(C_2\times C_6,\langle\tilde\Omega_\xi\rangle)$ according to its intersection $f\cap \Lambda_1$ with the fixed point set $\Lambda_1=\Lambda^{C_2\times C_n}$, which we take to be the glide-screw axis. 

Starting from an ordinary representation $W\in H^1(C_2\times C_6,\langle\tilde\Omega_\xi\rangle)$, we impose the constraint $W_{\widehat{(r,[g])}}^2 = W_{t_1}W_{\widehat{(r,[g])^2}} = W_{t_1}$, where $(r,g)\in  C_2\times C_6$ is the generator and $[g] \in C_6/C_3= C_2$ is the non-trivial coset. This defines a double-extension characterized by $\operatorname{Ext}(\mathbb{Z}/2,\langle t_1\rangle)$, with $\mathbb{Z}/2 \subset C_2\times C_6/C_3$ the diagonal subgroup. Due to the structure of $\tilde\Omega_\xi$, we may once again split the representations into electric and magnetic sectors:
\begin{eqnarray}
W_e(\widehat{r,g}) &=& f_k(r,[g])\cdot \exp i\frac{2\pi}{k} \mu\nonumber\\
&\qquad&= e^{i\frac{2\pi}{k}(\eta_k([g]) + \mu)}\equiv e^{i\frac{2\pi}{k} \bar{\mu}(\eta_k)}\in\mathbb{Z}/k ,\nonumber \\ 
W_m(\widehat{r,g}) &=& f_m(r,[g])\cdot \exp i\frac{2\pi}{m} \mu_\phi\nonumber \\
&\qquad& =e^{i\frac{2\pi}{m}(\eta_m([g])+\mu)}\equiv e^{i\frac{2\pi}{m} \bar{\mu}_\phi(\eta_m)}\in\mathbb{Z}/m,\nonumber
\end{eqnarray}
from a pair of extensions $f_k\in \operatorname{Ext}(\mathbb{Z}/2,\mathbb{Z}/k),f_m\in\operatorname{Ext}(\mathbb{Z}/2,\mathbb{Z}/m)$ and their indices
\begin{equation}
\mu \in \frac{k}{\operatorname{gcd}(3,k)}\mathbb{Z},\qquad \mu_\phi\in\frac{m}{\operatorname{gcd}(3,m)}\mathbb{Z}.\nonumber
\end{equation}
The flux $\varphi$ can similarly be computed to lie in $2\cdot K_f$ by previous arguments.

However, here the glide-screw operator $W_{\widehat{(r,[g])}}=W_e(\widehat{r,g}) \cdot W_m(\widehat{r,g})$ must be antiunitary, hence the indices $\bar{\mu}(\eta)=\bar{\mu}(\eta)+2,\bar\mu_\phi(\eta) =\bar{\mu}_\phi(\eta)+2$ are only defined mod-2 [\onlinecite{RL}]. They are determined by the torsion group in the exact sequence
\begin{equation}
0 \rightarrow \operatorname{Tor}(\mathbb{Z}/2,H) \rightarrow H\xrightarrow{2\cdot} H\rightarrow \mathbb{Z}/2\otimes H\rightarrow 0,\nonumber
\end{equation}
where $H = H^1(C_3,\langle\tilde\Omega_\xi\rangle) \otimes \operatorname{Ext}(\langle t_1\rangle,\mathbb{Z}/2)$. Since $\operatorname{gcd}(2,3)=1$, we have expectedly 
\begin{eqnarray}
\operatorname{Tor}(\mathbb{Z}/2,H_e) = \mathbb{Z}/\operatorname{gcd}(k,2),\nonumber\\ \operatorname{Tor}(\mathbb{Z}/2,H_m) = \mathbb{Z}/\operatorname{gcd}(m,2)\nonumber
\end{eqnarray}
in each electric/magnetic sector $H=H_e\oplus H_m$. This mod-2 reduction means that, in general, distinct tuples of linear $C_3$-representations and extension classes can give rise to the same index $\bar\mu,\bar\mu_\phi$. 

\subsection{Non-Symmorphic Flux-Filling Invariants}
One may notice that the common theme in the above is the fractionalization of the indices $(\nu_\phi^1,\nu^1_\omega)$ associated to translational symmetry. These indices classify the representations $H^1(\mathbb{Z}^d,\langle\tilde\Omega_\xi\rangle)$, and are related to the flux-filling factors $(\nu_\phi,\nu_\omega)$ determined by the Diophantine equation Eq. (\ref{eq:fluxfrac}). More precisely, if we identify $(\nu_\phi,\nu_\omega)$ with the indices of a representation $U=T^{\cup d}\in H^d(\mathbb{Z}^d,\langle\tilde\Omega_\xi\rangle)$ for $T\in H^1(\mathbb{Z}^d,\langle\tilde\Omega_\xi\rangle)$, we may write
\begin{equation}
\nu_\omega = \frac{1}{d}\sum_{\alpha\leq d}\nu^\alpha_\omega,\qquad \nu_\phi= \frac{1}{d}\sum_{\alpha\leq d}\nu_\phi^\alpha,\nonumber
\end{equation}
where the indices $(\nu_\phi^\alpha,\nu^\alpha_\omega)$ classify the representation $T = (T_1,\dots,T_d) \in H^1(\mathbb{Z}^d,\langle\tilde\Omega_\xi\rangle)$ for $1\leq \alpha\leq d$. Now as the non-symmorphic effects of interest throughout the above is concentrated along only one axis $\alpha=1$, the order of the extension classes $f_k,f_m$ determines the fractionalization of $(\nu^1_\phi,\nu^1_\omega)$, which in turn determines that of the flux-filling invariant $(\nu_\phi,\nu_\omega)$.

\subsubsection{Glide-magnetic flux-filling invariants}
In the 2D glide case $P=\mathtt{pg}$, the extension classes $f_k\in\operatorname{Ext}(\mathbb{Z}/k,\mathbb{Z}/2),f_m\in\operatorname{Ext}(\mathbb{Z}/m,\mathbb{Z}/2)$ have order-2, and as such
\begin{equation}
\nu^1_\phi,\nu_\phi \in 2\cdot \frac{1}{m}\mathbb{Z},\qquad \nu^1_\omega,\nu_\omega \in 2\cdot \frac{1}{k}\mathbb{Z}.\nonumber
\end{equation}
The next ingredient required to assemble the flux-filling invariant is the factor $\xi(\varphi)^{-1}$. To compute this term, recall that the non-symmorphic area $c_\nu(\hat{g},\hat{g}) = \frac{1}{2}\cdot 1$ on the glide element $\hat{g}\in \mathtt{pg}$ is half that of each fundamenal domain. In order for $\beta_\nu$ to remain $K_f$-valued, we must take $\varphi=2\cdot \phi$ for some $\phi \in K_f$.

With this, we normalize the flux $\xi(\phi)^{-1}$ to $\xi(1)^{-1}$, whence the factor $\xi(\varphi)^{-1}$ determines $\theta_\xi \in \mathbb{Q}/\mathbb{Z}$ with denominator $\frac{km}{2\operatorname{gcd}(k,m)}=\frac{1}{2}\operatorname{lcm}[k,m]$. The class $(\tilde\Omega_\xi)_*h_\nu$ then leads to the flux-filling invariant
\begin{equation}
-\sigma_\text{Hall}\nu_\phi + \nu_\omega - \theta_\xi \mod 2\mathbb{Z}.\label{eq:glideflux}
\end{equation}
If $\xi$ were trivial, then the anomaly-free condition $\nu_\omega = \sigma_\text{Hall}\nu_\phi\mod 2\mathbb{Z}$ forces the doubling of both $\nu_\omega$ and $\nu_\phi$, consistent with the result that weak glide-symmetric phases square to the trivial phase [\onlinecite{XA}]. 

On the other hand, we must have $d\geq 2$ in order to observe effects of fluxon-anyon braiding $\xi\neq 0$, in which case the filling constraint on Eq. (\ref{eq:glideflux}) had been previously derived [\onlinecite{LRO}] through physical arguments. In particular, if the flux filling $\nu_\phi\in2\mathbb{Z}$ is doubled, then the quantization $\sigma_\text{Hall}\in \mathbb{Z}$ can occur at {\it two} distinct fractional values of the particle filling $\nu_\omega,\nu_\omega' = \nu_\omega+1$, where $\nu_\omega$ shares the same torsion $n$ as $\theta_\xi$. 

In other words, there are two distinct anomaly-free glide-symmetric FQH phases --- labeled by $\nu_\omega,\nu_\omega'$ --- with the same flux filling $\nu_\phi$, that quantizes the Hall conductance. This is once again consistent with the notion that weak glide-symmetric phases are precisely those that square to the trivial phase [\onlinecite{XA}].

\subsubsection{Screw-magnetic flux-filling invariants}
In the 3D screw case $P=P_{A;\nu}$ with $\nu=1\in\mathbb{Z}/\frac{n}{A}$ is the generator, the extension classes $f_k\in\operatorname{Ext}(\mathbb{Z}/k,\mathbb{Z}/\frac{n}{A}),f_m\in\operatorname{Ext}(\mathbb{Z}/m,\mathbb{Z}/\frac{n}{A})$ have order-$\frac{n}{A}$, which is the order of the quotient $C_n/C_A = \mathbb{Z}/\frac{n}{A}$. We may then immediately identify the fractionalization
\begin{equation}
\nu^1_\phi,\nu_\phi \in \frac{n}{A}\cdot \frac{1}{m}\mathbb{Z},\qquad \nu^1_\omega,\nu_\omega \in \frac{n}{A}\cdot \frac{1}{k}\mathbb{Z}.\nonumber
\end{equation}
Furthermore, there exist residual rotational symmetry $C_A\subset C_n$ characterized by the ordinary representations $V_k\in H^1(C_A,\mathbb{Z}/k)\cong \mathbb{Z}/\operatorname{gcd}(A,k),V_m\in H^1(C_A,\mathbb{Z}/m)\cong \mathbb{Z}/\operatorname{gcd}(A,m)$. These representations are assigned to each domain wall fixed by the point group $P'=1\times C_n$, hence within each fundamental domain there are $A$-number of Wyckoff positions occupied by these representations. 

To understand this structure, note that $H^{1;\nu}(C_n,\langle\tilde\Omega_\xi\rangle)$, as a model for $H^1(P_{A;\nu},\langle\tilde\Omega_\xi\rangle)$, is equipped with the restriction
\begin{equation}
\iota^*: H^{1;\nu}(C_n,\langle\tilde\Omega_\xi\rangle) \rightarrow H^{1}(\mathbb{Z}^d,\langle\tilde\Omega_\xi\rangle)\nonumber
\end{equation}
induced from the inclusion $\iota:\mathbb{Z}^d\cong\Lambda\hookrightarrow P_{A;\nu}$. As such each $W\in H^{1;\nu}(C_n,\langle\tilde\Omega_\xi\rangle)$ yields a representation $T = \iota^*W = (T_1,\dots,T_d) \in H^1(\mathbb{Z}^d,\langle\tilde\Omega_\xi\rangle)$ with the index $(\nu^1_\phi,\nu^1_\omega)$ fractionalized by $\frac{n}{A}$. On the other hand, these twisted representations must also pullback
\begin{equation}
S^*:H^{1;\nu}(C_n,\langle\tilde\Omega_\xi\rangle) \rightarrow H^{1}(C_A,\langle\tilde\Omega_\xi\rangle)\nonumber
\end{equation}
via the section $S:C_A\rightarrow P_{A;\nu}$ into a linear $C_A$-representation $V\in H^1(C_A,\langle\tilde\Omega_\xi\rangle)$. 

As $V$ is characterized by the indices $\mu_\phi \in \frac{1}{\operatorname{gcd}(m,A)}\mathbb{Z}, \mu\in\frac{1}{\operatorname{gcd}(k,A)}\mathbb{Z}$, we enforce
\begin{eqnarray}
\mu_\phi &=& \frac{m}{\operatorname{gcd}(m,A)} \nu_\phi \in \frac{1}{\operatorname{gcd}(m,A)}\mathbb{Z},\nonumber\\ \mu &=& \frac{k}{\operatorname{gcd}(k,A)}\nu_\omega \in \frac{1}{\operatorname{gcd}(k,A)}\mathbb{Z},\nonumber
\end{eqnarray}
which imposes $C_A$-equivariance on the restriction $T=\iota^*W$, such that $V=S^*W$ defines a linear $C_A$-representation. As such the torsions of $(\nu_\phi,\nu_\omega)$ are reduced to $m',k'$, respectively, where $$ m' = \frac{m}{\operatorname{gcd}(m,A)},\qquad k'=\frac{k}{\operatorname{gcd}(k,A)}.$$ We denote by $\nu_\phi',\nu_\omega'$ by the filling indices with these reduced torsions.

Now recall $c_\nu(\hat{g},\hat{g}')=\frac{A}{n}\cdot 1$ for $[g']=[g]+1\in C_n/C_A$, we take $\varphi=\frac{n}{A}\cdot \phi$ and normalized $\xi(\phi)^{-1}$ to $\xi(1)^{-1}$. The fraction $\theta_\xi \in\mathbb{Q}/\mathbb{Z}$ is then determined by $\xi$ with a denominator $\frac{A}{n}\operatorname{lcm}[k',m']$. We now arrive at the flux-filling invariant
\begin{equation}
-\sigma_\text{Hall} {\nu}_\phi' + {\nu}_\omega' - \theta_\xi \mod \frac{n}{A}\mathbb{Z}\label{eq:screwflux}
\end{equation}
corresponding to the class $(\tilde\Omega_\xi)_*h_\nu$, where we emphasize that $\nu_\phi',\nu_\omega'$ have $m',k'$-torsions modulo $\mathbb{Z}$, respectively. This Eq. (\ref{eq:screwflux}) gives rise to the new flux-filling constraint 
\begin{equation}
-\sigma_\text{Hall}{\nu}_\phi' +{\nu}_\omega' = \theta_\xi \mod \frac{n}{A}\mathbb{Z}, \nonumber
\end{equation}
which has no precedence in the physics literature. To apply such a result, we must find a 3D system that can exhibit the FQH effect.

It was found recently that the 3D Weyl semimetals is such a system [\onlinecite{CGB}]. Here, the global electromagnetic symmetry $U(1)_E$ as well as the translation symmetry $\mathbb{Z}=\mathbb{Z}[t_1]$ become anomalous, where the lattice vector $t_1$ is along the separation axis of the Weyl nodes. Take $K_p=SO(3),K_f=\mathbb{Z}/2$, we may describe such flux-dislocation defects with the composite ((0+1)+1)D phase $\tilde\Omega_\xi\in H^3(K,U(1))$, wherein a vortex loop $\ell$ of magnetic translations $\iota^*\gamma(\ell)$ traps a Majorana fermion $\omega_x=1$ at the intersection $x$ of $\ell$ with a plane along $t_1$.

Now fractionalization of the ground state comes from the two-loop braiding of such composite phases, whose statistics determines the torsion $s$ of $\theta_\xi$. If we suppose further that there is a $A_{\frac{n}{A}}$-screw along $t_1$, it is then possible to apply the anomaly-free condition of Eq. (\ref{eq:screwflux}) to describe the 3D fractional anomalous Hall effect on a screw-dislocated Weyl semimetal. In the particular case illustrated in Fig. \ref{fig:screwflux}, where $n=3$ and $A=1$, we would then expect a three-fold fractionalization of the flux-dislocation defect, {\it on top} of the semionic statistics [\onlinecite{CGB}].

\subsubsection{Glide-screw-magnetic flux-filling invariants}
In the glide-screw case, recall that the antiunitarity of the glide-screw operator $W_{\hat{g}}$ means that the indices $\bar\mu,\bar\mu_\phi\in\mathbb{Z}/2$ are only defined modulo two. This reduction means that there is no way to uniquely assemble twisted {\it antiunitary} representations $W\in H^{1;\nu}(C_2\times C_6,\langle\tilde\Omega_\xi\rangle_o)$ from a pair $(V;f)$ of a linear $C_3$-representation $V\in H^1(C_3,\langle\tilde\Omega_\xi\rangle)$ and the extension class $f\in \operatorname{Ext}(\langle t_1\rangle,\mathbb{Z}/2)$. This is an artifact of the coupling between a glide reflection with a 3-fold rotation in $P=\mathtt{gs}$.

Nevertheless, the diagonal subgroup $\mathbb{Z}/2\subset C_2 \times C_6/C_3$ has order two, and identifies the fractionalization
\begin{equation}
\nu_\phi^1,\nu_\phi\in 2\cdot\frac{1}{m}\mathbb{Z},\qquad \nu^1_\omega,\nu_\omega\in 2\cdot\frac{1}{k}\mathbb{Z}.\nonumber
\end{equation}
Furthermore, the non-symmorphic area cocycle $c_\nu([\hat{g}],[\hat{g}]+1)=\frac{1}{2}\cdot 1$ also computes the flux $\varphi =2\cdot \phi$, whence the fraction $\theta_\xi \in\mathbb{Q}/\mathbb{Z}$ given by $\xi$ is determined with a denominator $\frac{1}{2}\operatorname{lcm}[k,m]$. 

The ambiguity manifests instead in the pullback $S^*:H^{1;\nu}(C_n,\langle\tilde\Omega)_\xi\rangle)\rightarrow H^1(C_A,\langle\tilde\Omega_\xi\rangle)$ along the section $S$, which must factor through the inclusion
\begin{eqnarray}
\operatorname{Tor}(\mathbb{Z}/2,H)&\cong& \mathbb{Z}/\operatorname{gcd}(2,k)\oplus\mathbb{Z}/\operatorname{gcd}(2,m)\nonumber \\ 
&\hookrightarrow& H^1(C_3,\langle\tilde\Omega_\xi\rangle)\otimes\operatorname{Ext}(\langle t_1\rangle,\mathbb{Z}/2).\nonumber
\end{eqnarray}
The simplest way to satisfy this condition is for the linear $C_3$-representations $V\in H^1(C_A,\langle\tilde\Omega_\xi\rangle)$ to be trivial, with indices
\begin{equation}
\mu_\phi \in \frac{1}{\operatorname{gcd}(2,m,3)}\mathbb{Z}=\mathbb{Z},\qquad \mu \in \frac{1}{\operatorname{gcd}(2,k,3)}\mathbb{Z}=\mathbb{Z}.\nonumber
\end{equation}
This leads to the flux-filling invariant
\begin{equation}
-\sigma_\text{Hall}\nu_\phi+ \nu_\omega - \theta_\xi \mod 2\mathbb{Z}\label{eq:glidescrewflux}
\end{equation}
identical to Eq. (\ref{eq:glideflux}). 

Due to the antiunitarity of the glide-screw operator, we expect, for instance, the semionic braiding statistics of the flux-dislocation defect in 3D Weyl semimetals [\onlinecite{CGB}] to be further halved.

\section{Discussion}
In this paper, we have generalized {\bf Theorem \ref{thm:lsmanom}} [\onlinecite{ET}] by constructing LSM anomalies that take into account magnetic and non-symmorphic effects. We have also generalized the decorated domain wall construction [\onlinecite{RL}] to non-symmorphic lattices. 

Furthermore, by explicit computations, we have produced filling constraints on QI/QH systems based on anomaly-free conditions associated to the derived LSM anomaly. The results in Sec. \ref{sec:nsymfill} --- as well as Eq. (\ref{eq:glideflux}) --- agree with pre-existing results [\onlinecite{PWJZ}], [\onlinecite{LRO}], [\onlinecite{BBRFr}], [\onlinecite{BBRF}], [\onlinecite{WPVZ}], while those on the novel glide-screw lattice and Eq. (\ref{eq:screwflux}) are new derivations. We have also substantiated the heuristic argument [\onlinecite{WPVZ}] that the existence of a non-symmorphic equivariant $\text{Spin}_c$-structure [\onlinecite{GT}], [\onlinecite{GT1}] should non-trivially constrain the filling factor in {\it Remark \ref{rmk:latred}}.

According to the crystalline equivalence principle [\onlinecite{ET}], [\onlinecite{ET1}], [\onlinecite{ZYGQ}], anomaly-free invertible $G=P\times K$-SPT phases in $n$-dimensions is classified by the cohomology group $H^n(BP,E)$ [\onlinecite{CGLW}], [\onlinecite{GJF}], where $E$ is the spectrum of invertible $K$-SPT phases. The in-cohomology $K$-protected phases are classified by [\onlinecite{RL}] $$\pi_0\mathcal{E}^\text{in} = \bigoplus_{l\leq d}\pi_0\mathcal{E}^{\text{in}}_{d-l}\cong \bigoplus_{l\leq d}H^{l}(P,H^{d-l+2}(K,U(1))).$$ The factors at $l < d$ detect higher order $K$-symmetric topological phases localized on $P$-equivariant subspaces of codimension $l$ [\onlinecite{LiWe}]. This perspective forms the basis of the decorated domain walls construction [\onlinecite{RL}], as well as our generalization of it. 

Its relation to the filling invariants can be understood as follows. The crystalline $K$-SPT phase $\mathcal{D}^{d+2}_P\cong\operatorname{Bulk}\mathcal{C}^{d+1}_P$ constructed in Sec. \ref{sec:lsmqshe} is first order, and as such is classified by the $l=d$-th term $\pi_0\mathcal{E}_0$ indexed by merely the particle filling factor $\nu_\omega$. By hypothesis, this summand $\mathcal{E}_0$ is protected by an internal electric $K_p$-symmetry. 

In contrast, the second order crystalline $K$-SPT phase $\mathcal{G}^{d+2}_P\cong\operatorname{Bulk}\mathcal{F}^{d+1}_P$ constructed in Secs. \ref{sec:lsmfqhe}, \ref{sec:nsymmag} is classified by the two $l=d,d-1$ terms $\pi_0\mathcal{E}_0\oplus\pi_0\mathcal{E}_1$, in which the latter summand is protected by an internal magnetic $K_f$-symmetry. Correspondingly, we have shown that such second order magnetic phases are indexed by a tuple $(\nu_\phi,\nu_\omega)$ characterized by Eq. (\ref{eq:fluxfill}), with the index $\theta_\xi$ describing the "interaction" between the phases at degrees $l=d,d-1$.

\subsection{Future Work}
We have taken numerous simplifying assumptions in our above results. First, we have assumed that the lattice $\Lambda$ is crystalline, with space group $P$ classified by a class $\nu\in H^2(P',\Lambda)$. In this case, the crystallographic restriction theorem (see Appendix \ref{sec:spacegrp}) holds and the non-symmorphic effects are straightforward. In general, however, there exist {\bf quasicrystalline} lattices whose structures are ordered but not periodic. 

In the absence of point group symmetry, a quasicrystalline equivalence was shown [\onlinecite{EHPG}] such that the in-cohomology classification is accomplished with cohomology groups on the "elasticity fundamental domain" $\Gamma'\cong \mathbb{T}^D = B\mathbb{Z}^D$. It would then be reasonable to expect non-symmorphic twists to manifest as $\nu\in H^2(P',\mathbb{Z}^D)$ in the quasicrystal context, but its exact structure would require further, more detailed study.


Second, we have merely focused on "in-cohomology" SPT phases classified by group cohomology. There exist out-cohomology SPT phases, such as the 2D $E_8$-phase [\onlinecite{GJF}], [\onlinecite{RL}], that lie outside of the group cohomology formalism. Out-cohomology equivariant classes $h\in H^3_{P'}(X,\mathbb{Z})$ also appear in "partial Fourier-Mukai transforms" that fit into a $T$-duality web [\onlinecite{SSG}], [\onlinecite{GT1}],
and are in fact related to the $h$-fluxes encountered in string theory [\onlinecite{AP}]. As they do not appear in the full $T$-duality map, these classes specifically do not seem to affect the classification of SPT phases.

Thirdly, we have assumed the conditions necessary to split the second order SPT phase $\Omega\in H^3(K,U(1))$ into "electric" ($\in H^2(K_p,U(1))$) and "magnetic" ($\in H^1(K_f,U(1))$) parts. In the generic case, we expect non-trivial mixture between them, whence the filling invariant is not merely a sum of the particle filling $\nu_\omega$ and the flux filling $\nu_\phi$. We shall tackle this issue by constructing an appropriate effective 2-gauge TQFT $Z_\text{QMag}$ [\onlinecite{KT}] in a future work.

\begin{remark}
We expect to extract a generalized notion of a "flux-filling invariant" from the partition function $Z_\text{QMag}(X)$ on a generic space $X$. For the sphere $X=S^2$, in particular, this would put the {\it shift index} $\mathcal{S}=\nu_\omega-\nu_\phi$ on a topological footing. Furthermore, $\mathcal{S}$ appears in a sum rule for the density $\expval{\rho}$ of the {\bf magnetoroton} excitation [\onlinecite{GNS}], which is a dynamical quantity with no clear topological interpretation yet.
\end{remark}

It would also be important to develop a classification of fermionic crystalline invertible SPT phases in the non-symmorphic context. This would allow us to derive, for instance, a set of filling constraints for superconductors. As a "first approximation", one may consider coupling first order phases $\mathcal{E}_0$ to the fermion parity symmetry $\mathbb{Z}/2^F$  [\onlinecite{KitMajor}]. More generally, however, it was argued [\onlinecite{Deb}] that the fermionic SOC must be part of the data of an equivariant local system $\mathcal{L}$ on the spectrum $\mathcal{E}$. One would then have to also study the effect of non-symmorphic twists $\nu\in H^2(P',\Lambda)$ on $\mathcal{L}$.

\begin{acknowledgments}
The author would like to thank Davide Gaiotto and Theo Johnson-Freyd for valuable discussions, and the University of Waterloo for providing accomodation and resources.
\end{acknowledgments}

\newpage

\appendix

\section{Classification of Space Groups}\label{sec:spacegrp}
Let $C_n,D_n$ denote the cyclic/dihedral groups of order $n/2n$, respectively. We are interested in, rather than the trivial $\mathbb{Z}[P']$-module $\mathbb{Z}^d$, the {\bf Bravais lattice} $\Lambda$, which admits an orthogonal automorphism by $P'$ through the embedding $\rho: P'\hookrightarrow O(d)$. As Abelian groups $\Lambda\cong \mathbb{Z}^d$, but $\Lambda$ also has the structure of an {\it integral representation space} of $P'$ [\onlinecite{AP}]. This fact plays a major role in the classification of space groups.

Recall [\onlinecite{Hill}], [\onlinecite{Dav}] that the classification of {\it point} groups $P'$ in $d$ spatial dimensions identifies it as a finite group with presentation
\begin{equation}
P' = \langle a_1,\dots,a_g\mid r_1,\dots,r_t\rangle,\nonumber
\end{equation}
with given relations $r_1,\dots,r_t$. Let $\rho:P'\rightarrow \operatorname{GL}_d(\mathbb{Z})$ be an integral representation on the lattice $\Lambda\cong\mathbb{Z}^d$ as a $\mathbb{Z}[P']$-module. Consider the surjection $F_g=\operatorname{Free}(a_1,\dots,a_g)\rightarrow P'$ from free group by projecting out the relations $R_t = \langle r_1,\dots,r_t\rangle$, we write elements in its kernel
\begin{equation}
r_i -1 = \sum_{j=1}^g \Sigma_{ij}(a_j-1),\qquad \Sigma_{ij}\in \mathbb{Z}[F_g]\nonumber
\end{equation}
by a change of basis in the $\mathbb{Z}[F_g]$-algebra. Following a procedure due to Zassenhaus, we have  
\begin{theorem}
\label{thm:zass}
Define the $dg\times dt$ integral matrix $\rho\Sigma \in M_{dg\times dt}(\mathbb{Z})$, and let $b_1,\dots,b_u\in\mathbb{Z}$ denote its non-trivial invariant factors such that $b_k|b_{k+1}$ for all $1\leq k< u$, then
\begin{equation}
H^2(P',\Lambda) \cong \bigoplus_{k=1}^u \mathbb{Z}/b_k.\nonumber
\end{equation}
Normalizers $\alpha\in N=N_{\operatorname{GL}_d(\mathbb{Z})}(P')$ act on $H^2(P',\Lambda)$ via the commutative diagram
\begin{equation}
\begin{tikzcd}
\mathcal{E}: \arrow[d, maps to] & 1 \arrow[r] & \Lambda \arrow[r, "\iota"] \arrow[d, "\alpha"] & P \arrow[r, "p"] \arrow[d, "\cong"]         & P' \arrow[r] \arrow[d, "\operatorname{Conj}_\alpha"] & 1 \\
\alpha\mathcal{E}:              & 1 \arrow[r] & \Lambda \arrow[r]          & P \arrow[r] & P' \arrow[r]                                         & 1
\end{tikzcd},\nonumber
\end{equation}
whence affine equivalent classes of space groups $P$ are in one-to-one correspondence with $H^2(P,\Lambda)/\operatorname{Orb}_N$.
\end{theorem}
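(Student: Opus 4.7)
The plan is to compute $H^2(P',\Lambda)$ via Fox calculus on the given presentation, then analyze the normalizer action on extension classes separately.

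First I would build a partial free resolution of the trivial $\mathbb{Z}[P']$-module $\mathbb{Z}$ from the presentation $P' = F_g/R_t$. The initial two steps take the shape
\begin{equation*}
\mathbb{Z}[P']^t \xrightarrow{\partial_2} \mathbb{Z}[P']^g \xrightarrow{\partial_1} \mathbb{Z}[P'] \xrightarrow{\epsilon} \mathbb{Z} \to 0,
\end{equation*}
where $\partial_1$ sends the $j$-th basis vector to $a_j - 1$ and $\partial_2$ is the $\mathbb{Z}[P']$-linear extension of the matrix $\Sigma$ obtained from the Fox derivatives of the relations, precisely as in the hypothesis. Applying $\mathrm{Hom}_{\mathbb{Z}[P']}(-,\Lambda)$ and using the integral representation $\rho$ to pick bases of $\Lambda\cong\mathbb{Z}^d$ produces the cochain complex $\Lambda \to \Lambda^g \xrightarrow{\Sigma^*}\Lambda^t$, where $\Sigma^*$ is represented by the $dg\times dt$ integer matrix $\rho\Sigma$.

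Next I would identify $H^2(P',\Lambda)$ with the torsion part of $\mathrm{coker}(\Sigma^*)$: 2-cocycles correspond to tuples $(\lambda_i)\in\Lambda^t$ compatible with the syzygies of the presentation, while 2-coboundaries form the image of $\Sigma^*$. Reducing the integer matrix $\rho\Sigma$ to Smith normal form exhibits its non-trivial invariant factors $b_1\mid b_2\mid\cdots\mid b_u$, and the torsion part of the cokernel splits as $\bigoplus_{k=1}^u \mathbb{Z}/b_k$, giving the stated isomorphism. The free part of the cokernel is either absorbed by the higher syzygies or, equivalently, corresponds to extension-preserving shifts of the lifts of the generators.

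For the second half, I would consider the action of the normalizer $N = N_{\operatorname{GL}_d(\mathbb{Z})}(P')$ on extensions: an $\alpha \in N$ simultaneously induces an automorphism of $\Lambda$ (through its action) and an automorphism $\mathrm{Conj}_\alpha$ of $P'$ (through conjugation inside $\operatorname{GL}_d(\mathbb{Z})$), which together assemble into the commutative diagram of the statement and send $[\mathcal{E}]\in H^2(P',\Lambda)$ to $\alpha\cdot [\mathcal{E}]$. Two space groups are affinely equivalent precisely when they are conjugate inside the affine group $\Lambda \rtimes \operatorname{GL}_d(\mathbb{Z})$, and since pure translations act trivially on cohomology, this conjugacy descends exactly to the $N$-action. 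The orbit space $H^2(P',\Lambda)/\mathrm{Orb}_N$ therefore parametrizes affine equivalence classes.

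The main obstacle is controlling the contribution of the next syzygy $\partial_3$: the three-term truncation generally has a cokernel slightly larger than $H^2$, and one must argue that only the torsion part survives in the Smith-normal-form reduction. I would handle this either by invoking Zassenhaus's original presentation-independence argument, or by extending the resolution one step further and checking that $\partial_3$ exactly kills the free part of the cokernel for finite $P'$. A related subtlety is confirming that the $N$-orbit is precisely the right equivalence relation: one must verify that every affine automorphism conjugating one space group to another induces an automorphism of $P'$ in the image of $N_{\operatorname{GL}_d(\mathbb{Z})}(P')$, which amounts to showing the point group is a well-defined invariant of the affine equivalence class.
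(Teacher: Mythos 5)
The paper does not actually prove this statement: it is presented as a known result (``following a procedure due to Zassenhaus''), with the classification references standing in for the argument, so there is no in-paper proof to compare against. Your reconstruction via Fox calculus and Smith normal form is the standard derivation and is sound in outline; the two obstacles you flag at the end are real but both close cleanly, and you should close them rather than leave them as alternatives. For the first: extend the resolution one step to $\mathbb{Z}[P']^s\xrightarrow{\partial_3}\mathbb{Z}[P']^t$ (the module of identities among relations), so that $H^2(P',\Lambda)=\ker\partial_3^*/\operatorname{im}\partial_2^*$ sits inside $\operatorname{coker}(\rho\Sigma)=\Lambda^t/\operatorname{im}\partial_2^*$. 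If $[\lambda]$ is a torsion class of this cokernel then $n\lambda\in\operatorname{im}\partial_2^*$ for some $n\geq1$, hence $n\,\partial_3^*\lambda=0$, and torsion-freeness of $\Lambda^s$ forces $\partial_3^*\lambda=0$; conversely $H^2(P',\Lambda)$ is annihilated by $|P'|$ and so lands in the torsion subgroup. Therefore $H^2(P',\Lambda)$ \emph{equals} the torsion of $\operatorname{coker}(\rho\Sigma)$, read off from the Smith normal form as $\bigoplus_k\mathbb{Z}/b_k$ over the invariant factors $b_k\geq2$ --- no explicit computation of $\partial_3$ and no ``presentation-independence'' appeal is needed, and it is not quite right to say $\partial_3^*$ ``kills the free part''; rather, free-part classes simply fail to be cocycles while torsion classes automatically are. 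For the second: the fact that every affine conjugacy of space groups descends to an automorphism of $P'$ realized inside $N_{\operatorname{GL}_d(\mathbb{Z})}(P')$ is exactly Bieberbach rigidity --- the translation lattice $\Lambda$ is the unique maximal abelian normal subgroup of finite index in $P$, hence characteristic, so any isomorphism preserves it and induces a compatible pair of automorphisms of $\Lambda$ and $P'$; within a fixed arithmetic class this pair is conjugation by an element of the integral normalizer, and translations act trivially on $H^2$ as you note. With those two points made explicit your argument is complete.
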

\noindent The point of computing extension classes $H^2(P',\Lambda)$ is that it gives the differentials $d_2$ in the second page $E_2$ of the Lyndon-Serre-Hochschild (LHS) spectral sequence, which we shall study later.

\subsection{Low Dimensional Space Groups}
At low degrees, all possible Bravais space groups $P$ have been classified. In case $P'=C_n$ is cyclic of order $n$, we have
\begin{equation}
H^2(C_n,\Lambda) \cong \Lambda^{C_n}/\sum_{p\in C_n}p\cdot \Lambda\nonumber
\end{equation}
where $\Lambda^{P'}\subset\Lambda$ denotes the sublattice of fixed points.  By the {\bf crystallographic restriction theorem}, only those with $n=2,3,4,6$, classifying half-plane, trigonal, tetragonal and hexagonal tilings, are Bravais in 2D. Furthemore, no non-symmorphic effects can occur in case $n=3,4,6$, as no fixed points are present.

However, when $n=2$ we have $H^2(C_2,\Lambda)\cong \mathbb{Z}/2$ classifying a glide/reflectotranslation element. We may then form the lattices
\begin{equation}
\Lambda^0 = \mathbb{Z}\oplus\mathbb{Z},\qquad \underbrace{\Lambda^1}_{\mathtt{pg}} = \mathbb{Z}\oplus \mathbb{Z}',\qquad \underbrace{\Lambda^2}_{\mathtt{pgg}} = \mathbb{Z}'\oplus\mathbb{Z}',\nonumber
\end{equation}
where the generator of $C_2= \mathbb{Z}/2$ acts as $-1$ in $\mathbb{Z}'$. The non-trivial lattices $\Lambda^1,\Lambda^2$ each contain one and two glide axes, respectively; see Fig \ref{fig:glides}.

\begin{figure}[h]
    \centering
    \includegraphics[width=1\columnwidth]{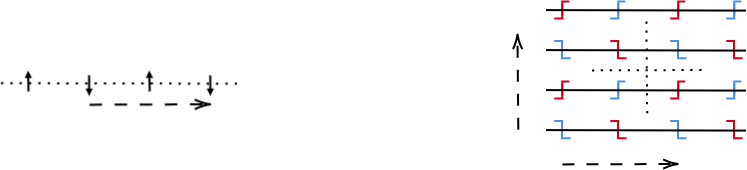}
    \caption{Case $P'=C_2$. {\bf Left}: the lattice $\Lambda^1$ with one glide ($\mathtt{pg}$). {\bf Right}: the lattice $\Lambda^2$ with two glides ($\mathtt{pgg}$). Dotted lines denote reflection axes, and dashed arrows denote residual translations.}
    \label{fig:glides}
\end{figure}

In accordance with the classification of 2D non-symmorphic space groups above, for $P'=C_2$ we have the following table
\begin{center}
\begin{tabular}{|c c c|}
\hline 
\# of Glides & Lattice & Unit Cell $BP$ \\
\hline
0 & $\Lambda^0$ & 2-Torus $\mathbb{T}^2$ \\ 
1 & $\Lambda^1$ & Klein bottle $\mathbb{K}$ \\ 
2 & $\Lambda^2$ & Real-projective plane $\mathbb{R}P^2$ \\
\hline
\end{tabular}
\end{center}
The class $\nu\in H^2(P',\Lambda)$ manifests as the non-orientability of $BP$. Take $P=\mathtt{pg}$, for instance, as a orbifold quotient of the twisted toroidal fibration $B\mathbb{Z}'\hookrightarrow B\Lambda^1\xrightarrow{p} B\mathbb{Z}$ [\onlinecite{SSG}], the glide element identifies a point $\lambda \in p^{-1}(m)$ in the fibre over $m\in B\mathbb{Z}$ with its reflection $-\lambda \in p^{-1}(m+\pi)$ over the translate $m+\pi\in B\mathbb{Z}$.

\subsection{3D Layer Space Groups}
The space group $P\rightarrow P'$ whose point group is axial $P'= P'_1\times P'_2$ is called a {\bf layer group}. Here Bravais lattices $\Lambda$, as a $\mathbb{Z}[P']$-module, decomposes into a $\mathbb{Z}[P'_2]\oplus\mathbb{Z}[P'_1]$-module such that the representation $\rho=\rho_1\oplus \rho_2:P'\rightarrow \operatorname{GL}_1(\mathbb{Z})\oplus\operatorname{GL}_2(\mathbb{Z})\hookrightarrow\operatorname{GL}_3(\mathbb{Z})$ splits. Given the generator $\alpha\in P'_1=D_1=\langle \alpha\mid \alpha^2\rangle$ in the only non-trivial 1D point group, it can modify the relations $R\rightarrow R(\alpha)$ in $P'_2=\langle a_1,\dots,a_g\mid r_1,\dots r_t\rangle$ by replacing each instance of, say, $a_1\in P'_2$ by $\alpha a_1\in P'$. This leads to either a direct or a semidirect $\mathbb{Z}/2$-product, and the following list
\begin{widetext}
\begin{eqnarray}
P_1' = 1 &\times& \begin{cases}P'_2 = C_n \\ P_2' = D_n\end{cases} \implies \begin{cases}P'=C_n &; \text{rotational $C_n$} \\ P'=D_n &; \text{pyramidal $C_{nv}$} \end{cases},\nonumber \\
P_1' = \mathbb{Z}/2&\times& \begin{cases}P'_2 = C_n \\ P'_2=D_n\end{cases} \implies \begin{cases}P'=\mathbb{Z}/2\times C_{n} &; \text{chirotational $C_{nh}$} \\ P' = \mathbb{Z}/2\times D_n &; \text{prismatic $D_{nh}$} \end{cases},\nonumber \\
P_1' = \mathbb{Z}/2 &\ltimes& \begin{cases}P'_2 = C_n \\ P'_2 = D_n\end{cases} \implies \begin{cases} P' = C_{2n} \qquad ; \text{rotoreflection $S_{2n}$} \\ \begin{cases} P' = D_{2n} &; \text{antiprismatic $D_{nd}$} \\ P' = D_n &; \text{dihedral $D_n$}\end{cases}\end{cases}\nonumber
\end{eqnarray}
\end{widetext}
of the seven infinite families of axial point groups; in the dihedral case $P'_2=D_n$, we have the choice of coupling $\alpha$ to a rotation or a reflection generator in the dihedral group $D_n$.

First consider the example of the rotational $P'=1\times C_n$ case, whose 2D cyclic subgroup $P'_2=C_n$ of order $n=2,3,4,6$ acts only on the 2D sublattice plane $\Lambda_2\subset\Lambda$. Here the generator $\alpha=1$ is trivial, and each integral representation $\rho=\rho_A$ is classified by a divisor $A|n$ for which $\rho_2(a_1^A)= \operatorname{id}$ and $\operatorname{gcd}(\frac{n}{A},A)=1$. With the invariant axis $\Lambda_1 \perp \Lambda_2$, the action $\rho_A(1,a_1)\Lambda_1=a_1\cdot_A\Lambda_1$ then gives
\begin{equation}
H^2(C_n,\Lambda) \cong \Lambda^{C_n}/\sum_{k=1}^na_1^n \cdot_A \Lambda  = \Lambda_1/\frac{n}{A}\Lambda_1 \cong \mathbb{Z}/\frac{n}{A},\nonumber
\end{equation}
characterizing a screw/rototranslation lattice $\Lambda=\mathbb{Z}^2\oplus \Lambda_A$. The value $\nu\in\mathbb{Z}/\frac{n}{A}$ counts the number of translations that an $A$-fold rotation fits in, and as such the order $\operatorname{ord}\nu$ counts the length of the screw. The space group $P=P_{A;\nu}$ then acquires an $A_{\operatorname{ord}\nu}$-screw along $\Lambda_A$. A similar argument holds in the pyramidal case $P'=1\times D_n=C_{nv}$, whence $\nu\in H^2(C_{nv},\Lambda)\cong \mathbb{Z}/\frac{n}{A}$ characterizes $A_{\operatorname{ord}\nu}$-screws along $\Lambda_1$. The presence of a reflection element in $D_n$, however, forces this screw to be axialsymmetric; see Fig. \ref{fig:screws}.

\begin{center}
\begin{figure}[h]
    \centering
    \includegraphics[width=1\columnwidth]{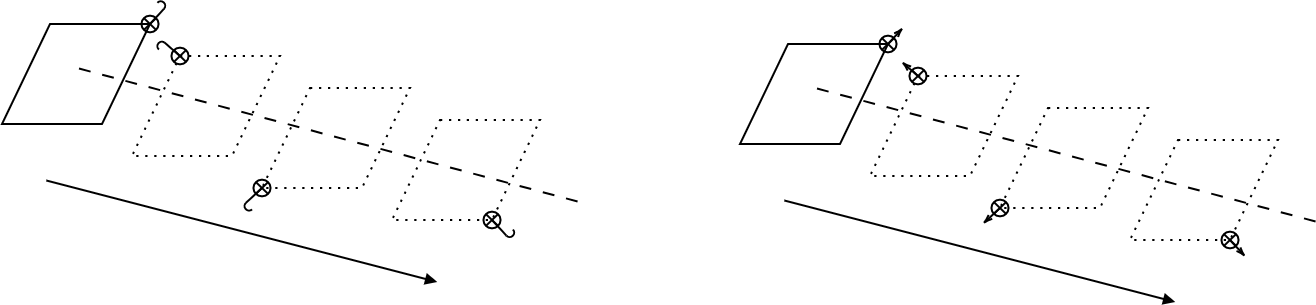}
    \caption{Screw patterns with $n=4$, $A=1$ and $\nu=1$. {\bf Left}: A non-axialsymmetric screw with $P'=1\times C_4$. {\bf Right}: An axialsymmetric screw with $P'=1\times D_4$. Notice the symmetry of the arrowheads.}
    \label{fig:screws}
\end{figure}
\end{center}

Now suppose $\alpha\neq 1$ is non-trivial. Putting $\alpha$ in the 0th position of the presentation for $P'$, we may generically write
\begin{equation}
P' = \begin{cases}\langle \alpha,a_1,\dots,a_g\mid R(\alpha)\rangle &; \cong  \mathbb{Z}/2\ltimes P'_2 \\ \langle \alpha,a_1,\dots,a_g\mid R(\alpha), c(\alpha)\rangle &; \cong \mathbb{Z}/2\times P'_2\end{cases},\nonumber
\end{equation}
where $c(\alpha)=\alpha a_1\alpha a_1^{-1}$ is the commutator relation. We may then compute the modified entires $\tilde{\Sigma}_{ij}$ such that $\rho\tilde{\Sigma}\in M_{3(t+1)\times 3(g+1)}(\mathbb{Z})$. The zero-th relation reads $\alpha^2=1$, which gives $\tilde{\Sigma}_{00} = \alpha+1$. Depending on whether $\rho_1(\alpha)=\pm1$, we obtain the zero-th invariant factor $\tilde{b}_0=0,2$. In case $\tilde{b}_0$ is non-trivial, all subsequent invariant factors $\tilde{b}_k$ must be even. This tells us that odd-order non-symmorphic effects cannot coexist with glides.

\begin{proposition}
The {\bf unique} glide-screw lattice $\Lambda$ is classified by $\nu =(1,1)\in H^2(P',\Lambda)$, where $P'$ is the chirorotational point group $P'=C_{6h}$ of order $6$ and the integral representation $\rho_A$ is characterized by $A=3$.
\end{proposition}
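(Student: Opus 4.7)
The plan is to apply Theorem \ref{thm:zass} (Zassenhaus) to the chirorotational point group $P' = C_{6h} \cong \mathbb{Z}/2 \times C_6$ equipped with the integral representation $\rho_{A'}$ prescribed by the divisor $A' = 3$ of $n = 6$. Using the presentation $\langle \alpha, a_1 \mid \alpha^2, a_1^6, [\alpha, a_1] \rangle$ with $g = 2$ generators and $t = 3$ relations, I decompose $\rho = \rho_1 \oplus \rho_2$ on $\Lambda = \Lambda_1 \oplus \Lambda_2$ following the layer-group recipe of Sec. \ref{sec:spacegrp}: the horizontal reflection $\alpha$ acts as $-1$ on the axial sublattice $\Lambda_1$, and the rotation $a_1$ acts planarly on $\Lambda_2 \cong \mathbb{Z}^2$ as a matrix $R$ of order $A' = 3$, while the residual axial action is dictated by the twist $A'$.

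First, I would compute the Fox derivatives of the three relations and assemble the $dg \times dt = 6 \times 9$ matrix $\rho \Sigma$ block by block. Concretely, the $\alpha^2$-block is $\rho(1 + \alpha)$, which vanishes on $\Lambda_1$ (since $\rho_1(\alpha) = -1$) and equals $2 I_2$ on $\Lambda_2$; the $a_1^6$-block is $\rho(1 + a_1 + \cdots + a_1^5)$, whose planar part vanishes because $I + R + R^2 = 0$ for an order-$3$ rotation; and the commutator $[\alpha, a_1]$ contributes cross entries $\rho(1 - a_1)$ and $\rho(\alpha - 1)$ coupling the two sectors.

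Next I would bring $\rho \Sigma$ to Smith normal form, working sector by sector. The axial block recovers the screw class $H^2(C_n, \Lambda) \cong \mathbb{Z}/(n/A') = \mathbb{Z}/2$ as the greatest common divisor of the contributions from $a_1^6$ and the commutator. The planar block contributes an independent $\mathbb{Z}/2$ corresponding to the glide associated with the horizontal reflection $\alpha$, with the determinant $3$ of $I - R$ guaranteeing that no cross-cancellation between the axial and planar sectors occurs. Therefore $H^2(P', \Lambda) \cong \mathbb{Z}/2 \oplus \mathbb{Z}/2$, and the only class simultaneously turning on both a glide and a screw is the diagonal generator $\nu = (1, 1)$.

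For uniqueness, any 3D Bravais space group realising a glide-screw must contain both a reflection generator with $\rho_1(\alpha) = -1$ and a non-trivial screw in $\mathbb{Z}/(n/A')$. The observation after Theorem \ref{thm:zass} that non-trivial $\tilde{b}_0$ forces all subsequent invariant factors to be even rules out odd-order screws coexisting with glides, so $n/A'$ must be even. Combined with the crystallographic restriction $n \in \{2, 3, 4, 6\}$ and the coprimality $\gcd(A', n/A') = 1$, the unique admissible choice is $n = 6$, $A' = 3$, giving $P' = C_{6h}$ and $\nu = (1, 1)$. The hardest part will be the explicit Smith normal form reduction of $\rho \Sigma$: one must verify that the $I - R$ block in the commutator does not accidentally trivialise the planar $\mathbb{Z}/2$ factor via cross-relations, and that the axial entry $6$ from $a_1^6$ collapses cleanly against the axial $-2$ from the commutator to leave a single $\mathbb{Z}/2$ rather than a residual $\mathbb{Z}/6$; block-triangularising by axial/planar sector before performing the reduction should organise these cancellations correctly.
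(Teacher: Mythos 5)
Your route---running Theorem \ref{thm:zass} directly on the presentation $\langle \alpha,a_1\mid \alpha^2, a_1^6,[\alpha,a_1]\rangle$ and Smith-reducing $\rho\Sigma$---is genuinely different from the paper's, which never reduces the full relation matrix: it disposes of the rotoreflection family $\mathbb{Z}/2\ltimes C_n=C_{2n}$ by the fixed-point argument $\Lambda^{C_{2n}}=0$, uses the commutator relation to split $\rho=\rho_1\oplus\rho_2$, and then applies a K\"unneth decomposition with coefficients in the \emph{single} fixed axis $\Lambda^{C_{nh}}$, treated as a trivial module, so that \emph{both} the glide and the screw translation lie along that one axis and $H^2(C_{nh},\Lambda)\cong\mathbb{Z}/2\oplus\mathbb{Z}/\tfrac{n}{A'}$. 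Only the parity observation (a nontrivial $\tilde b_0=2$ forces all later invariant factors to be even) is common to both arguments. The difference in geometry matters: you place the glide vector in the planar sector, the paper places it along the rotation axis.

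That difference is where your proof breaks, at exactly the step you flag as the hardest. With $\rho_2(\alpha)=I_2$ and $\rho_2(a_1)=R$ of order $3$, the planar block of $\rho\Sigma$ attached to the generator $\alpha$ is $\bigl[\,2I_2 \mid 0 \mid I_2-R\,\bigr]$, and the gcd of its $2\times 2$ minors is $\operatorname{gcd}(4,2,3,\dots)=1$ precisely \emph{because} $\det(I_2-R)=3$ and $\det(2I_2)=4$ are coprime; the Smith normal form therefore returns unit invariant factors and the planar $\mathbb{Z}/2$ is trivialized. Your heuristic points the wrong way: coprimality of the two determinants is what causes the cross-cancellation, not what prevents it. (Equivalently, a horizontal glide vector must be fixed by the order-$3$ rotation up to coboundary, and $R$ has no nonzero fixed vectors.) To salvage the Zassenhaus route you must adopt the paper's geometry and put the glide translation on the axial sublattice before reducing. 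Separately, your uniqueness criteria---$n/A'$ even, $n\in\{2,3,4,6\}$, $\operatorname{gcd}(A',n/A')=1$---do not single out $n=6$, $A'=3$: they also admit $n=2$, $A'=1$ (the layered $\mathbb{Z}\times\mathtt{pg}$, which the paper excludes only by a separate remark) and the $A'=1$ cases with $n=4,6$, and you never examine the non-direct-product axial families ($S_{2n}$, $C_{nv}$, $D_{nd}$, etc.) at all, whereas the paper at least eliminates $S_{2n}$ explicitly.
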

\begin{proof}
If $P'_1=1$ is trivial, then there are no reflections that may generate glides, hence let $P'_1=\mathbb{Z}/2$ be non-trivial. In the rotoreflection case $P' = \mathbb{Z}/2\ltimes C_n = C_{2n}$, the reflection generator $\alpha\in P'_1$ couples to the rotation generator $a_1\in C_n$ such that $\alpha a_1\in P'$ generates the rotoreflection. This element only fixes the origin $0\in\Lambda$, hence $\Lambda^{P'}=0$ and $H^2(C_{2n},\Lambda) = 0$.

Consider now the chirorotational case $P' =\mathbb{Z}/2\times C_n=C_{nh}$. The additional commutator relation $c(\alpha)=\alpha a_1\alpha a_1^{-1}$ decouples the lattice $\Lambda=\Lambda_1\oplus\Lambda_{23}$ as a $C_{nh}$-module, and splits the representation $\rho=\rho_1\oplus\rho_2$. Now non-trivial fixed points exist only when $n$ is even, for which $\rho_2: \sigma=a_2^{\frac{n}{2}} \mapsto \operatorname{id}$. Such representations $\rho_2$ are characterized by an even divisor $A=2A'$, and $\rho_2$ factors through a representation $\rho_{A'}$ of $C_n/C_2 = C_{n/2}$.

Given this axis, say $\Lambda_2\subset\Lambda$, of fixed points as a trivial $C_{nh}$-module, we may then apply K{\" u}nneth formula 
\begin{eqnarray}
H^2(C_{nh},\Lambda) &\cong& \bigoplus_{p+q=2}H^p(C_2,\Lambda_2) \otimes H^q(C_n,\Lambda_2)\nonumber\\ 
&=& H^2(C_2,\Lambda_2) \oplus H^2(C_n,\Lambda_2), \nonumber
\end{eqnarray}
as cyclic groups $C_n$ lack a (free) degree-1 cohomology generator [\onlinecite{Hill}], [\onlinecite{Dav}]. The previous results then give
\begin{equation}
H^2(C_{nh},\Lambda) \cong  \mathbb{Z}/\tilde{b}_0 \oplus\mathbb{Z}/\tilde{b}_1= \underbrace{\mathbb{Z}/2}_{\text{glide}} \oplus \underbrace{\mathbb{Z}/\frac{n}{A'}}_{\text{screw}}. \nonumber
\end{equation}
The glide-screw lattice $\Lambda$ is classified by a non-trivial class $\nu=\nu_1\oplus\nu_2\in H^2(C_{nh},\Lambda)$ for which $\nu_1=1,\nu_2\neq 0$. Such a non-trivial class $\nu=(1,1)\in \mathbb{Z}/2\oplus\mathbb{Z}/2$ exists only when $n=6$ and $A'=3$.
\end{proof}
\noindent Notice if $n=2$ and $A'=1$, then we recover layers of the usual glide lattice $P=\mathbb{Z}\times \mathtt{pg}$. An example of a glide-screw pattern is shown in Fig. \ref{fig:glidescrew}. The classification of the rest of the 3D layer space groups can analogously be carried out algorithmically with, for instance, GAP. 

\begin{center}
\begin{figure}[h]
    \centering
    \includegraphics[width=1\columnwidth]{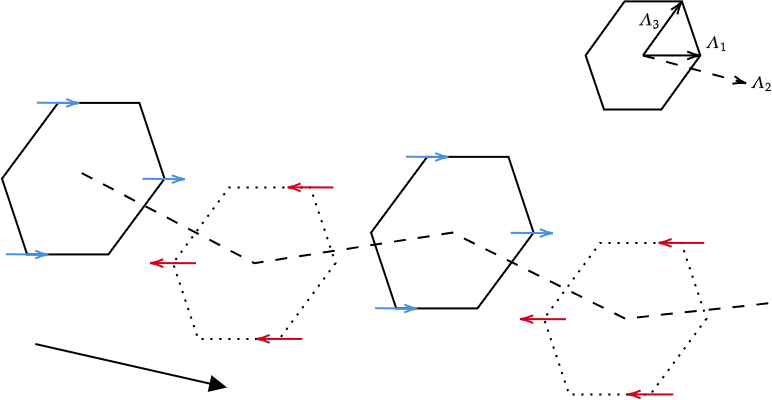}
    \caption{A glide-screw combination with $n=6$ and $A'=3$. Here $P'=C_2\times C_6$.}
    \label{fig:glidescrew}
\end{figure}
\end{center}

\subsection{3D Non-Layer Space Groups}
In this section, we study {\bf non-layer} space groups $P\rightarrow P'$, whose point groups $P'$ are non-axial. These consist of the tetrahedral, octahedral and icosahedral familities, together with their chiral counterparts (direct product with $\mathbb{Z}/2$). Integral representations $\rho:P'\rightarrow\operatorname{GL}_3(\mathbb{Z})$ do not split $\operatorname{GL}_1(\mathbb{Z})\oplus\operatorname{GL}_2(\mathbb{Z})$, as such we must compute the invariant factors $b_k$ from scratch.

\begin{proposition}
Within the tetrahedral family $P'=A_4$ (chiral tetrahedral), $P'=S_4$ (tetrahedral), $P'=A_4\times\mathbb{Z}/2$ (pyritohedral), only $S_4$ has a non-symmorphic glide element.
\end{proposition}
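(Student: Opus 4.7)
The plan is to apply Zassenhaus's Theorem 3.1 to each of the three tetrahedral point groups $A_4$, $S_4$, and $A_4\times\mathbb{Z}/2$ in turn, equipped with the standard crystallographic integral representation $\rho: P'\hookrightarrow O(3)\cap\operatorname{GL}_3(\mathbb{Z})$ on $\Lambda\cong\mathbb{Z}^3$. For each I compute the invariant factors of $\rho\Sigma$ to identify $H^2(P',\Lambda)$, and then restrict any non-trivial class to the $\mathbb{Z}/2$ subgroups of $P'$ generated by reflections in order to distinguish a genuine glide (reflection paired with a partial translation along the mirror plane) from a screw/rototranslation or a rotoinversion.

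The chiral case $P'=A_4$ is immediate: $\rho(A_4)\subset\operatorname{SL}_3(\mathbb{Z})$ consists of proper rotations only, so no reflection subgroup exists in the point group, and every class in $H^2(A_4,\Lambda)$ can at most classify screws. For the pyritohedral case $P'=A_4\times\mathbb{Z}/2$, the extra $\mathbb{Z}/2$ acts as inversion $-\operatorname{id}$; the vanishing $\Lambda^{-\operatorname{id}}=0$ forces $H^{2}(\mathbb{Z}/2,\Lambda)=0$ while $H^{1}(\mathbb{Z}/2,\Lambda)\cong(\mathbb{Z}/2)^3$, so the K\"unneth decomposition reduces to
\begin{equation*}
H^2(A_4\times\mathbb{Z}/2,\Lambda)\cong H^2(A_4,\Lambda)\oplus H^1(A_4,(\mathbb{Z}/2)^3).
\end{equation*}
The first summand is pure $A_4$-data and encodes only screws. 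Every reflection subgroup of $A_4\times\mathbb{Z}/2$ has the form $\langle -r\rangle$ for some $2$-fold rotation $r\in A_4$, and restriction to $\langle -r\rangle$ factors through the projection onto $\langle r\rangle\subset A_4$; since $A_4$ carries no intrinsic reflection data, no partial translation along the mirror plane of $-r$ can be witnessed, and no glide arises.

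For $P'=S_4$, fix the presentation $S_4=\langle a,s\mid a^3,s^2,(as)^4\rangle$ with $a$ a three-fold rotation about a $(1,1,1)$-diagonal and $s$ a diagonal mirror. Build the $6\times 9$ integer matrix $\rho\Sigma$ prescribed by Theorem 3.1 and compute its Smith normal form. I expect a single non-trivial invariant factor $\tilde b_0=2$, matching the well-known existence of $d$-glide space-group extensions of the tetrahedral point group (e.g.\ $F\overline{4}3c$). To confirm that this generator $\nu\in H^2(S_4,\Lambda)\cong\mathbb{Z}/2$ genuinely encodes a glide, restrict along $\iota:\langle s\rangle\hookrightarrow S_4$ and check that $\iota^*\nu$ is non-zero in $H^2(\langle s\rangle,\Lambda)\cong\Lambda^s/(1+s)\Lambda$. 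By the same reasoning as the $\mathtt{pg}$ case in Sec.\ A.1, this non-vanishing is precisely the condition for $s$ to lift in $P$ to a reflection followed by a half-lattice translation parallel to the mirror plane.

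The main obstacle is in step 3: the Zassenhaus matrix for $S_4$ is non-trivial to reduce by hand, and the abstract identification $H^2(S_4,\Lambda)\cong\mathbb{Z}/2$ by itself does not distinguish glide from screw. Only the restriction to a reflection-generated cyclic subgroup does — and one must simultaneously verify that $\nu$ vanishes on every rotation-generated cyclic subgroup of $S_4$ (which is forced by step 1, since those classes pull back to $H^2(A_4,\Lambda)$ and can only encode screw data). In practice the Smith form is handled algorithmically in GAP, but the conceptual content of the proof lies in disentangling the "glide-carrying" generator of $H^2(P',\Lambda)$ from the "screw-carrying" ones, and this is what the restriction maps to the reflection subgroups make precise.
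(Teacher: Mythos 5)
Your treatment of $S_4$ is essentially the paper's: build the Zassenhaus relational matrix $\rho_t\Sigma$ for a two-generator presentation of $S_4$ acting on the tetrahedral lattice, reduce to Smith normal form (the paper uses $\langle a_1,a_2\mid a_1^2,a_2^4,(a_1a_2)^3\rangle$ and finds $b_1=\dots=b_5=1$, $b_6=2$, hence $H^2(S_4,\Lambda_t)\cong\mathbb{Z}/2$), and identify the generator as a glide supported on the intersection $M_1\cap M_2\cap M_3$ of the mirror planes. Your extra step of restricting $\nu$ along $\langle s\rangle\hookrightarrow S_4$ to certify that the class is a glide rather than a screw is a worthwhile addition the paper leaves implicit. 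For $A_4$ you take a shortcut --- no reflections in the point group, hence no glides --- which suffices for the literal statement, though it is strictly weaker than the paper's computation $H^2(A_4,\Lambda_t)=0$ (all invariant factors equal to $1$), which rules out screws as well and, crucially, is then reused in the pyritohedral case.

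The genuine gap is in the pyritohedral case. The mirror elements of $A_4\times\mathbb{Z}/2$ are the diagonal elements $(r,-1)$ with $r$ a $2$-fold rotation, so the subgroup $\langle -r\rangle$ is \emph{not} contained in $A_4\times\{1\}$ and its action on $\Lambda_t$ is $-\rho_t(r)$, not $\rho_t(r)$; consequently the restriction map $H^2(A_4\times\mathbb{Z}/2,\Lambda_t)\rightarrow H^2(\langle -r\rangle,\Lambda_t)\cong\Lambda_t^{-r}/(1-\rho_t(r))\Lambda_t$ does not factor through $H^2(\langle r\rangle,\Lambda_t)$ as you assert, and the target is a priori non-zero since $\Lambda_t^{-\rho_t(r)}$ is the rank-two sublattice in the mirror plane. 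Moreover your K\"unneth decomposition leaves the summand $H^1(A_4,\Lambda_t/2\Lambda_t)$ entirely uncomputed, and a non-trivial class there could in principle restrict non-trivially to $\langle -r\rangle$ and produce a glide. The paper closes both holes at once: having computed $H^2(A_4,\Lambda_t)=0$ it argues that every $3$-cycle in the pyritohedral group has an inversion-partner fixing only the origin, so the K\"unneth/LHS contributions all vanish and $H^2(A_4\times\mathbb{Z}/2,\Lambda_t)=0$ outright (equivalently, the mirror planes of the pyritohedron intersect trivially). To repair your argument you would need either to run the Zassenhaus--Smith computation directly on $A_4\times\mathbb{Z}/2$, or to compute $H^1(A_4,\Lambda_t/2\Lambda_t)$ and verify that its classes restrict trivially to each diagonal mirror subgroup.
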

\begin{proof}
The strategy is as follows: we first construct the tetrahedral lattice $\Lambda_t$ and its associated representation $\rho_t$ by which $\rho_t(P')$ act as isometries. We then derive the relational matrices in each of the above groups and compute their invariant factors with help from a computer algebra program.

First, the chiral tetrahedral point group 
\begin{equation}
P'=A_4 = \langle a_1,a_2\mid a_1^3, a_2^3,(a_1a_2)^2\rangle\nonumber
\end{equation}
has the relational matrices
\begin{eqnarray}
\Sigma_{11} = a_1^2 + a_1 + 1,&\quad& \Sigma_{22}=a_2^2+a_2+1,\nonumber\\
\Sigma_{31} = a_1a_2+1,&\quad& \Sigma_{32} = (a_1a_2+1)a_1, \nonumber
\end{eqnarray}
with the other entries zero. Now suppose the vertices $V=\{t_1,t_2,t_3,t_4\}\subset \Lambda_t$ satisfying $t_4=-(t_1+t_2+t_3)$ span a regular tetrahedron, then the generators $a_1,a_2\in P'$ act as by 3-cycle permutations on $V$ through the tetrahedral representation $\rho_t$. We compute the matrix $\rho_t\Sigma \in M_{6\times 9}(\mathbb{Z})$ and its invariant factors $b_1 =b_2=\dots=b_6 =1$, as such 
\begin{equation}
H^2(A_4,\Lambda_t) = 0\nonumber
\end{equation}
and no non-symmorphic effects can manifest on $\Lambda_t$.

Next, the non-chiral tetrahedral group
\begin{equation}
S_4=\langle a_1,a_2\mid a_1^2,a_2^4,(a_1a_2)^3\rangle \label{eq:sym4}
\end{equation}
has relational matrices
\begin{eqnarray}
\Sigma_{11}=a_1+1,&\quad& \Sigma_{22} = a_2^3+a_2^2+a_2+1,\nonumber \\
\Sigma_{31}=(a_1a_2)^2+a_1a_2+1,&\quad& \Sigma_{32}=((a_1a_2)^2+a_1a_2+1)a_1. \nonumber
\end{eqnarray}
In the tetrahedral representation $\rho_t$ for which $a_1$ is a mirror reflection while $a_2$ is a 3-cycle, we compute the invariant factors $b_1=b_2=\dots=b_5=1,b_6=2$. Hence there is a glide element
\begin{equation}
H^2(S_4,\Lambda_t) \cong  \mathbb{Z}/2 \nonumber
\end{equation}
along the intersection $M_1\cap M_2\cap M_3$ of the mirror reflection planes $M_{1,2,3}$. 

Lastly, under the tetrahedral representation $\rho_t$, each 3-cycle permutation within the pyritohedral group $A_4\times\mathbb{Z}/2$ has an inversion-partner, each of which only fixes the origin of $\Lambda_t$. As such an application of K{\" u}nneth formula yields
\begin{equation}
H^2(A_4\times\mathbb{Z}/2,\Lambda_t) =0\nonumber
\end{equation}
by the computation of $H^2(A_4,\Lambda_t)=0$ above. Indeed, mirror planes of the pyritohedron intersect trivially.
\end{proof}
\noindent We can in fact say more: in symmorphic non-layer families $P\cong \Lambda\rtimes P'$, we actually have K{\" u}nneth formula
\begin{equation}
H^n(P,\mathbb{Z}) \cong \bigoplus_{p+q=n}H^p(P',H^q(\Lambda,\mathbb{Z})), \nonumber
\end{equation}
as the lattice $\Lambda$ must necessarily be a permutation module [\onlinecite{AP}]. This allows us to bypass any spectral sequence computations for the lattice anomaly $H^d(P,\mathbb{Z})$.

In general, non-symmorphic effects are "rare": the $\mathbb{Z}[P']$-module $\Lambda$ must be chirality-free, and mirror planes must intersect non-trivially. It is reasonable to expect the octahedral family to all be symmorphic on the octahedral lattice $\Lambda_o$, as the octahedral group $P'=S_4$ have no mirror planes while those in the chiral octahedral group $P'=S_4\times\mathbb{Z}/2$ intersect trivially. What is perhaps most interesting is the icosahedral family $P'=A_5\times\mathbb{Z}/2,A_5$; computing $H^2(P',\Lambda_i)$ would require one to write down an icosahedral lattice $\Lambda_i$ (up to $\mathbb{Z}[P']$-isomorphism), which has 12 vertices.

\section{Cohomology of Crystallographic Space Groups}\label{sec:lhs}
In this section, we lay out some basic properties of the group cohomology of crystallographic point groups in 2D and 3D. In particular, we examine some basic properties of their generators in $d=1,2,3$. 

\subsection{Symmorphic Lattices: Euler Class of the Point Group}
We have seen in the main text that if $P\cong \mathbb{Z}^d\rtimes P'$ is symmorphic, then we may classify (bosonic) crystalline SPT phases by computing classes that live on the {\it commutative} fundamental domain $\Gamma=B\mathbb{Z}^d$, then imposing $P'$-equivariance. Given the $O(d)$-Euler class on $\Gamma$, this can be achieved by pulling-back $e(P')=\rho^*e_d\in H^d_{P'}(\Gamma,\mathbb{Z}_o)$ via the linear embedding $\rho:P'\hookrightarrow O(d)$. Here we give a few examples of $e(P')$ in low dimensions.

\begin{enumerate}
\item $d=1$: Here, $\operatorname{Isom}\mathbb{Z} = \mathbb{Z} \times \mathbb{Z}/2$ is a product of translations with the inversion group $\mathbb{Z}/2$ (cf. Bieberbach's theorem in $d=1$). Given an irrep $V$ of $\mathbb{Z}/2$, the quantity $r=\operatorname{det}V\in \mathbb{Z}/2$ determines the orientation of $V$, and hence grades the coefficients of the cohomology $H^\ast(\mathbb{Z}/2,\mathbb{Z}_o)$ [\onlinecite{GT}], [\onlinecite{GT1}], [\onlinecite{SSG}]. 

\item $d=2$: Here $P'$ is either cyclic or dihedral of fixed degree $p$. Given an irrep $V$ of $P'$, the Euler class $e(P') \in H^2(P',\mathbb{Z}_o)$ determines its isomorphism class. Now as $P'$ has order $p$, the Euler class is $p$-torsion $p\cdot e(P') = 0$. If we denote by $\alpha\in C^1(P',\mathbb{Z}_o)$ its 1-cochain trivialization, we then have
\begin{equation}
e(P') = \frac{1}{p}d_o\alpha = \frac{1}{p}\left( d\alpha - 2r\cup \alpha\right) \in H^2(P',\mathbb{Z}_o). \nonumber
\end{equation}
If $P' = C_p$ is cyclic, then $r= 0$ and $d_o = d$ in this case.

\item $d=3$: the 2D point group $P'_2$ embeds into the full 3D point group $P'$. When $P$ is axial --- namely when $P$ contains a 1D {\it Frieze group} $P'_1$ along an axis --- the irrep bundle $V$ decomposes accordingly, and so does the Euler class [\onlinecite{ET}]
\begin{eqnarray}
e(P') &=& \sigma\cup e(P'_2) \in H^3(P',\mathbb{Z}_{o+\sigma_2}),\nonumber\\
\sigma &=& \begin{cases}\alpha &; \text{rotation} \\ r &; \text{reflection}\end{cases}\in H^1(P'_1,\mathbb{Z}).\label{eq:3dpg}
\end{eqnarray}
Here, the twist $o$ comes from a reflection element in $P'_2$, while $\sigma_2$ denotes the mod-2 reduction of $\sigma$.
\end{enumerate}

More generally, mod-2 reduction of the Euler class can be obtained via the {\it Steenrod square} $\operatorname{Sq}^j:H^\ast(\bullet,\mathbb{Z}/2) \rightarrow H^{\ast+j}(\bullet,\mathbb{Z}/2)$ through the {\bf Wu forumula} [\onlinecite{ET}], [\onlinecite{SP}], [\onlinecite{Hat}]
\begin{equation}
\operatorname{Sq}^j w_i = \sum_{k=0}^i\binom{(j-i)+(k-1)}{k}w_{i-k}\cup w_{j+k}, \nonumber
\end{equation}
on Stiefel-Witney classes $w_i$. In general, the Euler class generates the image $\operatorname{im}\mathcal{B}$ of the Bockstein map $\mathcal{B}:H^\ast(\bullet,\mathbb{Z}/2)\rightarrow H^{\ast+1}(\bullet,\mathbb{Z})$ [\onlinecite{Fesh}], hence one may write $e(P')=\mathcal{B}w_2(P')$ directly in $d=3$ [\onlinecite{ET}].

In the non-symmorphic case, however, all of this technology breaks down as the affine action of $P$ on $X$ does not split into a free action of $\mathbb{Z}^d$ and a linear action of $P'$. As such we cannot leverage the Euler class $e(P')$, but instead must compute non-symmorphic classes in $H^d(P,\mathbb{Z}_o)$ directly.

\subsection{Structure of the Spectral Sequence}
All cohomology groups $H(\bullet)$ in the following have integral coefficients $\mathbb{Z}$, unless otherwise specified. Given the central extension sequence that $P$ fits in, we may compute $H^d(P)$ with the {\bf Lyndon-Hochschild-Serre (LHS) spectral sequence}
\begin{equation}
 H^p(P',H^q(\Lambda))\cong E^{p,q}_2\Rightarrow H^{d}(P).\nonumber
\end{equation}
The differential $d_r: E^{p,q}_r\rightarrow E^{p+r,q-r+1}_r$ at each page $E_r$ computes entries of the next page $E_{r+1}^{p,q} = \operatorname{ker}d_r/\operatorname{im}d_r$. Those entries $E^{p,q}_r$ that are unaffected by $d_r$ (i.e. at $p<r,q<r-1$) survive to the stable limit $E_\infty^{p,q}$; in particular, we get
\begin{equation}
E^{0,0}_\infty=\mathbb{Z},\qquad E^{1,0}_\infty = H^1(P') \nonumber
\end{equation}
for free. We say that the spectral sequence $E_r$ degenerates at page $r$ if $d_{r'}=0$ for all $r'\geq r$, whence $E_r = E_\infty$ nets us the graded objects $E_r^{p,d-p}=\operatorname{gr}_pH^d(P)$ of a filtration of $H^d(P)$ [\onlinecite{Hat}].

Here we use the fact that $\Lambda$ is Bravais to write $H^\ast(\Lambda)\cong \mathbb{Z}[t_1,\dots,t_d]$ as a ring generated by the duals of the primitive lattice vectors $\{t_\alpha\}_{\alpha\leq d}$. The $q$-th row $E^{\ast,q}_2$ of the second page then consist of $\binom{d}{q}$-copies of the $0$-th row $E^{0,q}_2\cong H^p(P')$, with a generic class written as sums of
\begin{equation}
e_p\otimes_\rho t_{i_1}\cup\dots\cup t_{i_q} \in H^p(P',H^q(\Lambda)),\nonumber
\end{equation}
where
\begin{equation}
    e_p\in H^p(P'),\qquad 1\leq i_1< \dots< i_q\leq d.\nonumber
\end{equation}
As such the LHS spectral sequence must degenerate at $E_{d+2}$ at the latest, and we may assemble the graded objects $\operatorname{gr}_pH^d(P) = E^{p,d-p}_{d+2}$ from entries of that page.

\subsubsection{Cohomology of the 2D glide lattice}
Consider the glide lattice $\Lambda = \mathbb{Z}\oplus\mathbb{Z}'$ characterized by the non-trivial class $\nu\in H^2(C_2,\Lambda)= \mathbb{Z}/2$. The action $P'\rightarrow \operatorname{GL}_2(\mathbb{Z})\cong\operatorname{Aut}(\Lambda)$ is given by the integral representation $\rho:a_1 \mapsto \begin{pmatrix}1 & 0 \\ 0 &-1\end{pmatrix}$, over which the tensor product $\otimes_\rho$ is taken. Here, the spectral sequence degenerates at the fourth page, as such we are required to compute both $d_2$ and $d_3$.

Firstly, the cohomology ring of cyclic (and dihedral) groups have been computed [\onlinecite{Hill}], [\onlinecite{Dav}] 
\begin{equation}
H^\ast(C_n)\cong \mathbb{Z}[a_2]/\langle na_2\rangle\nonumber
\end{equation}
with a single degree-2 generator $a_2$. In particular we have a 2-periodicity $H^{p}(C_2)\cong H^{p+2}(C_2)$ given by $\cdot \cup a_2$ such that a generic class above degree $p=0$ can be written as
\begin{equation}
e_p = \begin{cases}0&; p=2j-1\\ a_2^j &; p=2j\end{cases}\in H^p(C_2),\qquad j\geq 1.\nonumber
\end{equation}
This also induces a 2-periodicity of the spectral sequence, and a generic class reads as
\begin{widetext}
\begin{equation} 
f_{p,q}^2=\begin{cases}0 &; p=2j-1\quad\text{or}\quad q>2 \\ \begin{cases}a_2^j &; q=0 \\  a_2^j\otimes t_1 - a_2^j\otimes t_2 &; q=1 \\ - a_2^j\otimes t_1\cup t_2 &; q=2 \end{cases} &; p=2j\end{cases}\in E^{p,q}_2,\nonumber
\end{equation}
\end{widetext}
where the explicit representation $\rho$ gives $a_2 \otimes_\rho t_1= a_2\otimes t_1$ but $a_2\otimes_\rho t_2 = -a_2\otimes t_2$. 

The differential $d_2$ is given by the Yoneda product with an extension class $\nu\in H^2(C_2,\Lambda)$. With the $\rho$-action explicit as above, it is the cup product $\cdot \cup \nu:E^{p,q}_2\rightarrow H^{p+2}(P',\Lambda\otimes H^q(\Lambda))$ composed with the slant product $\cdot\setminus\cdot:\Lambda\otimes H^q(\Lambda) \rightarrow H^{q-1}(\Lambda)$ induced on the coefficients, satisfying $t_i\setminus t_j = \delta_{ij}$. As the only non-trivial value of the glide class $\nu$ is the primitive vector $t_1$, a contraction with $\nu$ under the slant product kills $t_2$. 

This gives, in particular, $d_2(f_{2j,2}) = -a_2^{j+1}\otimes t_2$, hence $d_2$ there is injective (so $E^{p,2}_3=0$) with its image spanned by $t_2$. Correpondingly, $d_2$ is surjective at the $q=1$-st row with kernel spanned by $t_2$, so the third page completely collapses $E_3^{p,q}=0$ away from the $p=0$-th column. At $p=0$ and $q=2$, we see that $d_2(t_1\cup t_2) =-a_2\otimes t_2$, hence the kernel there is in fact $2\mathbb{Z}=E^{0,2}_3$. No other factors contribute to $H^2(\mathtt{pg})$, whence 
\begin{equation}
H^2(\mathtt{pg}) =\operatorname{gr}_2H^2(\mathtt{pg}) \cong 2\mathbb{Z}.\label{eq:glidecoh}
\end{equation}
It then follows that $H^2(\mathtt{pgg}) \cong 4\mathbb{Z}$.

\subsubsection{Cohomology of the 3D screw lattice}
Next let us turn to the 3D screw lattice $\Lambda=\mathbb{Z}^2\oplus\Lambda_A$, where $A$ is a divisor of the order $n=2,3,4,6$ of the rotational point group $P'=1\times C_n$, and the space group is classified by $H^2(C_n,\Lambda)\cong\mathbb{Z}/\frac{n}{A}$. For definiteness consider the case $n=3$ and $A=1$, with the representation 
\begin{equation}
\rho:(1,a_1)\mapsto \begin{pmatrix}1&0&0 \\ 0 & 0 & 1 \\ 0 &-1&-1\end{pmatrix}\in 1\oplus \operatorname{GL}_2(\mathbb{Z})\subset\operatorname{GL}_3(\mathbb{Z}) \nonumber
\end{equation} 
sending a generator $a_1\in C_3$ into a 2D rotation matrix by the axial property of $P'$. A generic class $e_p\in H^p(C_3)$ takes the same form as previously, but now the degree-2 generator $a_2$ has order-3 and we have four non-trivial rows in $E_2$
\begin{widetext}
\begin{equation}
f_{p,q}^2(B) = B\begin{cases}0&; p=2j-1\quad \text{or}\quad q>3 \\ 
\begin{cases} a_2^j &; q=0 \\ \sum\limits_{\alpha\leq 3}a_2^j\otimes_\rho t_\alpha &; q=1 \\ \sum\limits_{\alpha<\beta}a_2^j\otimes_\rho t_\alpha\cup t_\beta &; q=2 \\ a^j_2\otimes_\rho t_1\cup t_2\cup t_3 &; q=3\end{cases} &; p=2j\end{cases} \in E^{p,q}_2,\qquad B\in\mathbb{Z}/3.\nonumber
\end{equation}
\end{widetext}
Following the above, we first decompose $\otimes_\rho$ into $\otimes$. The representation $\rho$ induces
\begin{widetext}
\begin{equation}
f_{2j,1}^2(B) = \begin{cases}0 &; B= 0\mod 3 \\ Ba_2^j\otimes (t_1-t_2) &; B=1\mod 3 \\ Ba_2^j \otimes (t_1-t_3) &; B=2\mod 3\end{cases},\qquad f_{2j,2}^2(B) = \begin{cases}0 &; B=0\mod 3 \\ -Ba_2^j\otimes (t_1+t_3)\cup t_2 &; B=1\mod 3 \\ -Ba_2^j\otimes (t_1-t_2)\cup t_3 &; B=2\mod 3\end{cases}, \nonumber
\end{equation}
\end{widetext}
while the $q=0,3$ rows are invariant under the replacement $\otimes_\rho\rightarrow\otimes$. Now the extension class $\nu\in H^2(1\times C_3,\Lambda)$ it takes values along $t_1$, the Yoneda product against $\nu$ kills $t_2$ and $t_3$. For definiteness suppose $\nu=1$ is the generator, we then have $d_2(f_{2j,3}(B)) = Ba_2^{j+1}\otimes t_2\cup t_3$, so $d_2$ is injective at $q=3$ and $E_3^{p,3}=0$ away from the $p=0$-th column.

Next, we have $d_2(f_{2j,2}(B)) = -Ba_2^{j+1}\otimes t_{B+1}$, hence its kernel is spanned by the plaquette $t_{2}\cup t_{3}$. This coincides precisely with the image of $d_2$ coming from the row above, hence the $q=2$-nd row is also killed $E^{p,2}_3=0$ at the third page away from $p=0$. Lastly, we have $d_2(f_{2j,1}(B)) = Ba_2^{j+1}$, and its kernel is spanned by the vector $t_{B+1}$. This again coincides with the image from the row above, hence $q=1$-st row also dies $E^{p,1}_3=0$. Furthermore, here $d_2$ is surjective, thus the $q=0$-th row is gone as well, reduced to $E^{p,0}_3=0$.

All that remains, as in the previous case, is the $p=0$-th column. The map $d_2(m t_1\cup t_2\cup t_3)=m a_2\otimes t_3\cup (-t_2-t_3) = ma_2 t_2\cup t_3$ at the top $q=3$-rd row has kernel $m\in 3\mathbb{Z}$. No other factors contribute to $H^3(P)$, hence
\begin{equation}
H^3(P) = \operatorname{gr}_3H^3(P) = E^{0,3}_3 \cong 3\mathbb{Z}.\nonumber
\end{equation}
It follows that $H^3(P) \cong n\mathbb{Z}$ for screw axes $\Lambda=\mathbb{Z}^2\oplus\Lambda_A$ characterized by $A=1$ for any $n$-fold rotational symmetry.

If $A\neq 1$, however, then the entries $f_{p,q}^2$ exhibit order $A$. If $\nu=1$ remains the generator of $\mathbb{Z}/\frac{n}{A}$, then the above computations carry through and the only objects surviving to $E_3$ is once again concentrated on the $p=0$-th column. To compute the kernel $\operatorname{ker}d_2 = E^{0,3}_3=\operatorname{gr}_0H^3(P)$, we partition $n\mathbb{Z}$ into $n$-divisor classes characterized by $A$: namely $x\sim y\in n\mathbb{Z}$ iff they are the same multiplie of the divisor $\frac{n}{A}$ of $n$. Clearly, there are $A$ number of such divisor classes, whence 
\begin{equation}
H^3(P) = \frac{n}{A}\mathbb{Z}\oplus\mathbb{Z}/A.\label{eq:screwcoh}
\end{equation}
Now if $\nu$ generates a subset in $\mathbb{Z}/\frac{n}{A}$, then the differential $d_2$ can modify the torsion of (or even kill) the entries $f_{p,q}^2$. This can only occur if $\frac{n}{A}$ is not prime, however, which is impossible in the standard cases $n=2,3,4,6$ and $A\neq 1$.

\subsubsection{Cohomology of the 3D glide-screw lattice}
Let us now turn to the chirorotational case $P' = C_{nh}=C_2\times C_n$ with a glide-screw dislocated lattice $\Lambda$ and space group $P=\mathtt{gs}$. It is characterized by the extension class $\nu=\nu_1 \oplus \nu_2 \in H^2(P',\Lambda) = \mathbb{Z}/2\oplus\mathbb{Z}/\frac{n}{A'}$ for which $\frac{n}{A'}$ is even. By Eilenberg-Zilber we assemble
\begin{eqnarray}
H^\ast(C_{nh}) &\cong& H^\ast(C_2)\otimes H^\ast(C_n) \nonumber\\
&=& \mathbb{Z}[r_2]/\langle 2r_2\rangle\otimes \mathbb{Z}[a_2]/\langle na_2\rangle, \nonumber
\end{eqnarray}
from which $r_2^j\otimes a_2^j$ generates $H^{2j}(C_2\times C_n)$. Given the unique glide-screw configuration $n=6,A'=3$ for definiteness, consider the representation $\rho=\rho_1\oplus\rho_2$ defined by
\begin{equation}
(\alpha,a_1)\mapsto \begin{pmatrix}-1 &0 &0 \\ 0 & 0 & 1 \\ 0 & -1 &-1\end{pmatrix} \in \operatorname{GL}_1(\mathbb{Z})\oplus\operatorname{GL}_2(\mathbb{Z}). \nonumber
\end{equation}
We have four non-trivial rows in $E_2$:
\begin{widetext}
\begin{equation}
f_{p,q}^2(B) = \begin{cases} 0 &; p=2j-1\quad\text{or}\quad q>3 \\ \begin{cases}r_2^j\otimes Ba_2^j &; q=0 \\ \sum\limits_{\alpha\leq 3}(r_2^j\otimes Ba_2^j)\otimes_\rho t_\alpha &; q=1 \\ \sum\limits_{\alpha<\beta}(r_2^j\otimes Ba_2^j)\otimes_\rho t_\alpha\cup t_\beta &; q=2 \\ (r_2^j\otimes Ba_2^j)\otimes_\rho t_1\cup t_2\cup t_3 &; q=3\end{cases} &; p=2j\end{cases}\in E^{p,q}_2,\qquad B\in\mathbb{Z}/6. \nonumber
\end{equation}
\end{widetext}
for which we once again decompose $\otimes_\rho=\otimes_{\rho_1\oplus\rho_2}$ into $\otimes$. Toward this, we write $b=B\mod 2$ so that 
\begin{widetext}
\begin{eqnarray}
f_{2j,1}^2(B) &=& \begin{cases}\delta_{B,3}\sum\limits_{\alpha\leq 3}(r_2^j\otimes Ba_2^j)\otimes t_\alpha &; B\mod 3=0 \\ (r_2^j\otimes Ba_2^j)\otimes ((-1)^{b+1}t_1-t_2) &; B\mod 3=1 \\ (r_2^j\otimes Ba_2^j) \otimes ((-1)^{b+1}t_1-t_3) &; B \mod 3=2\end{cases},\nonumber \\ 
f_{2j,2}^2(B) &=& \begin{cases}\delta_{B,3}\sum\limits_{\alpha<\beta}(r_2^j\otimes Ba_2^j)\otimes t_\alpha\cup t_\beta &; B\mod 3=0 \\ -(r_2^j\otimes Ba_2^j)\otimes ((-1)^{b+1}t_1+t_3)\cup t_2 &; B\mod 3=1 \\ -(r_2^j\otimes Ba_2^j)\otimes ((-1)^{b+1}t_1-t_2)\cup t_3 &; B\mod 3 =2\end{cases}, \nonumber\\
f_{2j,3}^2(B) &=& (-1)^{b+1}(r_2^j\otimes Ba_2^j)\otimes t_1\cup t_2\cup t_3,\nonumber
\end{eqnarray}
\end{widetext}
with the $q=1$-st row remaining invariant.

The fixed points of $C_{nh}$ consists of the axis $\Lambda_2\subset\Lambda$, hence the extension class $\nu=\nu_1\oplus\nu_2$ takes values in the span of $t_2$. The differentials are once again injective (resp. surjective) at $q=3$ (resp. $q=1$) for $p>0$, so $E_3^{p,3}=E^{p,0}_3=0$. The images there are given by 
\begin{eqnarray}
d_2(f_{2j,3}^2(B)) &=& (-1)^{b+1}(r_2^{j+1}\otimes Ba_2^{j+1})\otimes t_1\cup t_3\nonumber\\
d_2(f_{2j,1}(B)) &=& (-1)^{b+1}\delta_{b,1}(r_2^{j+1}\otimes Ba_2^{j+1}).\nonumber
\end{eqnarray} 
The computation is less trivial in the middle columns. At $q=2$, none of the terms in $f_{p,q}^2(B)$ are killed by $d_2$ when $B\mod 3=1$, while its kernel in the other cases are spanned by $t_1\cup t_3$. As such the third page $E^{p,2}_3=0$ nevertheless vanishes. 

The image of $d_2$ at $q=2$ reads 
\begin{widetext}
$$d_2(f_{2j,2}(B)) = \begin{cases}\delta_{b,1}(r_2^{j+1}\otimes a_2^{j+1})\otimes (t_1+ t_3) &; B\mod 3 = 0 \\ -(r_2^{j+1}\otimes Ba_2^{j+1})\otimes ((-1)^{b+1}t_1+t_3) &; B\mod 3=1 \\   (r_2^{j+1}\otimes Ba_2^{j+1})\otimes t_3 &; B\mod 3=2 \end{cases}.$$ 
\end{widetext}
On the other hand, at $q=1$ notice that the kernel of $d_2$ is spanned by just $t_1$ when $B\mod 3 =1$, while it is spanned by both $t_1$ and $t_3$ otherwise. This means that 
\begin{eqnarray}
E^{2j,1}_3&=&\{0\}\cup \{f_{2j,1}^3(B)\mid B\mod 3=2\}\nonumber \\
&\cong&\mathbb{Z}/3\nonumber
\end{eqnarray} 
is in fact non-trivial. The cohomology $H^3(\mathtt{gs})$ then receives contribution from $\operatorname{gr}_2H^3(\mathtt{gs}) = E^{2,1}_3 =\mathbb{Z}/3$, as well as those form the $p=0$-th column $\operatorname{gr}_0H^3(\mathtt{gs}) = E^{3,0}$. Due to the precense of the order-2 generator $r_2$, the kernel of $d_2$ at $p=0,q=3$ is $2\mathbb{Z}$, hence we have
\begin{equation}
H^3(\mathtt{gs}) = \operatorname{gr}_0H^3(\mathtt{gs}) \oplus \operatorname{gr}_2H^3(\mathtt{gs}) \cong 2\mathbb{Z} \oplus\mathbb{Z}/3.\label{eq:gscoh}
\end{equation}

\subsubsection{Cohomology of the 3D tetrahedral glide lattice: computational sketch}
Let us now turn to the non-axial case. Consider the glide lattice corresponding to the tetrahedral point group $P'=S_4$, which acts on the tetrahedral lattice $\Lambda_t$ via the representation
\begin{equation}
\rho_t(a_1) = \begin{pmatrix}0&1&0\\1&0&0\\0&0&1\end{pmatrix},\qquad \rho_t(a_2) = \begin{pmatrix}1&0&0 \\ 0&0&1&\\ -1&-1&-1\end{pmatrix},\nonumber
\end{equation}
where $a_1,a_2\in S_4$ are the generators in the presentation Eq. (\ref{eq:sym4}). Several low-degree integral cohomology groups $H^n(S_4)$ have been previously computed, but so far the complete ring structure is not available. As such the spectral sequence computation must proceed entry-by-entry. We do not attempt this here, but only provide a sketch for the procedure.

It is possible to identify entries that should be relevant. As the LHS spectral sequence degenerates at page four at the latest, we see that
\begin{equation}
H^3(P) \cong \bigoplus_{p\leq 3}\operatorname{gr}_pH^{3-p}(P) = \bigoplus_{p\leq 3}E^{p,3-p}_4,\nonumber
\end{equation}
and one must compute the kernel of the differential $d_3$ at $(p,3-p)$ and its image from $(p+3,5-p)$. This in turn requires us to know the kernels of the differential $d_2$ at $(p,3-p)$ and $(p+3,5-p)$, as well as its image from $(p+2,4-p)$ and $(p+5,6-p)$. The classes we are required to assemble at the second page are therefore
\begin{equation}
f_{p,3-p}^2,\qquad f_{p+2,4-p}^2,\qquad f_{p+3,5-p}^2,\qquad f_{p+5,6-p}^2\nonumber
\end{equation}
for $1\leq p\leq 3$. This would require knowledge of the generators of $H^n(S_4)$ for $n$ from $1$ through $8$, at which point these classes can be explicitly written down using the give tetrahedral representation $\rho_t$.

Next, the cocycle representative $\nu$ for the glide class $\nu\in H^2(S_4,\Lambda_t)\cong\mathbb{Z}/2$ is valued on the axis $M_1\cap M_2\cap M_3 = \mathbb{Z}[t_g]\subset\Lambda_t$ along the intersection of the three reflection planes $M_1,M_2,M_3$. The differential $d_2$ is then a cup product with $\nu$ and a contraction against the the vector $t_g$, from which its kernel and image can be extracted in order to construct entries in the third page. 

The degree-3 differential $d_3$, however, is an extremely difficult object to explicitly write down. The situation would be significantly simplified if those entries away from the $p=0$-th column are all killed, but this is not guaranteed as the glide axis $t_g$ is a linear combination of the tetrahedral primitive vertex vectors $t_1,t_2,t_3$.

\bibliography{aipsamp}

\nocite{*}

\end{document}